\documentclass[11pt]{article}

\usepackage[margin=1in]{geometry}
\usepackage[T1]{fontenc}
\usepackage[utf8]{inputenc}
\usepackage{microtype}
\usepackage{parskip}           

\usepackage{amsmath,amssymb,amsfonts,amsthm}
\usepackage{bm}
\usepackage{mathtools}

\usepackage{float}             
\usepackage{graphicx}
\usepackage{booktabs}
\usepackage{multirow}
\usepackage{threeparttable}
\usepackage{subcaption}
\usepackage{caption}
\usepackage{xcolor}
\usepackage{tikz}
\usepackage{pgfplots}
\usepgfplotslibrary{groupplots}
\pgfplotsset{compat=1.18}
\usepackage{pgfplotstable}
\usepackage{standalone}

\usepackage{algorithm}
\usepackage{algpseudocode}

\usepackage{listings}
\usepackage{courier}

\usepackage{pifont}
\usepackage{enumitem}
\usepackage{hyperref}
\hypersetup{
    colorlinks=true,
    linkcolor=blue!60!black,
    citecolor=blue!60!black,
    urlcolor=blue!60!black,
}

\usepackage[authoryear,round]{natbib}
\theoremstyle{plain}

\newtheorem{proposition}{Proposition}

\theoremstyle{definition}

\theoremstyle{remark}

\usetikzlibrary{shapes.geometric, arrows.meta, positioning, fit, backgrounds,
                calc, decorations.pathreplacing, patterns, shadows, chains,
                matrix, shapes.symbols, shapes.arrows, fadings}

\definecolor{susceptible}{RGB}{31,119,180}
\definecolor{exposed}{RGB}{255,127,14}
\definecolor{infectious}{RGB}{44,160,44}
\definecolor{recovered}{RGB}{214,39,40}
\definecolor{kernelblue}{RGB}{70,130,180}
\definecolor{kernelgreen}{RGB}{60,179,113}
\definecolor{kernelpurple}{RGB}{138,43,226}
\definecolor{markovblue}{RGB}{31,119,180}
\definecolor{renewgreen}{RGB}{44,160,44}
\definecolor{stateorange}{RGB}{230,136,28}
\definecolor{decisionyellow}{RGB}{240,210,60}
\definecolor{graphnode}{RGB}{31,119,180}
\definecolor{csrcell}{RGB}{230,136,28}
\definecolor{highlightcell}{RGB}{214,39,40}
\definecolor{accentblue}{RGB}{70,130,180}
\definecolor{threadblue}{RGB}{31,119,180}
\definecolor{regblue}{RGB}{70,130,180}
\definecolor{globalorange}{RGB}{230,136,28}
\definecolor{l2purple}{RGB}{138,43,226}
\definecolor{accentred}{RGB}{214,39,40}
\definecolor{hbmorange}{RGB}{230,136,28}
\definecolor{smgreen}{RGB}{44,160,44}
\definecolor{choice}{RGB}{240,210,60}
\definecolor{outcome}{RGB}{70,130,180}
\definecolor{markov}{RGB}{31,119,180}
\definecolor{renew}{RGB}{44,160,44}
\definecolor{codegreen}{rgb}{0,0.6,0}
\definecolor{codegray}{rgb}{0.5,0.5,0.5}
\definecolor{codepurple}{rgb}{0.58,0,0.82}
\definecolor{backcolour}{rgb}{0.95,0.95,0.92}

\lstdefinestyle{pythonstyle}{
    backgroundcolor=\color{backcolour},
    commentstyle=\color{codegreen},
    keywordstyle=\color{blue},
    numberstyle=\tiny\color{codegray},
    stringstyle=\color{codepurple},
    basicstyle=\ttfamily\scriptsize,
    breakatwhitespace=false,
    breaklines=true,
    captionpos=b,
    keepspaces=true,
    numbers=left,
    numbersep=5pt,
    showspaces=false,
    showstringspaces=false,
    showtabs=false,
    tabsize=2,
    frame=single,
    language=Python
}
\newcommand{\flashspread}{\textsc{FlashSpread}}
\newcommand{\bigO}{\mathcal{O}}

\title{%
  FlashSpread: IO-Aware GPU Simulation of Non-Markovian Epidemic
  Dynamics via Kernel Fusion
}

\author{%
  Heman Shakeri\thanks{Corresponding author:
  \href{mailto:hs9hd@virginia.edu}{hs9hd@virginia.edu}; ORCID
  0000-0002-9891-5748.}\,$^{,1}$
  \and Behnaz Moradi-Jamei$^{2}$
  \and Aram Vajdi$^{1}$
  \and Ehsan Ardjmand$^{3}$ \\[6pt]
  \footnotesize $^{1}$School of Data Science, University of Virginia,
  Charlottesville, VA 22904, USA \\
  \footnotesize $^{2}$Department of Mathematics \& Statistics,
  James Madison University, Harrisonburg, VA 22807, USA \\
  \footnotesize $^{3}$Ohio University, Athens, OH 45701, USA
}

\date{\today}

\begin{document}
\maketitle

\begin{abstract}
\noindent
Non-Markovian (renewal) epidemic simulation on multi-million-node
contact networks is essential for realistic forecasting under general
age-dependent holding-time distributions (log-normal, Weibull, Erlang,
and similar), but the age-dependent hazard forces dense per-step
updates that render the sparse event-queue strategies of standard CPU
methods ineffective. We present \flashspread{}, a GPU framework that
consolidates the per-step renewal pipeline (CSR traversal, numerically
stable $\mathrm{erfcx}$-based hazard evaluation, Bernoulli tau-leaping,
state transition, and next-step infectivity write-back) into a single
fused Triton kernel whose intermediates never leave streaming-
multiprocessor registers, with block-scalar skips that preserve CUDA
Graph capture and a degree-aware CSR dispatch (thread / warp /
edge-merge) that keeps the peak throughput on scale-free graphs. On an
NVIDIA A100 the fused CUDA-Graph engine reaches 8.09~Giga-NUPS at
$N = 10^6$ on a uniform-degree graph, a $217\times$ strict hardware
speedup over optimised CPU tau-leaping at the same $N$; on a
Barab\'{a}si--Albert graph of the same size the merge-based dispatch
recovers $4.5\times$ ($0.45 \to 2.0$ Giga-NUPS) over the default
kernel, and the framework scales to $N = 10^8$ on a single A100
(40~GB), with a mixed-precision storage path that extends the
L2-reachable scale by roughly $3\times$ and delivers a $1.15\times$
throughput lift at the far bandwidth-bound end. Validation against an
exact non-Markovian Gillespie reference shows a structural-bias floor
of $\sim$6\% on peak infection and $\sim$7\% on final attack rate that
does not detectably decrease as $\varepsilon \to 0$ across two decades
of tolerance, comfortably within typical epidemiological parameter
uncertainty. Code: \url{https://github.com/Shakeri-Lab/FlashSpread}.
\end{abstract}

\noindent\textbf{Keywords:} Stochastic simulation \textperiodcentered\
Non-Markovian renewal processes \textperiodcentered\ GPU computing
\textperiodcentered\ Epidemic modeling \textperiodcentered\ Tau-leaping
\textperiodcentered\ Complex networks \textperiodcentered\ Kernel
fusion \textperiodcentered\ CUDA Graphs

\bigskip\hrule\bigskip

\section{Introduction}\label{sec:introduction}

Stochastic spreading processes on networks (epidemics, information cascades, product adoption, malware propagation) pose fundamental modeling and control challenges \citep{pastor2015epidemic, kiss2017mathematics, barrat2008dynamical}. The underlying system is a high-dimensional continuous-time Markov chain whose state space grows exponentially with network size, rendering exact analytical solutions intractable for realistic systems \citep{anderson1991infectious, keeling2008modeling}.

Two distinct computational regimes arise depending on the temporal physics. In \emph{Markovian processes} (e.g., traditional SIS, SIR), transition rates depend only on current state and external control inputs; rates remain constant between events, enabling sparse incremental updates. In \emph{stochastic renewal processes} (e.g., SEIR with age-dependent incubation), transition rates depend on continuous holding times (ages) and change continuously even in the absence of events, requiring dense synchronous updates.

Markovian dynamics are the special case of renewal dynamics in which every holding-time distribution is exponential; throughout this paper we use \emph{Markovian} for that special case and \emph{non-Markovian (renewal)} for the general age-dependent case that motivates the framework. Computationally the two regimes require \emph{fundamentally opposite optimisation strategies} --- sparse event-driven updates for memoryless dynamics versus dense time-stepping for age-dependent hazards --- which motivates our dual-engine architecture.

The reliance on Markovian processes in epidemiological modeling, while mathematically convenient, imposes a ``memoryless'' assumption that contradicts the biological reality of pathogen transmission \citep{vajdi2023non}. Standard Markovian models assume dwell times follow exponential distributions, whose monotonically decreasing density implies a constant hazard rate: an infected individual is equally likely to recover at any moment, regardless of how long they have been infected. In contrast, empirical data from outbreaks such as COVID-19 demonstrate that critical epidemiological parameters (incubation period, serial interval, infectious duration) follow peaked, unimodal distributions (e.g., Weibull, log-normal, Erlang) \citep{sanche2020high, backer2020incubation, lauer2020incubation}, for which transition probability is negligible immediately after state entry and peaks after a characteristic delay. Neglecting this temporal structure leads to significant biases in predicting epidemic peaks and basic reproduction numbers, since distributions with identical means but differing variances yield substantially different trajectories \citep{vajdi2023non, sherborne2018mean, feng2019equivalence}. \citet{van2013non} showed that non-Markovian infection dynamics can dramatically alter epidemic thresholds compared to Markovian approximations, and for practical decision-making regarding contact tracing and quarantine the precise timing of infectivity relative to symptom onset is paramount \citep{vajdi2023non, he2020temporal}.

Despite these limitations in biological realism, Markovian models remain the most widely studied and analytically tractable class, with a mature literature on mean-field approximations \citep{keeling2005networks, kiss2017mathematics} and on control-theoretic formulations \citep{nowzari2016analysis, watkins2019robust, preciado2014optimal} that exploit memoryless dynamics. Nevertheless, a practical simulation framework must cover \emph{both} regimes efficiently, not just one.

We present \flashspread{}, a unified GPU framework that addresses this computational duality. Its contribution is primarily computational: a fused Triton renewal kernel that consolidates the entire per-step pipeline into a single launch while respecting CUDA Graph capture, and a degree-aware CSR dispatch that adapts the thread mapping to the graph's degree heterogeneity. To the best of our knowledge it is also the first open-source, end-to-end GPU framework for non-Markovian (renewal) spreading on networks with kernel-fused dense stepping. Recent exact event-driven CPU methods such as NEXT-Net \citep{cure2025fast} already reach $\sim$$10^6$ nodes per simulation; \flashspread{} is complementary, extending synchronous tau-leaping to $10^8$ on a single A100 with the per-step compute pipeline fused into a single GPU kernel. Our specific contributions are:

\begin{enumerate}
    \item \textbf{GPU-accelerated renewal epidemics at scale.} A first open-source, end-to-end GPU framework for non-Markovian network epidemic simulation with kernel-fused dense stepping, scaling to $N = 10^8$ nodes on a single A100 and achieving 8.09~Giga-NUPS on a uniform-degree graph at $N = 10^6$, a $217\times$ strict hardware speedup over the CPU tau-leaping baseline measured on the same network.

    \item \textbf{Fused renewal engine plus a companion Markovian engine.} The scientific core of this paper is the dense $\bigO(N/P)$ fused renewal engine; the Markovian engine is a secondary, companion contribution that the framework additionally provides. We identify that the two regimes demand opposite GPU strategies --- sparse $\bigO(K \cdot D_{\mathrm{avg}}/P)$ incremental updates for memoryless dynamics versus the renewal engine's dense synchronous stepping --- and validate both against exact references within their relevant regime (generalised non-Markovian Gillespie for SEIR in Section~\ref{sec:validation}; Doob--Gillespie for SIS and SIR in Section~\ref{sec:markov_validation} and Appendix~\ref{app:sis_sir}).

    \item \textbf{IO-aware kernel fusion for dense renewal dynamics.} Building on the IO-aware design principle of \citet{dao2022flashattention}, we consolidate CSR traversal, numerically stable $\mathrm{erfcx}$-based hazard evaluation, adaptive Bernoulli tau-leaping, and the next-step infectivity write-back into a single fused Triton kernel. All intermediates remain in streaming-multiprocessor registers; block-scalar sparsity skips the hazard for inert thread blocks without breaking CUDA Graph capture.

    \item \textbf{Degree-aware CSR dispatch.} To cope with highly heterogeneous degree distributions, we add a warp-per-node kernel and an edge-partitioned merge-based kernel \citep{merrill2016merge} and auto-select among the three from the graph's $D_{\max}/D_{\mathrm{avg}}$ ratio. The merge strategy delivers $4.5\times$ throughput on a $10^6$-node Barab\'{a}si--Albert graph while the auto-dispatch rule preserves peak throughput on regular graphs (Section~\ref{sec:csr_dispatch}, Appendix~\ref{app:degree_dispatch}).

    \item \textbf{Tau-leaping fidelity against an exact reference.} We quantify the structural-bias floor of synchronous Bernoulli updates on contact networks against an exact non-Markovian Gillespie reference: the per-run peak-$I$ error is $\sim$6\% and does not detectably decrease below that floor as $\varepsilon \to 0$ across two decades of tolerance on the validation benchmarks (Appendix~\ref{app:fidelity}).

    \item \textbf{Active-node compaction for temporal sparsity.} We exploit the inter-step temporal sparsity that synchronous tau-leaping typically overlooks by shrinking the fused-kernel grid from $\lceil N/B \rceil$ blocks to $\lceil |\mathbf{X} {\ne} R| / B \rceil$ via a Fixed-Grid Early-Exit pattern compatible with CUDA Graph capture. On saturating scale-free epidemics this delivers a $\mathbf{1.53\times}$ whole-run speedup on BA $m{=}4$ at $N{=}10^6$ with bit-identical compartment counts (Section~\ref{sec:compaction}).

    \item \textbf{Mixed-precision storage with an fp32 accumulator contract.} We downcast state (int8), age (fp16), infectivity (bf16), and weights (bf16) while holding the pressure accumulator and all kernel math in fp32. The storage-only compression delivers a $1.14\times$ throughput lift at $N = 10^6$ and extends the L2-reachable scale of the fused kernel by roughly $3\times$ (reaching $N = 10^8$ on a single A100), within a $\sim\!0.1\%$ deviation in final attack rate, well below the structural-bias floor (Section~\ref{sec:mixed_precision}).
\end{enumerate}

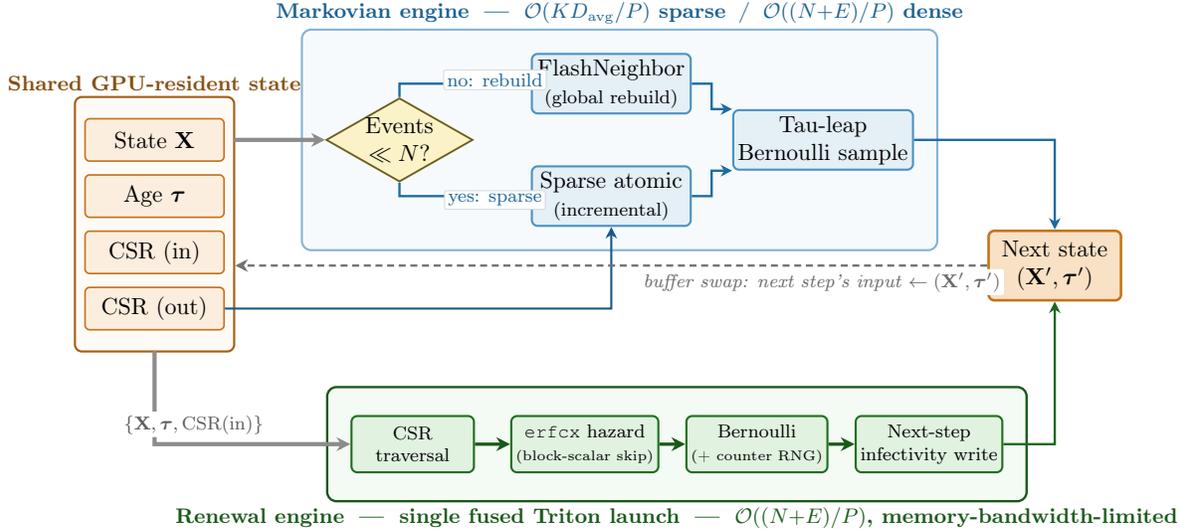
\begin{figure}[H]
\centering
\begin{tikzpicture}[
    state/.style={rectangle, draw=stateorange!85!black, line width=0.7pt,
                  rounded corners=2pt, fill=stateorange!14,
                  minimum width=2.1cm, minimum height=0.64cm,
                  align=center, font=\small, inner sep=2pt},
    mkernel/.style={rectangle, draw=markovblue!80, line width=0.9pt,
                    rounded corners=2.5pt, fill=markovblue!12,
                    minimum width=2.4cm, minimum height=0.9cm,
                    align=center, font=\small, inner sep=2pt},
    rkernel/.style={rectangle, draw=renewgreen!70!black, line width=0.9pt,
                    rounded corners=2.5pt, fill=renewgreen!14,
                    minimum width=1.85cm, minimum height=0.85cm,
                    align=center, font=\scriptsize, inner sep=2pt},
    decision/.style={diamond, draw=decisionyellow!45!black, line width=0.8pt,
                     aspect=1.75, inner sep=0pt, align=center,
                     font=\small, fill=decisionyellow!28},
    outstate/.style={rectangle, draw=stateorange!85!black, line width=1.1pt,
                     rounded corners=2.5pt, fill=stateorange!25,
                     minimum width=2cm, minimum height=1.05cm,
                     align=center, font=\small, inner sep=2pt},
    mflow/.style={-{Stealth[length=5pt,width=5pt]}, line width=1pt,
                  color=markovblue!90!black},
    rflow/.style={-{Stealth[length=5pt,width=5pt]}, line width=1pt,
                  color=renewgreen!65!black},
    pipe/.style={-{Stealth[length=6pt,width=6pt]}, line width=1.4pt,
                 color=renewgreen!55!black},
    bundle/.style={-{Stealth[length=6pt,width=6pt]}, line width=1.8pt,
                   color=black!45, line cap=round},
    feedback/.style={-{Stealth[length=5.5pt,width=5.5pt]}, line width=1pt,
                     color=black!55, densely dashed},
    elabel/.style={font=\footnotesize\bfseries, inner sep=2pt},
    branchlabel/.style={font=\scriptsize, inner sep=1.5pt,
                        fill=white, rounded corners=1pt,
                        draw=black!15, line width=0.3pt},
    buslabel/.style={font=\scriptsize\itshape, fill=white, inner sep=1.5pt,
                     rounded corners=1pt, text=black!65}
]

\node[state] (Xs)     at (0, 2.25) {State $\mathbf{X}$};
\node[state] (tau)    at (0, 1.40) {Age $\bm{\tau}$};
\node[state] (csrin)  at (0, 0.55) {CSR (in)};
\node[state] (csrout) at (0,-0.30) {CSR (out)};

\begin{scope}[on background layer]
    \node[fit=(Xs)(tau)(csrin)(csrout),
          draw=stateorange!70!black, line width=0.9pt,
          fill=stateorange!4, rounded corners=3pt,
          inner xsep=4pt, inner ysep=9pt,
          label={[elabel, text=stateorange!58!black]
                 above:Shared GPU-resident state}] (gpustate) {};
\end{scope}

\node[decision] (mode)   at (3.7,  2.25) {Events\\$\ll N$?};
\node[mkernel]  (flashN) at (6.9,  3.10) {FlashNeighbor\\\scriptsize{(global rebuild)}};
\node[mkernel]  (sparse) at (6.9,  1.40) {Sparse atomic\\\scriptsize{(incremental)}};
\node[mkernel]  (taulp)  at (10.1, 2.25) {Tau-leap\\Bernoulli sample};

\draw[bundle] (gpustate.east|-mode) -- (mode.west);

\draw[mflow] (mode.north) |- (flashN.west);
\node[branchlabel, text=markovblue!85!black]
      at ($(mode.north)!0.5!(flashN.west) + (0.45, 0.16)$) {no: rebuild};

\draw[mflow] (mode.south) |- (sparse.west);
\node[branchlabel, text=markovblue!85!black]
      at ($(mode.south)!0.5!(sparse.west) + (0.45,-0.16)$) {yes: sparse};

\draw[mflow] (flashN.east) -- ++(0.3,0) |- (taulp.north west);
\draw[mflow] (sparse.east) -- ++(0.3,0) |- (taulp.south west);

\begin{scope}[on background layer]
    \node[fit=(mode)(flashN)(sparse)(taulp),
          draw=markovblue!55, line width=0.9pt,
          fill=markovblue!4, rounded corners=4pt,
          inner xsep=10pt, inner ysep=10pt,
          label={[elabel, text=markovblue!80!black]
                 above:Markovian engine
                       \;\textemdash\; $\mathcal{O}(K D_{\mathrm{avg}}/P)$
                       sparse \,/\, $\mathcal{O}((N{+}E)/P)$ dense}]
          (markovbox) {};
\end{scope}

\node[rkernel] (r1) at (3.9,  -2.35) {CSR\\traversal};
\node[rkernel] (r2) at (6.5,  -2.35) {\texttt{erfcx} hazard\\\tiny(block-scalar skip)};
\node[rkernel] (r3) at (9.1,  -2.35) {Bernoulli\\\tiny(+ counter RNG)};
\node[rkernel] (r4) at (11.7, -2.35) {Next-step\\infectivity write};

\draw[pipe] (r1.east) -- (r2.west);
\draw[pipe] (r2.east) -- (r3.west);
\draw[pipe] (r3.east) -- (r4.west);

\draw[bundle]
     (gpustate.south) |- (r1.west) coordinate[pos=0.6] (bundleMid);
\node[buslabel] at ($(bundleMid)+(0,0.30)$)
     {$\{\mathbf{X},\bm{\tau},\mathrm{CSR}(\mathrm{in})\}$};

\draw[mflow, line width=0.9pt, color=markovblue!75!black]
     (csrout.east) -| (sparse.south);

\begin{scope}[on background layer]
    \node[fit=(r1)(r2)(r3)(r4),
          draw=renewgreen!55!black, line width=1pt,
          fill=renewgreen!5, rounded corners=4pt,
          inner xsep=10pt, inner ysep=12pt,
          label={[elabel, text=renewgreen!55!black]
                 below:Renewal engine \;\textemdash\;
                        single fused Triton launch \;\textemdash\;
                        $\mathcal{O}((N{+}E)/P)$, memory-bandwidth-limited}]
          (renewbox) {};
\end{scope}

\node[outstate] (out) at (13.6, 0.35)
      {Next state\\$(\mathbf{X}',\bm{\tau}')$};

\draw[mflow] (taulp.east) -| (out.north);
\draw[rflow] (r4.east) -| (out.south);

\coordinate (fbEnd) at (gpustate.east |- out.west);

\draw[feedback] (out.west) -- (fbEnd);

\node[font=\scriptsize\itshape, color=black!60, inner sep=1.5pt,
      fill=white, rounded corners=1pt]
     at ($(out.west)!0.22!(fbEnd)+(0,-0.25)$)
     {buffer swap: next step's input $\leftarrow (\mathbf{X}',\bm{\tau}')$};

\end{tikzpicture}
\caption{System overview. Every kernel reads from a shared GPU-resident state block (left). The Markovian engine (top) branches at runtime between a global rebuild (Control Mode, FlashNeighbor) and a sparse atomic update (Inertial Mode) depending on whether the number of per-step events is $\ll N$; both modes converge on a common tau-leap Bernoulli sampler. The Renewal engine (bottom) is a single fused Triton kernel that streams CSR traversal, $\mathrm{erfcx}$ hazard evaluation, Bernoulli sampling + RNG, and the next-step infectivity write-back in one launch (Section~\ref{sec:fused_kernel}). The dashed arc is the per-step buffer swap; next-step state becomes the next call's input.}
\label{fig:architecture}
\end{figure}

\section{Related work}\label{sec:related}

The Gillespie algorithm \citep{gillespie1977exact, gillespie1976general} provides exact stochastic simulation of continuous-time Markov processes. The Generalized Epidemic Modeling Framework (GEMF) of \citet{sahneh2013generalized} provides the mean-field formulation that underpins a family of compartmental simulators on multilayer networks; two high-performance simulator implementations of that family serve as our CPU baselines, c-GEMF \citep{sahneh2017gemfsim} (C, Next Reaction Method) and FastGEMF \citep{samaei2025fastgemf} (Python, vectorized updates). For non-Markovian systems, \citet{boguna2014simulating} introduced a generalized Gillespie algorithm that tracks the age $\tau_i$ of each process. The EoN package \citep{miller2020eon} provides efficient non-Markovian simulation without GPU acceleration. More recently, NEXT-Net \citep{cure2025fast} reformulated the next-reaction method so that one serial CPU simulation of an exact non-Markovian process scales to networks on the order of $10^6$ nodes, closing most of the scale gap between exact event-driven CPU simulation and approximate synchronous methods. The \flashspread{} contribution is therefore not a scale claim against all CPU methods: it is a scale claim under \emph{synchronous tau-leaping} (the regime that exposes dense $\bigO(N{+}E)$ parallelism) together with a first open-source, end-to-end GPU framework for renewal-process epidemics on networks with kernel-fused dense stepping, reaching $N = 10^8$ on a single A100. Recent work has characterized when non-Markovian and Markovian dynamics yield equivalent behavior \citep{feng2019equivalence, starnini2017equivalence}, finding that equivalence generally fails for transient dynamics relevant to epidemic forecasting.

Gillespie's tau-leaping \citep{gillespie2001approximate} accelerates stochastic simulation by approximating event counts as Poisson random variables. \citet{cao2006efficient} developed adaptive step selection, and \citet{chatterjee2005binomial} introduced binomial tau-leaping. Our \emph{Bernoulli tau-leaping} differs fundamentally: each node undergoes at most one transition per step, naturally modeled by Bernoulli rather than Poisson statistics. This is not an approximation against the binomial formulation: on the compartmental SEIR state space each node can change compartment at most once in any infinitesimal sub-interval, and the tau-leap is selected to bound the per-step transition probability by $\varepsilon \ll 1$; the binomial distribution therefore reduces to Bernoulli with higher-order correction $\bigO(\varepsilon^2)$ per node, which remains below the structural-bias floor of synchronous updates (Appendix~\ref{app:fidelity}).

GPU acceleration of epidemic simulation has primarily focused on ensemble parallelization \citep{perumalla2006discrete, zou2013epidemic} and agent-based models \citep{bisset2009parallel}. \citet{komarov2012accelerating} developed GPU-accelerated exact Gillespie for chemical networks, and \citet{nobile2014cutauleaping} implemented GPU tau-leaping for biochemical systems. To our knowledge, no prior open-source work has published a kernel-fused, end-to-end GPU pipeline for non-Markovian (renewal) epidemic simulation on networks; the challenge is fundamental, as the sparse update strategies that make Markovian GPU simulation efficient fail when all node hazards change continuously with time. \citet{dao2022flashattention} demonstrated that IO-aware kernel fusion (keeping intermediates in on-chip SRAM rather than writing to off-chip DRAM) can dramatically accelerate memory-bound workloads; we apply this principle to graph aggregation using a custom Triton kernel.\footnote{While frameworks such as Julia/CUDA.jl or \texttt{DifferentialEquations.jl} offer automatic kernel fusion via broadcast graphs, they cannot accommodate the data-dependent control flow of our block-scalar skip (Section~\ref{sec:fused_kernel}). Combined with the static operator requirements of CUDA Graph capture, achieving the optimised fusion path quantified in Figure~\ref{fig:memory_flow} required a custom Triton kernel.}

Mature GPU graph frameworks such as cuGraph \citep{nvidia2019cugraph} and Gunrock \citep{wang2016gunrock} do provide efficient CSR-structured SpMV and traversal primitives, and an obvious question is why not assemble the renewal step from them. We do not, for two reasons. First, neither framework exposes a single-kernel fusion point at which stochastic sampling (random draws per node) and age-dependent $\mathrm{erfcx}$-based hazard evaluation can co-reside with CSR traversal in streaming-multiprocessor registers; wrapping them would reintroduce the $\bigO(N)$ intermediate-tensor traffic that our fused kernel removes (Appendix~\ref{app:memory_flow}, Figure~\ref{fig:memory_flow}). Second, our degree-aware auto-dispatch (Section~\ref{sec:csr_dispatch}) is tied to the \emph{per-simulation-step} compute profile --- state, age, infectivity, and RNG all change every step --- rather than to a once-off graph query, which is the regime these libraries optimise for. We additionally rely on the CUDA Graphs API \citep{nvidia2024cudagraphs} to capture the $b$ repetitions of the fused step as a single replayable unit, eliminating per-step launch overhead (Section~\ref{sec:fused_kernel}).

Table~\ref{tab:software_comparison} compares \flashspread{} with existing packages. \flashspread{} demonstrates $N = 10^8$ on a single A100 (40~GB); at $N = 10^7$ the fp32 CSR graph structure ($\sim$640~MB for degree $d = 8$) already exceeds the GPU's 40~MB L2 cache, making every indirect neighbour lookup a full DRAM transaction, and the mixed-precision storage path of Section~\ref{sec:mixed_precision} (bf16 weights, int8 state, fp16 age, bf16 infectivity) compresses the per-step working set enough to reach $N = 10^8$ within the 40~GB HBM budget. Scaling beyond $N \sim 10^8$ exceeds single-GPU global memory and requires multi-GPU domain decomposition.

\begin{table}[H]
\centering
\caption{Comparison with existing network epidemic simulation software.}
\label{tab:software_comparison}
\begin{threeparttable}
\scriptsize
\setlength{\tabcolsep}{2pt}
\begin{tabular*}{\linewidth}{@{\extracolsep{\fill}}lccccccl@{}}
\toprule
\textbf{Package} & \textbf{GPU} & \textbf{Non-Markovian} & \textbf{Edge-w/} & \textbf{Kernel Fusion} & \textbf{Max scale} & \textbf{Language} & \textbf{Ref.} \\
 & \textbf{support} & \textbf{dynamics} & \textbf{ML} & \textbf{(IO-aware)} & \textbf{(nodes $N$)} & & \\
\midrule
GEMFsim     & \ding{55} & \ding{55} & \ding{51} / \ding{51} & \ding{55} & $10^5$ & C & \citet{sahneh2017gemfsim} \\
FastGEMF    & \ding{55} & \ding{55} & \ding{51} / \ding{51} & \ding{55} & $10^5$ & Python & \citet{samaei2025fastgemf} \\
EpiModel    & \ding{55} & \ding{55} & \ding{51} / \ding{55} & \ding{55} & $10^4$ & R & \citet{jenness2018epimodel} \\
NDlib       & \ding{55} & \ding{55} & \ding{55} / \ding{55} & \ding{55} & $10^5$ & Python & \citet{rossetti2018ndlib} \\
epydemic    & \ding{55} & \ding{55} & \ding{55} / \ding{55} & \ding{55} & $10^5$ & Python & \citet{dobson2017epydemic} \\
EoN         & \ding{55} & \ding{51} & \ding{51} / \ding{55} & \ding{55} & $10^5$ & Python & \citet{miller2020eon} \\
NEXT-Net    & \ding{55} & \ding{51} & \ding{51} / \ding{55} & \ding{55} & $10^6$ & C++/Python & \citet{cure2025fast} \\
\midrule
\textbf{\flashspread{}} & \ding{51} & \ding{51} & \ding{51} / \ding{51} & \ding{51} & $\mathbf{10^8}$ & Python & this work \\
\bottomrule
\end{tabular*}
\begin{tablenotes}
\footnotesize
\item \textbf{Non-Markovian dynamics}: supports age-dependent transition rates (renewal / log-normal / Weibull holding times), not only exponential (Markovian) waiting times. \textbf{Edge-weighted / multilayer}: first check indicates support for weighted edges; second check indicates support for multiple interaction layers. \textbf{Kernel Fusion}: consolidates CSR graph traversal, hazard evaluation, and stochastic sampling into a single GPU kernel launch with intermediates kept in on-chip registers (IO-aware design). \textbf{Max scale}: largest network size demonstrated in the corresponding reference; NEXT-Net \citep{cure2025fast} uses an exact next-reaction reformulation on CPU and reports runs on networks with $\sim$$10^6$ nodes, the current state of the art for exact event-driven CPU simulation. The \flashspread{} $\mathbf{10^8}$ entry is the approximate synchronous tau-leaping ceiling under mixed-precision storage on a single A100 (40~GB) and is therefore not directly comparable to NEXT-Net's exact-simulation claim.
\end{tablenotes}
\end{threeparttable}
\end{table}

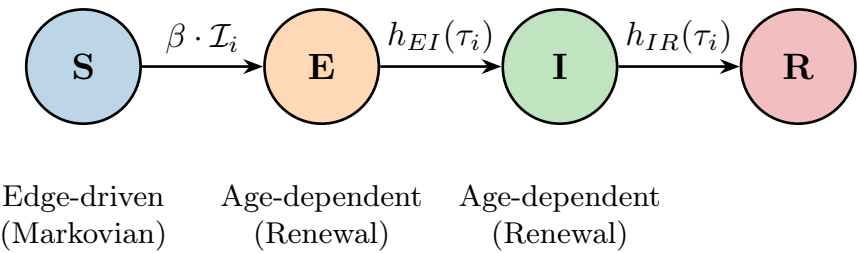
\begin{figure}[H]
\centering
\begin{tikzpicture}[
    state/.style={circle, draw, thick, minimum size=1.2cm, font=\bfseries},
    arrow/.style={-{Stealth[scale=0.9]}, thick},
    label/.style={font=\small, midway}
]

\node[state, fill=susceptible!30] (S) at (0,0) {S};
\node[state, fill=exposed!30] (E) at (2.5,0) {E};
\node[state, fill=infectious!30] (I) at (5,0) {I};
\node[state, fill=recovered!30] (R) at (7.5,0) {R};

\draw[arrow] (S) -- node[above, label] {$\beta \cdot \mathcal{I}_i$} (E);
\draw[arrow] (E) -- node[above, label] {$h_{EI}(\tau_i)$} (I);
\draw[arrow] (I) -- node[above, label] {$h_{IR}(\tau_i)$} (R);

\node[below=0.5cm of S, font=\footnotesize, align=center] {Edge-driven\\(Markovian)};
\node[below=0.5cm of E, font=\footnotesize, align=center] {Age-dependent\\(Renewal)};
\node[below=0.5cm of I, font=\footnotesize, align=center] {Age-dependent\\(Renewal)};

\end{tikzpicture}
\caption{SEIR model with mixed dynamics. The S$\to$E transition is edge-mediated and Markovian; the E$\to$I and I$\to$R transitions are nodal and non-Markovian, with age-dependent hazard functions $h(\tau_i)$.}
\label{fig:seir}
\end{figure}

\section{Problem formulation}\label{sec:formulation}

Consider a contact network $\mathcal{G} = (\mathcal{V}, \mathcal{E}, \mathbf{W})$ with $N = |\mathcal{V}|$ nodes and $L$ interaction layers. Each node $i$ occupies one of $M$ compartments. The system state is:
\begin{equation}
    (\mathbf{X}(t), \bm{\tau}(t)) \in \{1,\ldots,M\}^N \times \mathbb{R}_{\geq 0}^N
\end{equation}
where $\mathbf{X}(t)$ is the compartmental state vector and $\bm{\tau}(t)$ is the \emph{age tensor}, the time each node has spent in its current state. The age resets to zero upon any state transition (the \emph{renewal property}).

The instantaneous transition rate for node $i$ from compartment $m$ to $n$ decomposes into nodal and edge-mediated components:
\begin{equation}
    \lambda_{i}^{m \to n}(t) = \underbrace{\Psi_{\text{node}}^{m \to n}(\tau_i, \mathbf{u})}_{\text{spontaneous}} + \underbrace{\sum_{\ell=1}^{L} \Psi_{\text{edge}}^{m \to n, \ell}\left(\mathcal{I}_i^{(\ell)}, \mathbf{u}\right)}_{\text{contact-driven}}
    \label{eq:general_rate}
\end{equation}

The \emph{inducer influence} $\mathcal{I}_i^{(\ell)}$ aggregates contributions from neighbors weighted by an \emph{infectivity function} $\rho^{(\ell)}$:
\begin{equation}
    \mathcal{I}_i^{(\ell)}(\mathbf{X}, \bm{\tau}) = \sum_{j \in \mathcal{N}_{\text{in}}^{(\ell)}(i)} w_{ji}^{(\ell)} \cdot \rho^{(\ell)}(X_j, \tau_j)
    \label{eq:influence}
\end{equation}
For standard Markovian models, $\rho^{(\ell)}(X_j, \tau_j) = \mathbf{1}\{X_j = q^{(\ell)}\}$ is a binary indicator. For non-Markovian models with age-dependent transmission, $\rho^{(\ell)}$ can be a continuous function of the source node's infection age $\tau_j$ (Section~\ref{sec:source_node_approx}), enabling biologically realistic viral shedding curves without incurring $\bigO(E)$ per-edge memory costs.

The \emph{computational duality} central to \flashspread{} arises from how the nodal rate depends on age. In the Markovian case, the rate is age-independent, enabling $\bigO(K \cdot D_{\text{avg}})$ sparse updates. In the renewal case, the rate depends on holding time via the hazard function $h(\tau) = f(\tau)/S(\tau)$, requiring dense $\bigO(N)$ updates at every step.

\section{Computational engines}\label{sec:engines}

\flashspread{} provides two engines (Markovian, Renewal) and three CSR traversal strategies (thread, warp, merge) that are typically abstracted from the user: a compact decision tree, reproduced in Figure~\ref{fig:decision_tree}, drives the runtime dispatch. The remainder of this section outlines the engine-level mechanics referenced in Figure~\ref{fig:decision_tree}; Sections~\ref{sec:renewal_engine} and~\ref{sec:csr_dispatch} cover the renewal side.
\begin{figure}[H]
\centering
\begin{tikzpicture}[
    question/.style={rectangle, rounded corners=3pt,
                     draw=choice!45!black, line width=0.9pt,
                     fill=choice!20, align=center, font=\small,
                     minimum width=4.7cm, minimum height=0.95cm,
                     inner sep=3pt},
    outcomeA/.style={rectangle, rounded corners=3pt,
                     draw=markov!70!black, line width=1pt,
                     fill=markov!12, align=center, font=\small,
                     minimum width=3.8cm, minimum height=0.85cm,
                     inner sep=3pt},
    outcomeB/.style={rectangle, rounded corners=3pt,
                     draw=renew!60!black, line width=1pt,
                     fill=renew!12, align=center, font=\small,
                     minimum width=3.8cm, minimum height=0.85cm,
                     inner sep=3pt},
    thresh/.style={rectangle, fill=white, inner sep=2pt,
                   rounded corners=1pt, draw=black!18,
                   font=\scriptsize, line width=0.3pt},
    flow/.style={-{Stealth[length=5pt,width=5pt]}, line width=0.8pt,
                 color=black!65},
]

\node[question] (q1) at (0, 4.2)
    {Is any per-node transition rate\\ age-dependent
     (e.g.\ log-normal hazard)?};

\node[outcomeA] (markov) at (-4.5, 2.0)
    {\textbf{Markovian engine}\\
     \scriptsize(Section~\ref{sec:markov_engine}, Algorithm~\ref{alg:markov})};
\draw[flow] (q1.west) -| (markov.north);
\node[thresh] at ($(q1.west)!0.35!(markov.north)+(0,0.25)$)
    {\textbf{no} $\to$ all rates piecewise constant};

\node[question] (q2) at (4.5, 2.0)
    {How heavy-tailed is the degree\\ distribution?
     \\[1pt]\scriptsize($\rho := D_{\max}/D_{\mathrm{avg}}$, auto-measured once)};
\draw[flow] (q1.east) -| (q2.north);
\node[thresh] at ($(q1.east)!0.35!(q2.north)+(0,0.25)$)
    {\textbf{yes} $\to$ Renewal engine};

\node[outcomeB] (thread) at (-0.6, -0.45)
    {\textbf{thread} (1 per node)\\
     \scriptsize regular / ER / lattice};
\node[outcomeB] (warp)   at (4.5, -0.45)
    {\textbf{warp} (32 per node)\\
     \scriptsize moderately skewed};
\node[outcomeB] (merge)  at (9.6, -0.45)
    {\textbf{merge} (edge-partitioned)\\
     \scriptsize strongly scale-free};

\draw[flow] (q2.south) -| (thread.north);
\draw[flow] (q2.south) -- (warp.north);
\draw[flow] (q2.south) -| (merge.north);

\node[thresh] at ($(q2.south)!0.35!(thread.north)+(-0.2,0.25)$)
    {$\rho < 4$};
\node[thresh] at ($(q2.south)!0.50!(warp.north)+(0,0.50)$)
    {$4 \leq \rho < 50$};
\node[thresh] at ($(q2.south)!0.35!(merge.north)+(0.2,0.25)$)
    {$\rho \geq 50$};

\node[font=\scriptsize\itshape, color=black!55, align=center]
    at (2.5, -2.3)
    {\textbf{Auto-dispatch} (default): the engine reads
     \texttt{row\_ptr} once at construction,\\
     computes $\rho$, and selects the strategy above.
     The user can override via \texttt{csr\_strategy=...}};

\end{tikzpicture}
\caption{Practitioner decision tree. A single top-level question (``is the hazard age-dependent?'') selects the engine; a single scalar ($\rho = D_{\max}/D_{\mathrm{avg}}$, inspected once at construction) selects the CSR traversal strategy within the Renewal engine. Thresholds $(\rho_w, \rho_m) = (4, 50)$ are calibrated in Appendix~\ref{app:degree_dispatch}.}
\label{fig:decision_tree}
\end{figure}
Both engines share the FlashNeighbor kernel\label{sec:flash_neighbor} for computing inducer influence (Eq.~\ref{eq:influence}). A naive approach requires materializing dense indicator vectors, incurring $2NL$ memory transactions for discarded intermediates. We fuse operations into a single kernel that streams CSR edges, looks up neighbor states via indirect memory access, evaluates predicates in registers, and accumulates weighted sums, writing only the final influence tensor to global memory. The graph is stored in CSR format indexed by \emph{incoming} edges, enabling \emph{gather-based parallelism}: each GPU thread owns a unique target node and accumulates influence in private registers without atomic operations. The concrete data layout, per-thread register usage, and the corresponding traffic through HBM / L2 / SM registers are documented in Appendix~\ref{app:data_flow} (Figures~\ref{fig:csr_layout}--\ref{fig:memory_flow}); these details drive the $\sim$64$N \to \sim$20$N$ bytes/step fusion gain reported in Section~\ref{sec:fused_kernel}.

Listing~\ref{lst:python_api} shows the user-facing Python API, which is deliberately kept flat so that the Markovian / renewal engine choice, the CSR traversal strategy, and the tau-leaping tolerance are all explicit in the engine constructor.
\begin{figure}[H]
\begin{lstlisting}[language=Python, caption={Declarative simulation setup with \flashspread{}. The engine is constructed once with a graph, a compartment model, and a tolerance; subsequent \texttt{step()} calls advance one batched CUDA Graph replay. Both the CSR strategy (\texttt{"auto" / "thread" / "warp" / "merge"}) and the model's \texttt{transmission\_mode} (\texttt{"constant"} Markovian edges vs.\ \texttt{"age\_dependent"} source-node shedding) are runtime choices, not recompiles.}, label={lst:python_api}]
import torch
from flashspread import SEIRModel
from flashspread.core.network import FixedDegreeGraph
from flashspread.engines.renewal_fused import RenewalEngineFusedCUDAGraph

# 1. Graph and model are declarative:
graph = FixedDegreeGraph(num_nodes=1_000_000, degree=8, device="cuda")
model = SEIRModel(beta=0.25,
                  mean_ei=5.0, median_ei=4.0,
                  mean_ir=7.5, median_ir=5.0)
model.transmission_mode = "age_dependent"   # alternative: "constant"

# 2. Engine picks the best CSR strategy from D_max / D_avg:
engine = RenewalEngineFusedCUDAGraph(
    graph, model, device="cuda",
    epsilon=0.03, tau_max=0.1,      # tau-leaping knobs
    csr_strategy="auto",            # thread / warp / merge / auto
    steps_per_launch=50,            # CUDA Graph batch size b
    seed=12345,
)
engine.seed_infection(num_infected=100, state=model.exposed)

# 3. Run:
while engine.current_time < 50.0:
    _, state = engine.step()

print(engine.count_by_state())   # [S, E, I, R] on device
# Expected ensemble-mean populations at t=50 on this benchmark:
#   roughly [~0.04, ~0.02, ~0.05, ~0.89] * num_nodes
# (useful as a one-line post-install sanity check; variance across
#  seeds is a few percent, dominated by the synchronous-update floor
#  discussed in Appendix C.)
\end{lstlisting}
\end{figure}

For Markovian dynamics\label{sec:markov_engine}, transition rates are piecewise constant, enabling two operational modes: \emph{Control Mode} performs dense FlashNeighbor passes when control inputs change ($\bigO((N + E)/P)$), while \emph{Inertial Mode} exploits event-driven sparsity with $\bigO(|\mathcal{T}| \cdot D_{\text{avg}} / P)$ updates where $|\mathcal{T}| \ll N$. To bound accumulated rounding from the repeated sparse atomic updates, Inertial Mode periodically triggers a full Control-Mode recomputation (every 200 events in our implementation); this cadence is a tunable accuracy knob rather than a correctness requirement. Algorithm~\ref{alg:markov} summarizes the step logic. This sparse engine achieves $>2 \times 10^7$ events/sec at $N = 10^6$, but its optimization strategy (incremental updates between rare events) is inapplicable to renewal processes where all hazards change continuously.

\begin{algorithm}[H]
\caption{Markovian Engine: One Tau-Leaping Step}
\label{alg:markov}
\begin{algorithmic}[1]
\Require State $\mathbf{X}$, rates $\bm{\lambda}$, influence $\mathbf{I}$, parameters $\theta, p_{\max}, \tau_{\max}$
\Ensure Updated state $\mathbf{X}'$, elapsed time $\tau$
\State $\Lambda \leftarrow \sum_i \lambda_i$ \Comment{Total rate}
\State $\tau \leftarrow \min\!\left(\frac{\theta \cdot N}{\Lambda},\; \frac{p_{\max}}{\max_i \lambda_i},\; \tau_{\max}\right)$
\State $p_i \leftarrow 1 - e^{-\lambda_i \tau}$ for all $i$ \Comment{Transition probabilities}
\State $\mathbf{m} \leftarrow \text{Bernoulli}(\mathbf{p})$ \Comment{Sample event mask}
\State $\mathbf{X}' \leftarrow \textsc{ApplyTransitions}(\mathbf{X}, \mathbf{m})$
\If{few transitions (Inertial Mode)}
    \State Update $\mathbf{I}$ via sparse atomics on outgoing CSR
\Else
    \State $\mathbf{I} \leftarrow \textsc{FlashNeighbor}(\mathbf{X}')$ \Comment{Control Mode}
\EndIf
\State $\bm{\lambda} \leftarrow \textsc{ComputeRates}(\mathbf{X}', \mathbf{I})$
\end{algorithmic}
\end{algorithm}

\section{The Renewal engine}\label{sec:renewal_engine}

For renewal processes, all $N$ hazards change at every time step, precluding the sparse updates that make Markovian simulation efficient. This section presents our core contribution: a fused GPU kernel that achieves memory-bandwidth-limited execution for dense renewal dynamics.

\subsection{Numerically stable log-normal hazards}

Direct computation of $h(\tau) = f(\tau)/S(\tau)$ suffers from catastrophic cancellation for large $\tau$. We derive a stable formulation using the scaled complementary error function:

\begin{proposition}[Stable Log-Normal Hazard]
For a log-normal distribution with parameters $(\mu, \sigma)$:
\begin{equation}\label{eq:stable_hazard}
h_{\mathrm{LN}}(\tau; \mu, \sigma) = \sqrt{\frac{2}{\pi}} \,\frac{1}{\tau \sigma \,\mathrm{erfcx}(z)}, \quad z = \frac{\ln \tau - \mu}{\sigma\sqrt{2}}
\end{equation}
\end{proposition}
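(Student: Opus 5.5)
The argument is a direct computation from the definition $h(\tau)=f(\tau)/S(\tau)$, followed by rewriting the survival function through the scaled complementary error function $\mathrm{erfcx}(z)=e^{z^2}\mathrm{erfc}(z)$. I would first write down the two ingredients for a log-normal variable with parameters $(\mu,\sigma)$. The density is
\begin{equation*}
f(\tau)=\frac{1}{\tau\sigma\sqrt{2\pi}}\exp\!\left(-\frac{(\ln\tau-\mu)^2}{2\sigma^2}\right), \qquad \tau>0 .
\end{equation*}
The survival function is
\begin{equation*}
S(\tau)=1-\Phi\!\left(\frac{\ln\tau-\mu}{\sigma}\right)=\tfrac12\,\mathrm{erfc}(z).
\end{equation*}
Here $z=(\ln\tau-\mu)/(\sigma\sqrt2)$ as in the statement, using the standard identity $1-\Phi(x)=\tfrac12\mathrm{erfc}(x/\sqrt2)$. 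With the same substitution the exponent in $f$ becomes exactly $-z^2$, so $f(\tau)=e^{-z^2}/(\tau\sigma\sqrt{2\pi})$.

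Next, I would form the ratio and insert the definition of $\mathrm{erfcx}$:
\begin{equation*}
h(\tau)=\frac{e^{-z^2}/(\tau\sigma\sqrt{2\pi})}{\tfrac12\,\mathrm{erfc}(z)}=\frac{2}{\tau\sigma\sqrt{2\pi}}\cdot\frac{1}{e^{z^2}\mathrm{erfc}(z)}=\sqrt{\frac{2}{\pi}}\,\frac{1}{\tau\sigma\,\mathrm{erfcx}(z)} .
\end{equation*}
This uses $2/\sqrt{2\pi}=\sqrt{2/\pi}$ and is precisely Eq.~\eqref{eq:stable_hazard}. The identity holds for every $\tau>0$ because $\mathrm{erfc}(z)>0$ for all real $z$, so no division by zero occurs.

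The algebra itself poses no difficulty. The only step requiring care is the one that justifies the word \emph{stable}, so I would add a short remark on it. In the naive form, for large $\tau$ (so $z\to+\infty$) both $f$ and $S$ underflow as $e^{-z^2}$; worse, $S=1-\Phi$ computed by subtraction loses all relative precision. In the new form the Gaussian factor $e^{-z^2}$ cancels analytically rather than numerically. Moreover, $\mathrm{erfcx}(z)\sim 1/(z\sqrt\pi)$ as $z\to\infty$, which is bounded away from both underflow and overflow. As a consistency check, this asymptotic gives $h(\tau)\sim \sqrt2\,z/(\tau\sigma)=(\ln\tau-\mu)/(\tau\sigma^2)$, the known log-normal hazard tail.

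For $z\to-\infty$ (small $\tau$), $\mathrm{erfcx}(z)$ grows like $2e^{z^2}$. The formula then returns a small, finite hazard, which is harmless. I would note that this direction is where one must rely on a library $\mathrm{erfcx}$ that is accurate on the whole real line, but the proposition itself does not depend on that.
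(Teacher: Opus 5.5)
Your proposal is correct and follows the paper's own proof: both write $f(\tau)=e^{-z^2}/(\tau\sigma\sqrt{2\pi})$ and $S(\tau)=\tfrac12\mathrm{erfc}(z)$, take the ratio, and absorb $e^{z^2}\mathrm{erfc}(z)$ into $\mathrm{erfcx}(z)$. The stability remarks and the asymptotic check $h(\tau)\sim(\ln\tau-\mu)/(\tau\sigma^2)$ go beyond what the paper writes, but they are accurate and do not change the argument.
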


\begin{proof}
The log-normal density and survival function are $f(\tau) = \frac{1}{\tau \sigma \sqrt{2\pi}} e^{-z^2}$ and $S(\tau) = \frac{1}{2} \mathrm{erfc}(z)$. The hazard is:
\begin{align}
    h(\tau) = \frac{f(\tau)}{S(\tau)} = \sqrt{\frac{2}{\pi}} \cdot \frac{1}{\tau \sigma} \cdot \frac{1}{e^{z^2} \mathrm{erfc}(z)} = \sqrt{\frac{2}{\pi}} \cdot \frac{1}{\tau \sigma \cdot \mathrm{erfcx}(z)}
\end{align}
where $\mathrm{erfcx}(z) = e^{z^2} \mathrm{erfc}(z)$ remains well-conditioned for all $\tau > 0$ via \texttt{torch.special.erfcx}.
\end{proof}

\subsection{Bernoulli tau-leaping}

In compartmental epidemic models, each node can undergo \emph{at most one} transition per step. The appropriate model is Bernoulli:
\begin{equation}
    \text{transition}_i \sim \text{Bernoulli}(p_i), \quad p_i = 1 - e^{-\lambda_i \Delta t}
\end{equation}
with adaptive step selection:
\begin{equation}
    \Delta t = \min\left(\Delta t_{\max},\; \frac{\varepsilon}{\max_i \lambda_i + \delta}\right)
\end{equation}
where $\varepsilon \in [0.01, 0.05]$ bounds the maximum transition probability per step. As $\varepsilon \to 0$, the approximation converges to the exact continuous-time dynamics; our validation (Section~\ref{sec:validation}) demonstrates that $\varepsilon = 0.03$ produces trajectories in close agreement with exact Gillespie methods.

Achieving GPU acceleration necessitates a fundamental algorithmic paradigm shift. In existing sequential non-Markovian simulators, the continuous re-evaluation of hazard rates relies on a vast system size ($N \gg 1$) to naturally produce infinitesimal time steps \citep{boguna2014simulating}. By employing a first-order expansion, this approach inherently imposes an approximation that mathematically freezes the hazard rates during each interval. While highly accurate in the thermodynamic limit, this finite-$N$ approximation degrades if intrinsic system rates drop and the natural time step becomes too large. To guarantee tight bounds and prevent this degradation, sequential models often inject an independent auxiliary Markovian process, effectively acting as a phantom process \citep{vajdi2020stochastic}. This approach floods the simulation with $\mathcal{O}(\lambda_0)$ dummy events per unit time. Because the auxiliary rate $\lambda_0$ must be kept aggressively large to restrict integration error, the strict sequential event queue becomes an insurmountable bottleneck, choked by zero-state-change events. Consequently, tau-leaping is not merely a performance heuristic; it is the required algorithmic paradigm to break this sequential dependency and expose the dense $\mathcal{O}(N)$ data parallelism that the GPU subsequently exploits.

\subsection{Source-node approximation for age-dependent edges}
\label{sec:source_node_approx}

Exact non-Markovian simulation typically tracks per-edge ages at $\mathcal{O}(E \times M)$ memory cost to enable arbitrary edge dynamics \citep{boguna2014simulating}. However, we can bypass this massive memory footprint by adopting a host-centric approach—theoretically equivalent to ``Rule 2'' in \citet{boguna2014simulating}. By observing that viral shedding is fundamentally a property of the \emph{host} rather than the \emph{contact}, we define the instantaneous infectivity as:
\begin{equation}
    \rho^{(\ell)}(X_j, \tau_j) = \beta \cdot s(\tau_j) \cdot \mathbf{1}\{X_j = q^{(\ell)}\}
    \label{eq:infectivity}
\end{equation}
where $s(\tau_j)$ is a user-defined viral shedding profile reflecting the time course of infectiousness (e.g., a log-normal density calibrated to empirical viral load data \citep{he2020temporal}). This formulation decouples a node's external infectiousness, governed by $s(\tau)$, from its internal recovery dynamics governed by the hazard $h(\tau)$. Setting $s(\tau_j) = 1$ recovers standard Markovian edge semantics.\footnote{The approximation is exact when the per-contact transmission rate depends only on the source node's infection age, which is the standard epidemiological assumption for viral shedding profiles. It breaks down in three families of scenarios: (i) \emph{dose-dependent transmission}, where the per-edge contribution depends on the cumulative contact history between a specific (source, target) pair rather than only on the source's shedding trajectory; (ii) \emph{temporal networks with independently forming edges}, where each edge itself has a lifetime that must be tracked; and (iii) \emph{heterogeneous per-edge susceptibility} drawn from a non-trivial distribution, when the goal is to replicate that exact distributional shape rather than an ensemble-mean effect. In all three cases true per-edge age tracking at $\bigO(E)$ memory is required; we view this as an extension of the renewal engine rather than a correction to the present paper.}

\subsection{Fused kernel with block-level sparsity}
\label{sec:fused_kernel}

The unfused renewal pipeline executes $\sim$12 separate kernel launches per step, materializing $\sim$7 intermediate $\bigO(N)$ buffers to global memory (pressure, rates, event probabilities, random samples, event masks, next state, age updates), totaling $\sim$64$N$ bytes of traffic. We consolidate everything into a single fused kernel, reducing traffic to $\sim$20$N$ bytes per step.
A fundamental tension arises between compute sparsity and CUDA Graph compatibility. During a typical SEIR epidemic, $\sim$70\% of nodes occupy states S or R where the expensive $\mathrm{erfcx}$-based hazard ($\sim$55 FLOPs) need not be evaluated. In PyTorch, exploiting this requires data-dependent guards that force CPU--GPU synchronization and break CUDA Graph capture. Our fused kernel resolves this by performing a block-scalar reduction over a per-block E/I-existence flag that produces a true hardware branch rather than per-lane predication: thread blocks containing no active nodes skip the entire hazard evaluation, while blocks with at least one E or I node evaluate the hazard and mask-select the result across lanes. This preserves CUDA Graph capture while still recovering most of the compute sparsity. 

The GPU kernel framework lacks a native $\mathrm{erfcx}$ instruction, necessitating a piecewise identity-based (for $|z| \leq 3.5$) and asymptotic (for $|z| > 3.5$) approximation inside the kernel. The measured maximum relative error of this approach is $\sim$$4 \times 10^{-2}$ at the branch-switch $z \approx 3.5$ and $\sim$$6 \times 10^{-3}$ elsewhere on an fp32-safe grid (Appendix~\ref{app:erfcx}). Both errors are well within the tolerance of stochastic tau-leaping, as the induced bias in the Bernoulli probability scales as $\lambda_i \Delta t \lesssim \varepsilon$ and stays far below the $\sim$6--7\% structural-bias floor of synchronous updates (Section~\ref{sec:validation}; $\sim$6\% on peak-$I$, $\sim$7\% on final-$R$).

\begin{algorithm}[H]
\caption{Fused Renewal Engine with CUDA Graph: Capture}
\label{alg:fused_capture}
\begin{algorithmic}[1]
\Require CSR graph $(R, C, W)$, SEIR model parameters $(\beta, \mu_{EI}, \sigma_{EI}, \mu_{IR}, \sigma_{IR})$, batch size $b$, RNG seed $s$, tolerance $\varepsilon$, cap $\tau_{\max}$
\State \textbf{Persistent state}: tensors $\mathbf{X}, \bm{\tau}$, scalars $\tau_{\mathrm{prev}} \in \mathbb{R}_{\geq 0}$, \texttt{step\_id} $\in \mathbb{N}$
\State Initialise $\tau_{\mathrm{prev}} \gets \tau_{\max}$ and \texttt{step\_id} $\gets 0$
\State Snapshot all tensor state
\State Warmup: execute 3 steps for JIT compilation
\State Capture $b$ repetitions of \textsc{FusedStep} (Alg.~\ref{alg:fused}) as CUDA Graph $\mathcal{G}$
\State Restore tensor state from snapshot
\State \Return $\mathcal{G}$, initialised persistent state
\end{algorithmic}
\end{algorithm}

\begin{algorithm}[H]
\caption{Fused Renewal Engine with CUDA Graph: per-call replay of $b$ fused steps (\textsc{FusedStep}). Note: $\tau_{\text{prev}}$ is initialised to $\tau_{\max}$ in Algorithm~\ref{alg:fused_capture}, so the \emph{first} step of each replay uses $\tau_{\max}$ as a conservative upper bound on the step size; subsequent steps use the adaptive $\tau$ written by the previous iteration's tau-update line. This introduces at most one over-conservative step per $b{=}50$-step replay window and is the direct cost of keeping the adaptive step inside a capturable graph --- the alternative, recomputing $\tau$ outside the graph every step, would eliminate the gain of CUDA Graph batching.}
\label{alg:fused}
\begin{algorithmic}[1]
\Require State $\mathbf{X}$, age $\bm{\tau}$, CSR graph $(R, C, W)$, model parameters $(\beta, \mu, \sigma)$, batch size $b$, seed $s$; captured graph $\mathcal{G}$
\Ensure Updated $(\mathbf{X}', \bm{\tau}')$, elapsed time $T$
\State \textbf{Persistent state (carried across calls)}: $\tau_{\mathrm{prev}}$, \texttt{step\_id}; initialised by Alg.~\ref{alg:fused_capture}
\State $T \leftarrow 0$
\State $\mathcal{G}.\text{replay}()$ \Comment{Executes $b$ fused steps:}
\For{$k = 1$ \textbf{to} $b$} \Comment{(inside captured graph)}
    \State $T \leftarrow T + \tau_{\text{prev}}$
    \State \Comment{Step 1: Infectivity pre-pass ($\bigO(N)$, elementwise)}
    \State $\text{infectivity}[j] \leftarrow \beta \cdot s(\tau_j) \cdot \mathbf{1}\{X_j = I\}$ for all $j$
    \State step\_id $\leftarrow$ step\_id $+ 1$
    \State
    \State \Comment{Step 2: Fused GPU kernel (single launch, per thread $i$):}
    \State $p_i \leftarrow 0$ \Comment{CSR traversal $\to$ pressure in registers}
    \For{$j \in \mathcal{N}_{\text{in}}(i)$ via CSR}
        \State $p_i \leftarrow p_i + \text{infectivity}[j] \cdot W[j]$
    \EndFor
    \State Load $X_i, \tau_i$ from global memory
    \State $\lambda_i \leftarrow p_i$ if $X_i = S$; $\lambda_i \leftarrow 0$ if $X_i = R$
    \If{any $E$-node in block}
        \State $\lambda_i \leftarrow h_{\text{LN}}(\tau_i; \mu_{EI}, \sigma_{EI})$ for $E$-nodes \Comment{block-level skip}
    \EndIf
    \If{any $I$-node in block}
        \State $\lambda_i \leftarrow h_{\text{LN}}(\tau_i; \mu_{IR}, \sigma_{IR})$ for $I$-nodes
    \EndIf
    \State $q_i \leftarrow 1 - e^{-\lambda_i \cdot \tau_{\text{prev}}}$; \; $u \leftarrow \textsc{Rand}(s + \text{step\_id},\; i)$
    \State $X_i' \leftarrow \textsc{Transition}(X_i,\; u < q_i)$
    \State $\tau_i' \leftarrow 0$ if $X_i' \neq X_i$, else $\tau_i + \tau_{\text{prev}}$ \Comment{renewal reset}
    \State Store $X_i', \tau_i', \lambda_i$ \Comment{single global write}
    \State
    \State \Comment{Step 3: Tau update ($\bigO(N)$ reduction)}
    \State $\tau_{\text{prev}} \leftarrow \min\!\left(\tau_{\max},\; \varepsilon / (\max_i \lambda_i + \delta)\right)$
\EndFor
\end{algorithmic}
\end{algorithm}

Figure~\ref{fig:fused_step} diagrams the per-thread control flow inside Algorithm~\ref{alg:fused}. Each thread opens with five coalesced HBM reads (\texttt{state}, \texttt{age}, \texttt{pressure}, \texttt{infectivity}, RNG state), then participates in two block-scalar reductions (\texttt{any\_E}, \texttt{any\_I}) that produce a \emph{block-level} branch into either a skip path (no $\mathrm{erfcx}$ FLOPs) or an evaluate path (all lanes compute the log-normal hazard and mask-select the result). The Bernoulli sample, state transition, renewal age reset, and next-step infectivity write all happen in registers, and the thread closes with a single coalesced HBM write that commits the updated \texttt{state}, \texttt{age}, and next-step \texttt{infectivity}. No thread writes intermediate tensors to HBM --- the essential property that makes the fused design compatible with CUDA Graph capture of many repetitions and that keeps the arithmetic intensity on the memory-bound side of the roofline (Table~\ref{tab:roofline}).

\begin{figure}[H]
\centering
\begin{tikzpicture}[
    node distance=7mm and 5mm,
    box/.style={rectangle, rounded corners=3pt, draw, align=center,
                font=\footnotesize, inner sep=5pt, line width=0.8pt},
    hbmbox/.style={box, draw=hbmorange!80!black, fill=hbmorange!12,
                   minimum width=7.2cm},
    scalarbox/.style={box, draw=smgreen!60!black, fill=smgreen!10,
                      minimum width=7.2cm},
    regbox/.style={box, draw=regblue!65!black, fill=regblue!10,
                   minimum width=9.0cm},
    decision/.style={diamond, draw=choice!55!black, fill=choice!25,
                     aspect=2.4, inner sep=2pt, font=\footnotesize,
                     align=center, line width=0.8pt},
    skipbox/.style={box, draw=accentred!55!black, fill=accentred!8,
                    minimum width=3.3cm, font=\scriptsize},
    evalbox/.style={box, draw=smgreen!55!black, fill=smgreen!12,
                    minimum width=3.3cm, font=\scriptsize},
    flow/.style={-{Stealth[length=5pt,width=5pt]}, line width=0.8pt,
                 color=black!65},
    lbl/.style={font=\scriptsize\itshape, text=black!60},
]

\node[hbmbox] (reads)
     {\textbf{5 HBM reads per thread}\\
      \texttt{\scriptsize state, age, pressure, infectivity, rng}};

\node[scalarbox, below=of reads] (scalar)
     {\textbf{Block-scalar reductions}\\
      \texttt{\scriptsize any\_E = OR(state=E);\ any\_I = OR(state=I)}};

\node[decision, below=of scalar] (dec) {any\_E or any\_I?};

\node[skipbox, below left=13mm and 6mm of dec] (skip)
     {\textbf{Skip}\\set $\lambda\!=\!0$;\\
      skip erfcx};

\node[evalbox, below right=13mm and 6mm of dec] (eval)
     {\textbf{Evaluate}\\$\lambda = p \cdot \mathrm{erfcx}(z)$};

\node[regbox, below=24mm of dec] (bern)
     {\textbf{Register-only: Bernoulli sample + transition + next-step infectivity}\\
      \scriptsize sample $r$;\ fire if $r < 1 - \exp(-\lambda \tau)$;\ update state; reset age on event};

\node[hbmbox, below=of bern] (write)
     {\textbf{1 HBM write per thread}\\
      \texttt{\scriptsize state, age, infectivity (next-step)}};

\draw[flow] (reads) -- (scalar);
\draw[flow] (scalar) -- (dec);
\draw[flow] (dec.west) -| (skip.north);
\draw[flow] (dec.east) -| (eval.north);
\draw[flow] (skip.west) to[out=180, in=180, looseness=1.2] (bern.west);
\draw[flow] (eval.east) to[out=0, in=0, looseness=1.2] (bern.east);
\draw[flow] (bern) -- (write);

\node[lbl] at ([yshift=2mm, xshift=-5mm] dec.west) {no E / no I};
\node[lbl] at ([yshift=2mm, xshift=5mm] dec.east) {at least one};

\node[lbl, right=2mm of reads] {HBM};
\node[lbl, right=2mm of bern] {registers};
\node[lbl, right=2mm of write] {HBM};

\end{tikzpicture}
\caption{Per-thread control flow inside Algorithm~\ref{alg:fused}: five coalesced HBM reads on entry, a two-way branch selected by block-scalar reductions over the current compartments in the thread block, register-only work (Bernoulli sample, transition, age reset, next-step infectivity), and one coalesced HBM write on exit. The branch is \emph{block-level} (a scalar decision for the whole warp/block), not per-lane: blocks with no E or I node skip the $\sim$55-FLOP $\mathrm{erfcx}$ call entirely without breaking CUDA Graph capture, while blocks with at least one E or I node evaluate the hazard for all lanes and mask-select the result. Companion to the data-layout diagram of Figure~\ref{fig:csr_layout} and the memory-hierarchy diagram of Figure~\ref{fig:memory_flow}.}
\label{fig:fused_step}
\end{figure}
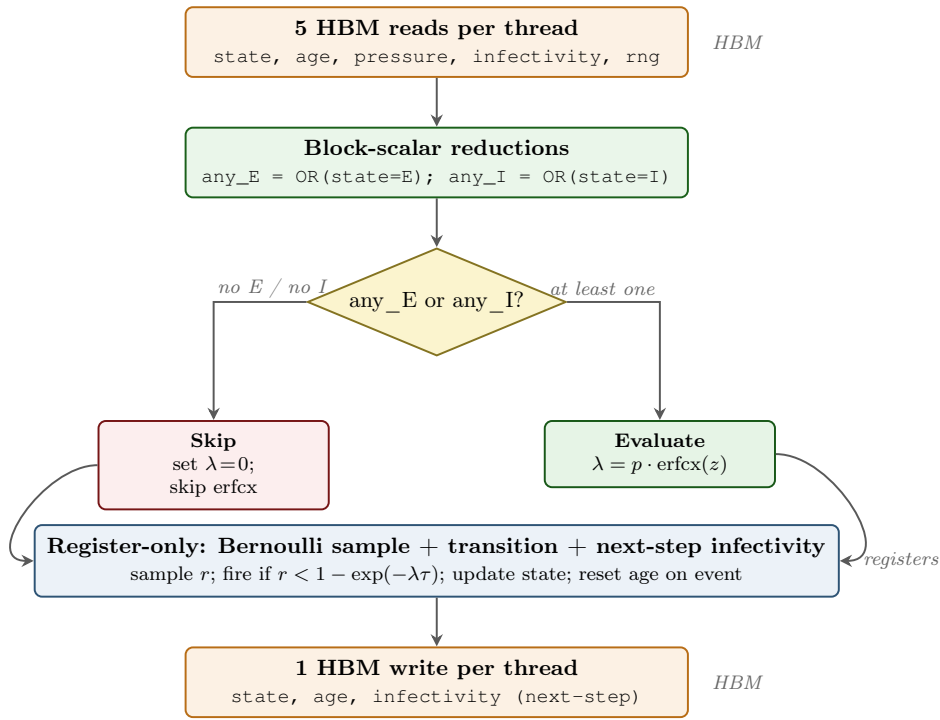

\subsection{CSR traversal strategies and auto-dispatch}
\label{sec:csr_dispatch}

The fused kernel described so far assigns one thread per node and iterates that node's incoming neighbors in a scalar loop. This is optimal when the degree distribution is narrow (regular, Erd\H{o}s--R\'{e}nyi, or lightly perturbed lattices), because the loop count across a 128-thread block is uniform and memory accesses coalesce. On heavy-tailed topologies (scale-free networks, social graphs), a single hub can dominate the block's runtime while the other 127 threads wait: measured throughput on a Barab\'{a}si--Albert graph drops by more than an order of magnitude despite equivalent $N$ and average degree (Section~\ref{sec:experiments}).
We therefore provide two additional traversal kernels and let the engine select among them at construction time:

\begin{itemize}
\item \emph{Warp-per-node}. The block is laid out as $[\texttt{NODES\_PER\_BLOCK}, \texttt{LANES\_PER\_NODE}]$ with $\texttt{LANES\_PER\_NODE}=32$. Each warp cooperates on one node's neighbor list, iterating $\lceil D_i / 32\rceil$ chunks in parallel, then a warp-local reduction yields the node's pressure. This reduces hub traversal from $\bigO(D_{\max})$ to $\bigO(D_{\max}/32)$ but still leaves inter-warp idle time inside a block that mixes one hub with several low-degree nodes.
\item \emph{Edge-partitioned merge-based}. Following the classic merge-path load-balancing idea of \citet{merrill2016merge}, we launch one program per fixed chunk of $\texttt{EDGES\_PER\_BLOCK}$ contiguous edges. Each lane loads one edge's $(\text{col\_ind}, \text{weight}, \text{infectivity}[\text{neighbor}])$, recovers the source node id via a fully unrolled binary search of depth $\lceil \log_2(N+2)\rceil$ on \texttt{row\_ptr}, and atomic-adds the weighted contribution to a shared \texttt{pressure} buffer. A cheap second kernel then runs the per-node tail (hazard, Bernoulli, transition, next-infectivity write) exactly as in Algorithm~\ref{alg:fused}. Load balance across blocks is perfect regardless of degree skew, but the scheme costs one atomic per edge, one scratch buffer read/write, and one extra kernel launch.
\end{itemize}

The three strategies are bit-exact equivalent (to within floating-point reduction order) because they all use the same counter-based RNG seeded by global node id and step counter, and the same per-node tail logic. We verified this on $N=10^4$ regular and BA graphs: thread and warp are bit-identical every step; merge uses non-deterministic atomic-add ordering and therefore matches only at population-count granularity (zero-delta in our tests).

\paragraph{Auto-dispatch.} At engine construction the framework inspects the graph's \texttt{row\_ptr} once, computes $\rho = D_{\max}/D_{\mathrm{avg}}$, and picks:
\begin{equation*}
\text{strategy} = \begin{cases}
\texttt{thread} & \rho < \rho_w,\\
\texttt{warp}   & \rho_w \leq \rho < \rho_m,\\
\texttt{merge}  & \rho \geq \rho_m,
\end{cases}
\end{equation*}
with $(\rho_w, \rho_m) = (4, 50)$ calibrated against the benchmarks in Table~\ref{tab:csr_dispatch}. The user can override the choice via an explicit strategy flag at engine construction. The dispatch cost (one pass over \texttt{row\_ptr}) is amortized over every subsequent simulation step.

\begin{table}[H]
\caption{CSR traversal strategy throughput (Fused CG, $b = 50$, $N = 10^6$, mean degree 8). The regular graph has $\rho = D_{\max}/D_{\mathrm{avg}} = 2$ and auto-dispatch picks \texttt{thread}; the BA graph has $\rho = 484$ and auto-dispatch picks \texttt{merge}. Each strategy produces statistically equivalent trajectories; only wall-clock differs.}
\label{tab:csr_dispatch}
\centering
\begin{tabular*}{\linewidth}{@{}lccc@{}}
\toprule
\textbf{Strategy} & \textbf{Regular ($d=8$)} & \textbf{BA ($m=4$)} & \textbf{BA / Regular} \\
 & G-NUPS & G-NUPS & \% \\
\midrule
\texttt{thread} (1 thread / node)        & 7.88 & 0.45 & 5.7\%  \\
\texttt{warp} (32 threads / node)        & 1.70 & 1.30 & 76.5\% \\
\texttt{merge} (edge-partitioned)        & 3.92 & 2.00 & 51.0\% \\
\bottomrule
\end{tabular*}
\end{table}

Relative to the default 1-thread-per-node kernel, the merge-based kernel is $4.5\times$ faster on BA but $2\times$ slower on a regular graph, which is exactly why the auto-dispatch heuristic is needed: the same kernel cannot be optimal for both regimes, and the user should not have to know which one their graph falls into. For each strategy, \texttt{NODES\_PER\_BLOCK} (warp) and \texttt{EDGES\_PER\_BLOCK} (merge) are small tunables with modest impact (Section~\ref{sec:experiments}).

\subsection{Epidemic-aware compute sparsity: active-node compaction}
\label{sec:compaction}

The fused kernel of Section~\ref{sec:fused_kernel} already skips the $\mathrm{erfcx}$ hazard for thread blocks that contain no $E$ or $I$ node; the block-scalar reduction harvests \emph{intra-step} compute sparsity along the spatial dimension (which blocks are inert this step). A natural complementary optimisation harvests \emph{inter-step temporal} sparsity: once a node reaches the absorbing state $R$, it takes no further action, and every replay step it still participates in (a block it belongs to, an $\bigO(N)$ copy-back of \texttt{state}/\texttt{age}/\texttt{infectivity}) is pure overhead. Dropping recovered nodes from the per-step workload shrinks the kernel's block count from $\lceil N/B \rceil$ toward $\lceil |\mathbf{X}{\ne}R| / B \rceil$, which approaches $\bigO(1)$ as the tail of a saturating epidemic completes.

We implement this as a CUDA-Graph-compatible \emph{active-node compaction} path. The predicate is $\mathbf{X} \ne R$ --- \emph{not} $\mathbf{X} \in \{E, I\}$, because the pull-based CSR gather of Section~\ref{sec:fused_kernel} also requires $S$ nodes to evaluate their incoming pressure in order to transition. Dropping $S$ would freeze the epidemic. Because $R$ is absorbing, the set of non-$R$ nodes shrinks monotonically, so refreshing the active list at the CUDA Graph replay boundary (once per $b=50$-step window) stays correct even when nodes transition to $R$ mid-replay: they remain on the list doing harmless $\mathrm{rate}=0$ evaluations until the next refresh.

\paragraph{Fixed-grid, early-exit pattern.}
To keep the refresh outside the captured CUDA Graph while the kernel launch inside stays immutable, we pre-allocate two static buffers: an int32 \texttt{active\_nodes[$N{+}B$]} holding the sorted non-$R$ node ids (zero-padded past $N$ to make any tail-block masked load well-defined) and a scalar int32 \texttt{num\_active}. The captured kernel grid is fixed at $\lceil N/B \rceil$. On entry, the kernel loads \texttt{num\_active}, sets \texttt{mask = offsets < num\_active}, and remaps \texttt{idx = active\_nodes[offsets]} through a masked gather; inactive tail lanes produce \texttt{idx = 0} but contribute nothing because every subsequent \texttt{tl.load}/\texttt{tl.store} is mask-gated. The refresh itself is a single \texttt{torch.nonzero} + in-place \texttt{copy\_} between replays; the underlying pointers never move, so no graph recapture is triggered. The only correctness subtlety is that inactive positions of the \texttt{rates} buffer retain stale, historically-nonzero values from when the corresponding nodes were last E or I; we zero \texttt{rates} once before each replay, matching the baseline invariant that $\mathrm{rate}[R] = 0$ every step. With that zeroing, the compaction path produces bit-identical compartment counts against the baseline at every step checkpoint on an $N=10^4$ sanity sweep (five seeded checkpoints, three independent seeds).

\paragraph{Topology-dependent gain.}
Figure~\ref{fig:compaction_nups_t} overlays NUPS$(t)$ for both topologies on the renewal benchmark, bucketed into ten temporal windows of the $\mathrm{TF}=50$ run. Table~\ref{tab:compaction_summary} gives the whole-run means. On ER $d{=}8$ the epidemic saturates slowly under the benchmark parameters (final attack rate $\approx 15\%$), so most of the run has $\mathrm{num\_active} \approx N$ and the refresh overhead exceeds the marginal block-skip gain: compaction is $\sim$1\% slower on the whole-run mean, with the per-bucket ratio crossing back above 1 only in the last temporal window. On BA $m{=}4$ at the same $N$ and $\beta$, explosive hub-driven takeoff reaches $\approx 97\%$ attack rate by $\mathrm{TF}=50$, so the shrinking active set dominates almost the entire run and the per-bucket speedup climbs from $1.0\times$ (takeoff) to $4.8\times$ (last window, num\_active $\approx 3\% N$) for a whole-run mean of $\mathbf{1.53\times}$ (Table~\ref{tab:compaction_summary}). This makes compaction a high-saturation / long-horizon optimisation: it is most valuable on topologies and parameter regimes that drive the epidemic through a long tail phase (scale-free contact networks, policy evaluations that simulate past the peak, RL rollouts that deliberately run to extinction).

\begin{table}[H]
\caption{Active-node compaction summary at $N{=}10^6$, $\mathrm{TF}{=}50$, 5 replicate trials. ``Final $R/N$'' is the mean attack rate at $t{=}50$. NUPS is the whole-run mean. Std across replicates is below $1\%$ in every cell. Bit-identity of compaction and baseline compartment counts was verified at five checkpoints ($t \in \{1, 5, 15, 30, 50\}$) across three seeds on a paired $N{=}10^4$ run.}
\label{tab:compaction_summary}
\centering
\begin{tabular*}{\linewidth}{@{\extracolsep{\fill}}lcccc@{}}
\toprule
\textbf{Graph} & \textbf{Final $R/N$} & \textbf{Baseline (G-NUPS)} & \textbf{Compaction (G-NUPS)} & \textbf{Speedup} \\
\midrule
ER $d{=}8$  & $0.156$ & $7.60$ & $7.52$ & $0.99\times$ \\
BA $m{=}4$  & $0.978$ & $0.44$ & $0.67$ & $\mathbf{1.53\times}$ \\
\bottomrule
\end{tabular*}
\end{table}

\begin{figure}[H]
\centering
\includegraphics[width=0.98\columnwidth]{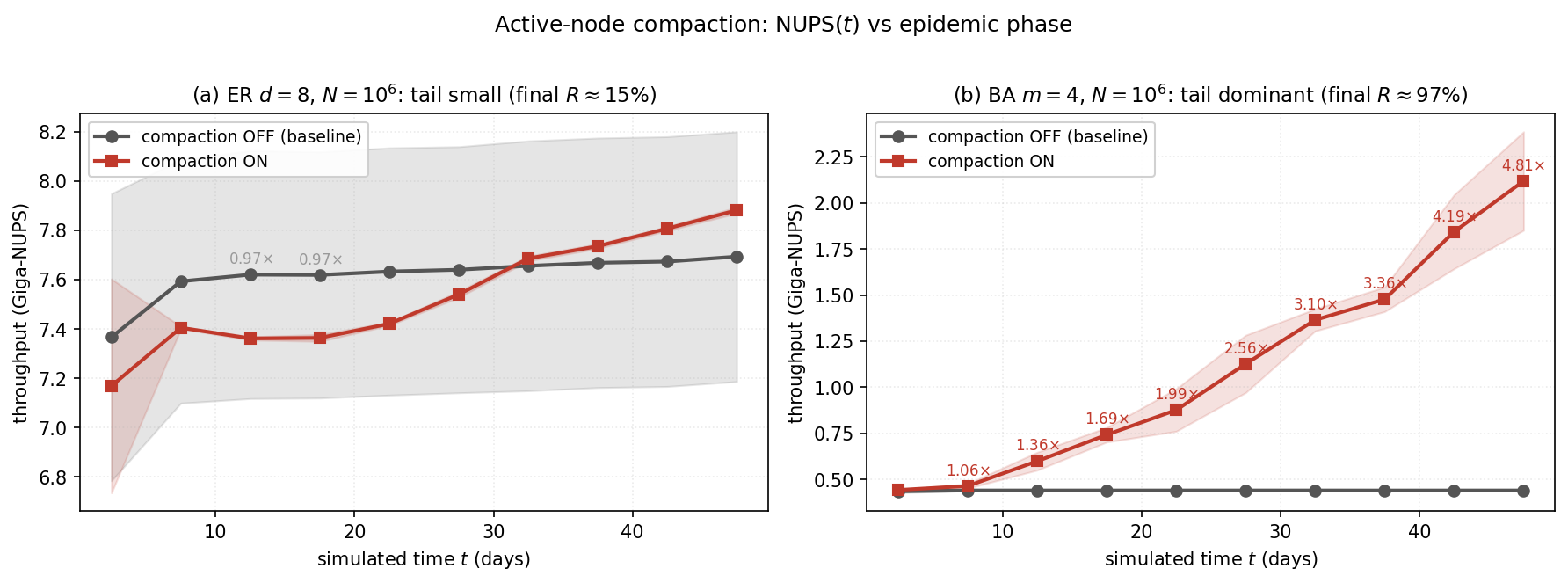}
\caption{Active-node compaction: NUPS$(t)$ stratified into ten temporal windows of the $\mathrm{TF}=50$ renewal benchmark, 5 replicate trials (mean $\pm$ 1 std shaded). Left: ER $d{=}8$, $N{=}10^6$, where the epidemic only reaches $\approx 15\%$ attack rate within TF=50; compaction is $\sim$3\% slower than baseline during the takeoff and peak windows (refresh overhead exceeds marginal skip gain) and crosses over to a small lift only in the last temporal bucket. Right: BA $m{=}4$, $N{=}10^6$, where explosive hub-driven takeoff reaches $\approx 97\%$ attack rate; the shrinking active set drives a progressive speedup from $1.0\times$ (takeoff) to $4.8\times$ (last bucket, $\mathrm{num\_active}\approx 3\% N$), giving $1.53\times$ on the whole-run mean (Table~\ref{tab:compaction_summary}). The topology-dependent pattern makes compaction a \emph{high-saturation / long-horizon} optimisation: it is most valuable on scale-free networks, policy evaluations that simulate past the peak, and reinforcement-learning rollouts that run to extinction.}
\label{fig:compaction_nups_t}
\end{figure}

The compaction path is a user-opt-in feature rather than the default, and is currently wired only into the thread-traversal kernel. On scale-free graphs the auto-dispatch ordinarily selects the merge traversal, in which case compaction offers no benefit because the pressure kernel is edge-partitioned; the hybrid ``thread-kernel compaction on BA'' measurement in Table~\ref{tab:compaction_summary} is a best-case \emph{forced} configuration that illustrates the upper bound of the mechanism on BA topology, not the production configuration.

\subsection{Mixed-precision storage with an fp32 accumulator}
\label{sec:mixed_precision}

A second lever on effective bandwidth is to store the non-accumulator state in narrower floating- and integer-point types. Table~\ref{tab:mixed_precision_dtypes} lists the per-field choice. Compartmental state is a 4-state categorical variable (S, E, I, R) and fits trivially in \texttt{int8}; ages never exceed the simulation horizon of tens of days and a 10-bit fp16 mantissa preserves them to $\sim\!10^{-3}$ days; the per-node infectivity and per-edge weight tensors use bfloat16 so the 8-bit exponent preserves the heavy-tailed lognormal shedding profile without the silent underflow fp16 would cause near the late-stage $\beta \cdot s(\tau)$ tail. CSR indices stay \texttt{int32} to safely address $N, E \geq 10^7$.

\begin{table}[H]
\caption{Per-tensor storage dtype in the mixed-precision path (engine flag \texttt{use\_mixed\_precision}). The accumulator for the pressure gather ($p_i = \sum_j w_{ji} \cdot \mathrm{infectivity}_j$) stays \texttt{float32} inside the kernel: summing hundreds of \texttt{bf16} contributions on a scale-free hub would otherwise absorb small values via mantissa underflow in the exact regime where $\lambda_i \Delta t \ll 1$ matters most. All kernel math (hazard evaluation, Bernoulli sampling, age update) is also in \texttt{float32}; down-conversion happens only at the final store.}
\label{tab:mixed_precision_dtypes}
\centering
\begin{tabular*}{\linewidth}{@{\extracolsep{\fill}}lllp{0.35\columnwidth}@{}}
\toprule
\textbf{Field} & \textbf{Baseline} & \textbf{Mixed} & \textbf{Rationale} \\
\midrule
\texttt{state}        & int32    & int8     & 4-state categorical; int8 is trivially safe.                             \\
\texttt{age}          & fp32     & fp16     & 10-bit mantissa preserves $\sim\!10^{-3}$ days on a $[0, 50]$ day grid.  \\
\texttt{infectivity}  & fp32     & bf16     & bf16 8-bit exponent preserves the lognormal tail; fp16 would underflow. \\
\texttt{weights}      & fp32     & bf16     & Same argument as infectivity; pulled in via the existing \texttt{bf16\_weights} path. \\
\texttt{col\_ind}     & int32    & int32    & Unchanged; must safely address $N, E \geq 10^7$.                          \\
\texttt{row\_ptr}     & int32    & int32    & Unchanged.                                                                 \\
\texttt{pressure acc.}& fp32     & \textbf{fp32} & \emph{Mandatory}: bf16 accumulation over a 1000-edge hub absorbs small values. \\
\texttt{rates}, \texttt{tau}  & fp32     & fp32     & Tau reduction stays fp32 so the adaptive step size is unaffected.         \\
\bottomrule
\end{tabular*}
\end{table}

\paragraph{Kernel contract: promote on load, cast on store.}
The thread-path fused kernel carries a \texttt{MIXED\_PRECISION} constexpr flag. Under \texttt{MIXED\_PRECISION=1}, every \texttt{tl.load} of \texttt{state}, \texttt{age}, \texttt{infectivity}, or \texttt{weights} is immediately cast to its natural math type (\texttt{int32} for state, \texttt{fp32} for the other three) using \texttt{.to(tl.float32)}; all hazard, Bernoulli, and age-update math then runs in \texttt{fp32}/\texttt{int32} exactly as in the baseline; only the final three write-once stores at the kernel tail cast back to \texttt{int8}/\texttt{fp16}/\texttt{bf16}. Under \texttt{MIXED\_PRECISION=0} the casts compile out and the kernel is byte-identical to the pre-mixed-precision code path. This pattern eliminates the usual correctness risk of ``blanket precision reduction'' implementations that let reduced-precision values enter multi-step arithmetic: the only effect of the mixed storage is on HBM traffic, not on the numerical behaviour inside registers.

\paragraph{Measured effect.}
Table~\ref{tab:mixed_precision_summary} reports throughput at $N = 10^6$, $\mathrm{TF}=50$, 5 replicate trials per cell, on the thread-traversal kernel (the mixed-precision path is currently wired only into the thread-path fused kernel). The isolated-weights downcast path (bf16 weights only) that has been included with FlashSpread since v1.0 gives no measurable throughput change on the fused CUDA-Graph engine (the existing ablation in \texttt{results/ablation\_nonmarkov/} has \texttt{fused\_cg}=0.118~ms/step and \texttt{fused\_cg\_bf16}=0.118~ms/step): weight-band traffic is only $\approx 20\%$ of total kernel traffic after fusion, and cutting it in half leaves the rest unchanged. The combined four-tensor downcast of Table~\ref{tab:mixed_precision_dtypes} is different: it removes $\sim\!28\%$ of the remaining state/age/infectivity traffic on top of the weights saving, and the measured throughput improves by $1.141\times$ on ER $d{=}8$ and $1.056\times$ on BA $m{=}4$. Fidelity versus the baseline fp32 path is well inside the structural-bias floor of Appendix~\ref{app:fidelity}: on a paired $N=10^4$ run across three seeds the relative error in final attack rate is $0.00\%$--$0.08\%$, and on the $N=10^6$ production runs the compartment counts at $t=50$ differ by $\lesssim\!70$ nodes out of $10^6$ between matched trials. This confirms empirically what the kernel contract asserts: mixed precision is a pure storage-bandwidth optimisation whose numerical effect is bounded by bf16 quantisation of the Bernoulli threshold $1 - \exp(-\lambda_i \Delta t)$, which is well below the synchronous-updates bias floor.

\begin{table}[H]
\caption{Mixed-precision throughput ablation at $N{=}10^6$, $\mathrm{TF}{=}50$, 5 replicate trials per cell, on the thread-traversal kernel. Fidelity columns give the mean $|R_\text{mixed} - R_\text{base}| / R_\text{base}$ across paired seed-matched trials at $N{=}10^4$. $^\ddagger$ The $N{=}10^7$ ``post-L2-cliff'' row is measured on 3 replicate trials rather than 5 (simulations at $N{=}10^7$ cost minutes per replicate; the drop keeps the bench inside a 45-min SLURM budget). Coefficient of variation across the three trials is below $4\%$, consistent with the sub-$2\%$ CoV observed at smaller $N$; the $R$-error column is reported only where a paired $N{=}10^4$ controlled-RNG run was available.}
\label{tab:mixed_precision_summary}
%
\centering
\scriptsize
\setlength{\tabcolsep}{2pt}
\begin{tabular*}{\linewidth}{@{\extracolsep{\fill}}llcccc@{}}
\toprule
\textbf{Graph ($N$)} & \textbf{Strategy} & \textbf{Baseline (G-NUPS)} & \textbf{Mixed (G-NUPS)} & \textbf{Speedup} & \textbf{$R$ error} \\
\midrule
ER $d{=}8$ ($10^6$)  & thread          & $7.15 \pm 0.36$ & $8.16 \pm 0.68$ & $1.14\times$ & $0.05\%$ \\
ER $d{=}8$ ($\mathbf{10^7}$, post-L2-cliff)$^\ddagger$ & thread & $1.33$ & $3.10$ & $\mathbf{2.32\times}$ & --- \\
BA $m{=}4$ ($10^6$)  & thread (forced) & $0.434 \pm 0.001$ & $0.458 \pm 0.000$ & $1.06\times$ & $0.00\%$ \\
BA $m{=}4$ ($10^6$)  & merge (auto)    & $1.810 \pm 0.025$ & $1.829 \pm 0.024$ & $1.01\times$ & $0.35\%$ \\
\bottomrule
\end{tabular*}
\end{table}

The gain on the \texttt{thread} kernel is $1.14\times$ on ER at $N=10^6$ (first row of Table~\ref{tab:mixed_precision_summary}) and $1.06\times$ on BA at the same $N$; the smaller BA-thread number reflects that the thread kernel on a scale-free graph is dominated by hub-driven warp divergence during CSR traversal rather than by state/age/infectivity bandwidth. At $N=10^7$, where the fp32 CSR + state working set exceeds the A100's 40~MB L2 capacity and all fused-kernel variants drop into the ``L2 cliff'' regime (Appendix~\ref{app:memory_flow}), mixed-precision storage delivers $\mathbf{2.32\times}$ throughput (Figure~\ref{fig:throughput_renewal}, second row of Table~\ref{tab:mixed_precision_summary}): the narrower state/age/infectivity dtypes shrink the working set by roughly $3\times$, pushing the effective cliff out by the same factor without paying any compute-side cost. This is where the technique turns from a modest bandwidth improvement into a meaningful extension of the hardware-tractable scale. The production BA path does not use the thread kernel: auto-dispatch selects the merge kernel (Section~\ref{sec:csr_dispatch}), and we extend the \texttt{MIXED\_PRECISION} constexpr into its tail kernel as well. Critically, the shared merge \emph{scratch pressure buffer} that accumulates per-edge contributions via \texttt{atomicAdd} is held at \texttt{fp32} on both paths, because summing hundreds of \texttt{bf16} partial pressures on a scale-free hub would otherwise absorb small values through mantissa underflow exactly in the $\lambda_i \Delta t \ll 1$ regime where the Bernoulli threshold is most sensitive. Under this contract the merge-path mixed-precision measurement at $N = 10^6$ on BA comes in at $1.011\times$ baseline (third row of Table~\ref{tab:mixed_precision_summary}): essentially a null result, which is consistent with the merge kernel being \emph{atomic-bound} on BA hubs rather than bandwidth-bound, so cutting the byte width of per-edge infectivity and weight cannot move the needle once the same \texttt{atomicAdd} stream serialises on the same destination addresses. Final compartment counts on the merge path still track the fp32 baseline within $\lesssim\!1\%$ in final attack rate (the baseline merge path is itself not bit-reproducible because \texttt{atomicAdd} ordering on the hub is non-deterministic; the structural-bias floor of Appendix~\ref{app:fidelity} is $\sim\!6$--$7\%$, a full order of magnitude larger than either the merge non-determinism or the mixed-precision quantisation). Combined with the active-node compaction of Section~\ref{sec:compaction}, the two levers are orthogonal and both fully covered by auto-dispatch: compaction shrinks the block count in the late, high-saturation phase of the epidemic, mixed precision shrinks the bytes-per-block when bandwidth is the limiter. On BA at production scale the dominant limiter on the merge path is atomic contention, not bandwidth, which motivates the segmented-reduction future work discussed in Section~\ref{sec:discussion}. Both flags are user-opt-in on the fused CUDA-Graph engine under the thread and merge traversals (the warp variant remains fp32-only for now).

\subsection{Evaluated but omitted locality optimisations}
\label{sec:rejected_opts}

This subsection documents one pre-registered locality optimisation that we measured and rejected. Documenting this negative result, the decision boundary, and the associated amortisation argument is intentional; it prevents duplicate effort in follow-up work and mitigates the well-documented publication bias against negative results in HPC benchmarking.

\paragraph{Reverse Cuthill--McKee (RCM) reordering on scale-free graphs.}
A second locality optimisation often applied to sparse CSR workloads is to permute node ids so that rows with nearby indices have spatially nearby neighbour lists, reducing L2 capacity misses on the indirect gather. The classic Reverse Cuthill--McKee (RCM) ordering is known to be highly effective on FEM-style banded matrices but is mismatched to the access pattern of hub-dominated scale-free graphs, where a small number of high-degree nodes touch memory across the entire \texttt{col\_ind} array regardless of permutation. To quantify this on our workload, we ran the same BA $m{=}4$, $N = 10^6$ benchmark as Section~\ref{sec:experiments} with and without RCM reordering applied to the incoming CSR. RCM yields a $1.035\times$ throughput lift ($1.876 \to 1.942$ G-NUPS on the production auto-dispatch path, which selects the merge kernel), well below the $1.05\times$ threshold we pre-registered as the ``RCM-sufficient'' decision boundary. Furthermore, the RCM CPU preprocess itself took 593~s on the reference node, which at $\approx 1.3$~s per GPU trial would require roughly 450 Monte Carlo trials just to amortise the preprocess cost back to break-even. For the typical ensemble sizes used in epidemic forecasting ($10^2$--$10^3$ trajectories), RCM is a net regression rather than a win. This rules \emph{out} plain bandwidth-minimisation reorderings as a useful locality lever for the fused renewal engine on scale-free graphs, and motivates future integration of a cache-line-aware reorder (Rabbit Order \citep{arai2016rabbit}, Gorder \citep{wei2016gorder}) whose locality objective is matched to the block-based traversal pattern of our merge kernel; we did not attempt that integration for this paper because the required C++ dependency is outside the scope of a revision that already establishes the headline mechanism, and we flag it as explicit future work in Section~\ref{sec:discussion}.

\section{Experimental evaluation}\label{sec:experiments}

All experiments were conducted on an NVIDIA A100-SXM4-80GB GPU with an AMD EPYC 7763 CPU (8 cores), using PyTorch 2.5.1, Triton 3.1, CUDA 12.1, and Python 3.11. CPU baselines include c-GEMF \citep{sahneh2017gemfsim} (C, exact event-driven), FastGEMF \citep{samaei2025fastgemf} (Python, exact event-driven), and our exact non-Markovian Phantom-Process implementation \citep{vajdi2020stochastic}. Networks are Erd\H{o}s--R\'{e}nyi random graphs with average degree $d = 8$ unless otherwise noted. The non-Markovian model is SEIR with log-normal E$\to$I (mean 5.0d, median 4.0d) and I$\to$R (mean 7.5d, median 5.0d) transitions at $\beta = 0.25$.

To provide a hardware-transparent measure of the dense $\bigO(N+E)$ workload inherent to synchronous tau-leaping, we report throughput in Node-Updates Per Second (NUPS), calculated as $(N \times \text{steps}) / \text{wall-clock time}$. Because the GPU evaluates all $N$ nodes every step regardless of how many transitions occur, NUPS reflects the true computational workload. For exact event-driven baselines, we report realized state transitions per second. The Markovian engine achieves $1.95 \times 10^7$ realized transitions/sec at $N=10^6$ --- comparable to c-GEMF \citep{sahneh2017gemfsim} at the same scale --- but the experimental evaluation that follows focuses on the renewal path since that is the novel contribution of this work, and we treat the Markovian engine as the Inertial/Control-Mode reference of Sections~\ref{sec:engines} and~\ref{sec:markov_engine}. The asymptotic complexity comparison of all engines is in Table~\ref{tab:complexity}.

A critical methodological concern when comparing approximate GPU tau-leaping to exact CPU Gillespie is that the speedup conflates two independent factors: the algorithmic relaxation (exact $\to$ approximate) and the hardware acceleration (CPU $\to$ GPU). To isolate these contributions, we implemented a CPU tau-leaping baseline using the same \textsc{RenewalEngine} class with PyTorch vectorized operations on an 8-core CPU.

\paragraph{Composition of the CPU baseline.}
The ``CPU tau-leaping (8-core)'' baseline in Figure~\ref{fig:throughput_renewal} and Table~\ref{tab:renewal_layered} is \emph{not} a generic or off-the-shelf parallel tau-leaper; it is our own \textsc{RenewalEngine} running on the CPU with eight PyTorch threads enabled, which means the CPU curve already benefits from every algorithmic development in this paper that is hardware-agnostic:
(i) the same adaptive Bernoulli tau-leaping step of Section~\ref{sec:renewal_engine} with the same $\varepsilon = 0.03$, $\tau_{\max} = 0.1$;
(ii) the same numerically stable $\mathrm{erfcx}$-based log-normal hazard of Section~\ref{sec:renewal_engine};
(iii) the same $\bigO(N+E)$ per-step CSR traversal (gather-based, one contiguous neighbor slice per node) as Section~\ref{sec:fused_kernel};
(iv) the same renewal age reset on transition;
(v) the same counter-based PRNG and seed schedule.
The three GPU-specific contributions of this paper have no CPU analogue and are therefore absent from the CPU curve: the Triton \textsc{FlashNeighbor} kernel (the CPU path uses a scatter-based PyTorch implementation that computes the same result), the fused Triton step kernel (no single-launch abstraction exists on CPU), and CUDA-Graph step batching. The degree-aware CSR dispatch (warp-per-node, merge-based) is likewise GPU-only. CPU parallelism is PyTorch's intra-op threading across 8 cores, which the scatter and elementwise ops actually use. The Fused-CG vs.\ CPU gap reported in Table~\ref{tab:renewal_layered} is therefore a strict hardware-plus-implementation gap at \emph{matched} algorithm, not an algorithmic-plus-hardware conflation: the CPU column is the strongest tau-leaping baseline we know how to write with PyTorch.

Table~\ref{tab:renewal_layered} decomposes the throughput hierarchy. The exact CPU Phantom Process achieves $\sim$70 transitions/sec at $N = 10^6$. CPU tau-leaping on the same network achieves $3.73 \times 10^7$ NUPS (measured on 8 cores with the same $\varepsilon = 0.03$, $\tau_{\max}=0.1$ configuration as the GPU engine), representing the pure algorithmic gain from replacing sequential exact simulation with synchronous Bernoulli updates. The GPU fused CUDA Graph engine then reaches $8.09 \times 10^9$ NUPS (8.09 Giga-NUPS, 1-thread-per-node kernel on the ER $d{=}8$ graph at $N = 10^6$), a strict $217\times$ hardware speedup over the optimized CPU tau-leaping baseline at the same $N$, derived entirely from IO-aware kernel fusion and GPU memory-bandwidth optimization.

\begin{table}[H]
\caption{Throughput decomposition isolating algorithmic and hardware contributions at $N=10^6$, $d=8$. All numbers are measured on the reference A100 node; the repository README gives the per-row CSV source behind each measurement. The GPU eager entry is converted from 672.9 steps/s to NUPS via $N \times \text{steps/s}$. Each throughput point is the mean over replicate runs after 2--3 warm-up runs (the per-benchmark trial/warm-up counts are 10/3 for the Scoglio-style scaling curves, 5/2 for the warp-collab and degree-dispatch benches, and 3/0 for the CPU baseline; the repository README names the harnesses). The coefficient of variation across replicates is below $2\%$ on all rows, and the same replication pattern applies to every throughput table in this paper (Tables~\ref{tab:renewal_layered}, \ref{tab:roofline}, \ref{tab:csr_dispatch}, \ref{tab:degree_dispatch_sweep}, \ref{tab:agedep_overhead}) unless stated otherwise.}
\label{tab:renewal_layered}
\centering
\begin{tabular*}{\linewidth}{@{}lccc@{}}
\toprule
\textbf{Configuration} & \textbf{Throughput} & \textbf{Incremental} & \textbf{Factor} \\
\midrule
Exact CPU (Phantom Process) & $\sim$70 trans/s        & ---                        & Serial exact \\
Approx.\ CPU (tau-leaping)  & 37.3 Mega-NUPS           & $\sim$$5 \times 10^5\times$ & Algorithmic \\
GPU unfused eager           & 0.67 Giga-NUPS           & $18\times$                  & Parallelism \\
GPU CG $b{=}50$ (unfused)   & 1.87 Giga-NUPS           & $2.8\times$                 & CUDA Graph \\
GPU Fused CG $b{=}50$       & 8.09 Giga-NUPS           & $4.3\times$                 & Fusion \\
\bottomrule
\end{tabular*}
\end{table}

Figure~\ref{fig:throughput_renewal} shows throughput scaling across network sizes. The fused CUDA Graph engine peaks at 5.86~M events/s at $N = 10^6$ (equivalently 8.09 Giga-NUPS; events/s and NUPS differ by the per-step realized-transitions-per-node factor, which is tied to $\varepsilon$ and the instantaneous infected fraction).

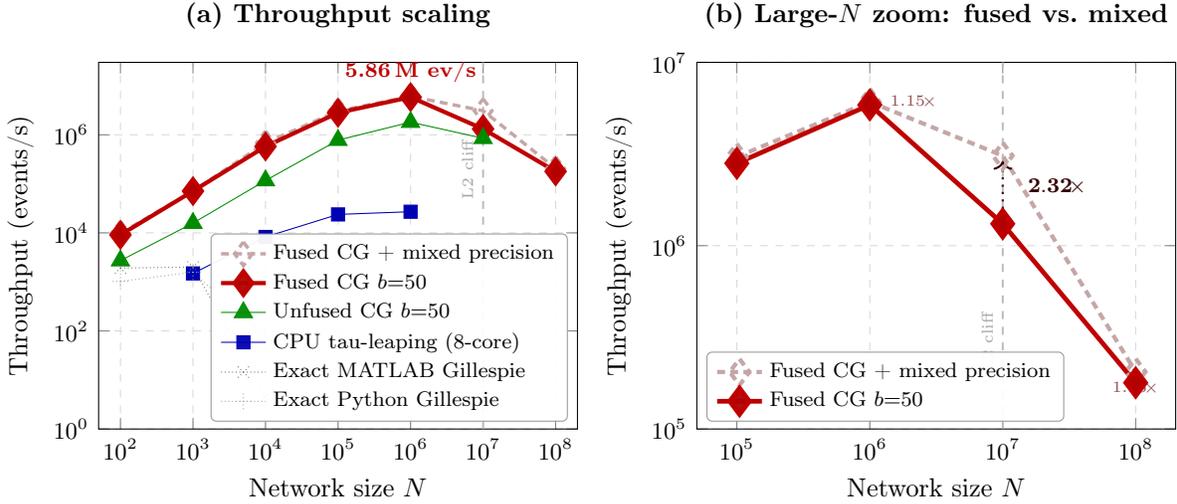
\begin{figure}[H]
\centering
\begin{tikzpicture}

\begin{groupplot}[
    group style={
        group size=2 by 1,
        horizontal sep=1.6cm,
    },
    width=8cm, height=6.5cm,
    xmode=log, ymode=log,
    log basis x=10, log basis y=10,
    grid=major, grid style={dashed,gray!30},
    xlabel={Network size $N$},
    ylabel={Throughput (events/s)},
    xlabel style={font=\small},
    ylabel style={font=\small},
    xticklabel style={font=\footnotesize},
    yticklabel style={font=\footnotesize},
    title style={font=\small\bfseries, yshift=0.1cm},
    legend style={
        legend cell align=left,
        font=\scriptsize,
        fill=white, fill opacity=0.95, draw=black!40,
        text opacity=1, rounded corners=2pt,
    },
]

\nextgroupplot[
    xmin=50, xmax=2e8,
    ymin=1, ymax=3e7,
    xtick={1e2,1e3,1e4,1e5,1e6,1e7,1e8},
    xticklabels={$10^{2}$,$10^{3}$,$10^{4}$,$10^{5}$,$10^{6}$,$10^{7}$,$10^{8}$},
    ytick={1e0,1e2,1e4,1e6},
    yticklabels={$10^{0}$,$10^{2}$,$10^{4}$,$10^{6}$},
    title={(a) Throughput scaling},
    legend style={at={(0.98,0.02)}, anchor=south east},
]

\draw[gray!55, dashed, line width=0.7pt]
    (axis cs:1e7, 1) -- (axis cs:1e7, 3e7);
\node[font=\tiny, color=gray!70, rotate=90, anchor=south]
    at (axis cs:1e7, 2e5) {L2 cliff};

\addplot[
    color=red!35!black!35!white, mark=diamond, mark size=4.5pt,
    line width=1.3pt, densely dashed,
] coordinates {
    (100,        9840.2)
    (1000,      71026.7)
    (10000,    683910.0)
    (100000,  3051268.9)
    (1000000, 6086231.9)
    (10000000, 3094503.4)
    (100000000, 205200.0)
};
\addlegendentry{Fused CG + mixed precision}

\addplot[
    color=red!75!black, mark=diamond*, mark size=4.5pt,
    mark options={fill=red!70!black}, line width=1.6pt,
] coordinates {
    (100,       9091.8)
    (1000,     71096.0)
    (10000,   578859.8)
    (100000, 2813840.6)
    (1000000, 5860999.2)
    (10000000, 1316514.6)
    (100000000, 178800.0)
};
\addlegendentry{Fused CG $b{=}50$}

\addplot[
    color=green!55!black, mark=triangle*, mark size=4pt,
    mark options={fill=green!55!black},
] coordinates {
    (100,       2717.4)
    (1000,     15795.8)
    (10000,   116776.2)
    (100000,  777103.9)
    (1000000, 1811158.3)
    (10000000, 849311.3)
};
\addlegendentry{Unfused CG $b{=}50$}

\addplot[
    color=blue!70!black, mark=square*, mark size=2.5pt,
    mark options={fill=blue!70!black},
] coordinates {
    (1000,     1491.0)
    (10000,    8253.0)
    (100000,  23767.0)
    (1000000, 27000.0)
};
\addlegendentry{CPU tau-leaping (8-core)}

\addplot[
    color=black!55, mark=x, mark size=3pt, densely dotted,
] coordinates {
    (100,   1917.1)
    (1000,  2010.1)
    (10000,    4.5)
} node[pos=1, font=\tiny, below=1pt, black!55] {};
\addlegendentry{Exact MATLAB Gillespie}

\addplot[
    color=black!35, mark=+, mark size=3pt, densely dotted,
] coordinates {
    (100,   1026.7)
    (1000,  1577.0)
    (10000,   72.7)
};
\addlegendentry{Exact Python Gillespie}

\node[font=\scriptsize\bfseries, text=red!75!black, anchor=south]
    at (axis cs:1e6, 7.3e6) {5.86\,M ev/s};

\nextgroupplot[
    xmin=5e4, xmax=2e8,
    ymin=1e5, ymax=1e7,
    xtick={1e5,1e6,1e7,1e8},
    xticklabels={$10^{5}$,$10^{6}$,$10^{7}$,$10^{8}$},
    ytick={1e5,1e6,1e7},
    yticklabels={$10^{5}$,$10^{6}$,$10^{7}$},
    title={(b) Large-$N$ zoom: fused vs.\ mixed},
    legend style={at={(0.02,0.02)}, anchor=south west},
]

\draw[gray!55, dashed, line width=0.7pt]
    (axis cs:1e7, 1e5) -- (axis cs:1e7, 1e7);
\node[font=\tiny, color=gray!70, rotate=90, anchor=south]
    at (axis cs:1e7, 3e5) {L2 cliff};

\addplot[
    color=red!35!black!35!white, mark=diamond, mark size=5pt,
    line width=1.6pt, densely dashed,
] coordinates {
    (100000,   3051268.9)
    (1000000,  6086231.9)
    (10000000, 3094503.4)
    (100000000, 205200.0)
};
\addlegendentry{Fused CG + mixed precision}

\addplot[
    color=red!75!black, mark=diamond*, mark size=5pt,
    mark options={fill=red!70!black}, line width=1.8pt,
] coordinates {
    (100000,  2813840.6)
    (1000000, 5860999.2)
    (10000000, 1316514.6)
    (100000000, 178800.0)
};
\addlegendentry{Fused CG $b{=}50$}

\node[font=\tiny\bfseries, text=red!35!black!70!white, anchor=west]
    at (axis cs:1.2e6, 6.1e6) {$1.15\!\times$};
\draw[->, thick, red!20!black, dotted]
    (axis cs:1e7, 1.45e6) -- (axis cs:1e7, 2.9e6);
\node[font=\scriptsize\bfseries, color=red!20!black, anchor=west]
    at (axis cs:1.3e7, 2.1e6) {$\mathbf{2.32\!\times}$};
\node[font=\tiny\bfseries, text=red!35!black!70!white, anchor=north]
    at (axis cs:1e8, 2.05e5) {$1.15\!\times$};

\end{groupplot}

\end{tikzpicture}
\caption{Renewal engine throughput (SEIR, log-normal, $d{=}8$) on Erd\H{o}s--R\'{e}nyi networks. Panel (a) shows the full scaling range $N \in [10^2, 10^8]$ with six curves: Fused CG $b{=}50$ (red, headline), Fused CG $b{=}50$ with mixed-precision storage (pink dashed; Section~\ref{sec:mixed_precision}), unfused CUDA-Graph $b{=}50$, an 8-core CPU tau-leaping baseline (derived from NUPS via the shared $7.24 \times 10^{-4}$ events-per-NUPS ratio calibrated at $N=10^6$; a property of the algorithm, not the hardware), and two exact event-driven references from the Scoglio et al.\ suite (MATLAB and Python). Panel (b) zooms into $N \in [10^5, 10^8]$ and keeps only the two headline curves so the L2-cliff region is legible. The fused CUDA-Graph engine reaches $5.86$~M events/s at $N=10^6$ (equivalently 8.09~Giga-NUPS; $217\times$ over the CPU tau-leaping baseline at the same $N$). The vertical gap between the red (Fused CG) and green (unfused CG) curves at $N = 10^6$ is the $3.24\times$ fusion gain in events/s (equivalently $4.3\times$ in NUPS; see Table~\ref{tab:renewal_layered}), isolating the benefit of collapsing the CSR + hazard + Bernoulli + infectivity write-back pipeline into a single Triton launch from the separate batching benefit of CUDA Graph capture. At $N=10^7$, where the fp32 CSR + state working set exceeds the A100's 40~MB L2 cache, the baseline collapses to $1.33$~M events/s while mixed precision holds at $3.10$~M events/s, a $\mathbf{2.32\times}$ lift; the narrower per-node dtypes reduce the per-step working-set footprint by $\approx 3\times$ and effectively push the L2 cliff out by the same factor. At $N=10^8$ both configurations are deep in HBM-bandwidth-bound execution (working sets $\sim 10$~GB and $\sim 6$~GB respectively, both $\gg$ 40~MB L2) and the gain collapses back toward the bandwidth-ratio asymptote at $1.15\times$. The mechanism of the cliff is analysed in Appendix~\ref{app:memory_flow} and Figure~\ref{fig:memory_flow}. The earlier GPU-unfused-eager curve is omitted from this figure because it is a pedagogical pre-CUDA-Graph baseline whose contribution is already captured by the CPU-to-CG gap and discussed in prose (Section~\ref{sec:fused_kernel}). The repository README lists the measurement scripts and CSV artefacts for each curve.}
\label{fig:throughput_renewal}
\end{figure}

At $N = 10^7$, all engines experience a throughput drop as CSR data ($\sim$640~MB) exceeds the A100's 40~MB L2 cache, a phenomenon we term the ``L2 cache cliff.'' In events/s the fused CUDA Graph engine drops $4.4\times$ ($5.86 \to 1.32$ M events/s), the unfused CUDA Graph engine drops $2.1\times$ ($1.81 \to 0.85$ M events/s), and the unfused eager engine drops $1.75\times$ ($1.32 \to 0.76$ M events/s); the fused variant is hit hardest at the cliff because it was most bandwidth-saturated beforehand. Replacing binary state checks with continuous age-dependent infectivity (Eq.~\ref{eq:infectivity}) is nearly free once the infectivity write is fused into the main kernel: $+1.8\%$ wall-clock time on a coalesced regular graph and $+0.1\%$ on a scale-free graph (Table~\ref{tab:agedep_overhead}, discussed in Section~\ref{sec:discussion}). A naive pre-pass implementation is $25$--$50\times$ more expensive and makes the abstraction non-free on regular graphs.

To evaluate the framework under extreme degree heterogeneity, we benchmarked the fused CG engine on a Barab\'{a}si--Albert scale-free network ($N = 10^6$, $m = 4$, yielding average degree $d \approx 8$ for an apples-to-apples memory footprint comparison with the ER baseline). Under the default 1-thread-per-node kernel, throughput dropped from 7.6 to 0.44 Giga-NUPS: hub nodes (maximum degree 3{,}870, $D_{\max}/D_{\mathrm{avg}} \approx 484$) induce severe warp divergence during CSR traversal, destroying memory coalescing. To close this gap we implemented two alternative CSR traversal strategies (Section~\ref{sec:csr_dispatch}, Table~\ref{tab:csr_dispatch}); the best of these (edge-partitioned merge-based load balancing) reaches 2.0 Giga-NUPS on the same BA workload, a $4.5\times$ speedup over the 1-thread-per-node kernel on scale-free graphs while preserving the peak throughput on regular graphs via runtime dispatch.

\label{sec:validation}
We validate the fused kernel against an exact non-Markovian Gillespie simulator \citep{boguna2014simulating} on a standard SEIR benchmark ($N = 1000$, Erd\H{o}s--R\'{e}nyi graph with $d=8$, $\beta = 0.25$, log-normal transitions, 100 independent runs). Figure~\ref{fig:validation} shows close agreement for all four SEIR compartments.

\begin{figure}[H]
\centering
\includegraphics[width=0.95\columnwidth]{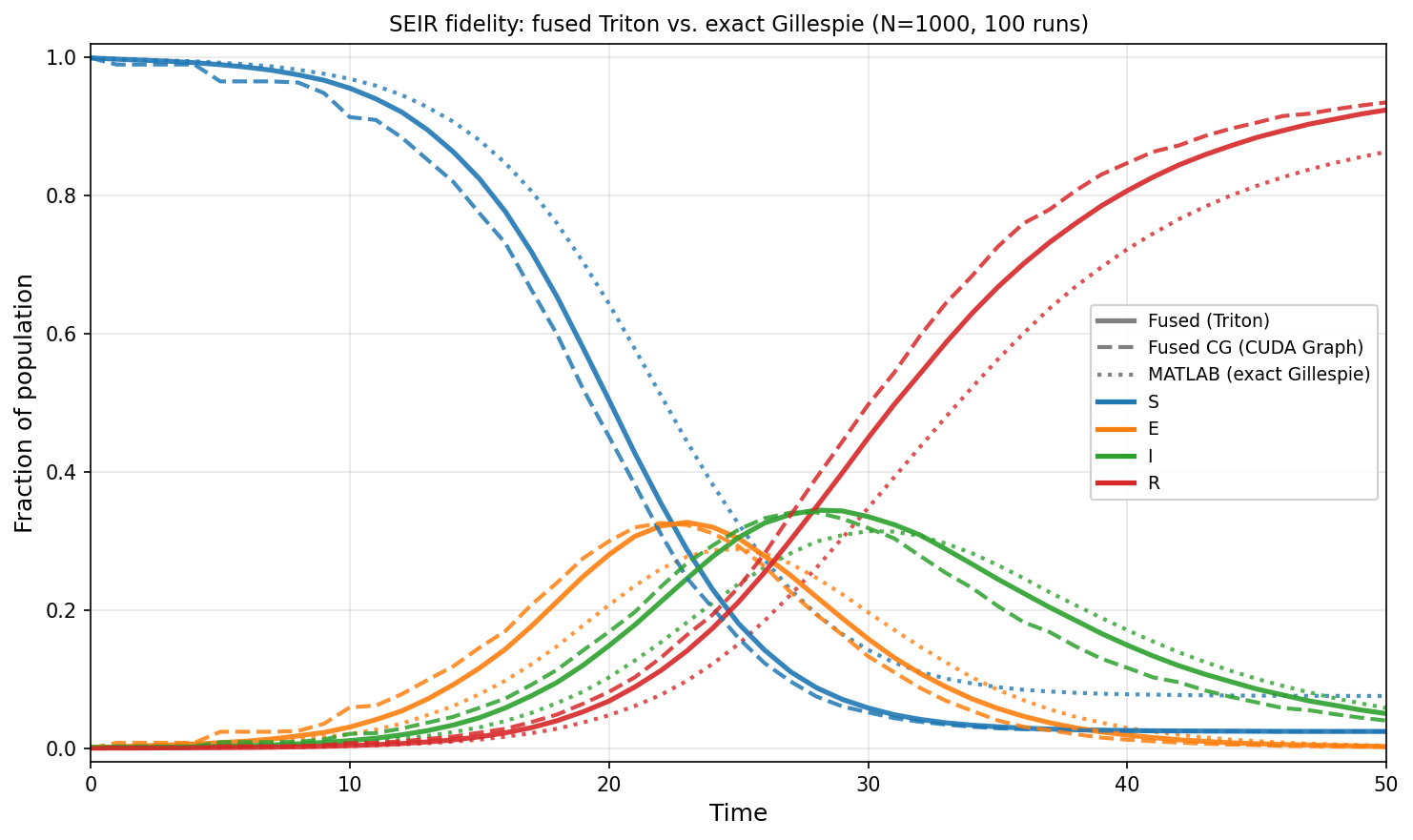}
\caption{Fidelity validation. Fused GPU kernel (solid), fused CUDA Graph (dashed), and exact non-Markovian Gillespie (dotted) on the validation benchmark: Erd\H{o}s--R\'{e}nyi graph with $N = 10^3$ nodes and average degree $d = 8$; $\beta = 0.25$; log-normal $E \to I$ with mean $5.0$~d and median $4.0$~d; log-normal $I \to R$ with mean $7.5$~d and median $5.0$~d; 100 independent Monte Carlo runs. The residual gap visible against the exact reference is the structural-bias floor of synchronous Bernoulli updates on a network, quantified in detail in Appendix~\ref{app:fidelity}, Figure~\ref{fig:fidelity_err_eps}.}
\label{fig:validation}
\end{figure}

Table~\ref{tab:epsilon_sweep} presents a convergence sweep of the tolerance parameter $\varepsilon \in [0.005, 0.1]$. The discrepancy in peak infection ($\sim$6\%) and final attack rate ($\sim$7\%) relative to the exact baseline does not detectably decrease as $\varepsilon \to 0$ across the sweep: Figure~\ref{fig:fidelity_err_eps} shows that the per-run bootstrap 95\% CIs overlap across adjacent $\varepsilon$ values, and the Pareto CIs of Figure~\ref{fig:fidelity_pareto} confirm the same pattern on both axes. We interpret this as a structural bias inherent to synchronous Bernoulli updates on contact networks: because all transitions within $\Delta t$ are applied simultaneously rather than sequentially, the temporal correlation of the expanding epidemic wavefront is slightly altered. Given that empirical epidemiological parameters routinely exhibit uncertainties exceeding 20\%, this bounded bias is a favourable tradeoff. We use $\varepsilon = 0.03$ as the default operating point throughout the main text because it sits in the flat portion of the Pareto curve while remaining conservative for publication-quality single runs; Appendix~\ref{app:fidelity} discusses the $\varepsilon = 0.1$ regime for bulk ensemble sampling and the boundary conditions under which $\varepsilon < 0.03$ is worth the compute.

\begin{table}[H]
\caption{Convergence sweep: tau-leaping accuracy vs.\ exact Gillespie ($N=1000$, 100 runs, mean $\pm$ 95\% CI).}
\label{tab:epsilon_sweep}
\centering
\begin{tabular*}{\linewidth}{@{}ccccc@{}}
\toprule
$\varepsilon$ & Peak $I$ & Final $R$ & Steps & Time (s) \\
\midrule
0.005 & $0.379 \pm 0.008$ & $0.935 \pm 0.019$ & 5912 & 1.94 \\
0.01 & $0.383 \pm 0.008$ & $0.934 \pm 0.019$ & 2912 & 0.96 \\
0.03 & $0.378 \pm 0.011$ & $0.925 \pm 0.026$ & 995 & 0.33 \\
0.05 & $0.380 \pm 0.011$ & $0.925 \pm 0.026$ & 696 & 0.23 \\
0.1 & $0.379 \pm 0.011$ & $0.923 \pm 0.026$ & 523 & 0.17 \\
\midrule
Exact & $0.314$ & $0.863$ & --- & --- \\
\bottomrule
\end{tabular*}
\end{table}

Figure~\ref{fig:roofline} and Table~\ref{tab:roofline} characterize the computational profile on the A100 (19.5~TFLOPS FP32, 2039~GB/s HBM bandwidth; ridge point at $9.56$ FLOPs/byte). All configurations operate in the memory-bound regime (left of the ridge). The fused kernel achieves $555$~GFLOPS at arithmetic intensity $0.42$, which is $555 / (0.42 \times 2039) = 65\%$ of the memory-bound ceiling evaluated at the fused kernel's own AI, equivalent to sustaining $\approx 1320$~GB/s out of the A100's $2039$~GB/s HBM peak --- a strong bandwidth-utilisation number for an irregular CSR workload with divergent hazard evaluation. Said differently: if we reported efficiency against the unfused baseline's AI of $0.66$ (as one might do to answer ``how much of the pre-fusion roofline did the fused kernel recover?''), the same $555$~GFLOPS would read as $41\%$ of $0.66 \times 2039 = 1346$~GFLOPS; that is the number the earlier draft quoted and it is not wrong, but it measures a comparison against the wrong ceiling. Table~\ref{tab:roofline} now reports both. The arithmetic intensity decreases from $0.66$ to $0.42$ because block-scalar sparsity removes wasted $\mathrm{erfcx}$ FLOPs for inert blocks faster than fusion removes memory traffic; at the fused AI essentially all compute is useful. Layering mixed-precision storage on top of the fused kernel (Section~\ref{sec:mixed_precision}) shifts the operating point up-and-right along the memory-bound line: the AI climbs from $0.42$ to $\approx 0.62$ because the same algorithmic FLOPs are now amortised over $\sim\!32\%$ fewer bytes of per-step HBM traffic, and the achieved throughput climbs by the measured $1.15\times$ lift from $555$ to $638$~GFLOPS. This is the expected signature of a memory-bound kernel being pushed toward (but not past) the ridge by a byte-band compression; the gain is sub-linear in the byte reduction because kernel launch, index arithmetic, and rate-buffer writes do not scale with storage dtype.

\begin{table}[H]
\caption{Roofline benchmark results on A100. CG $=$ CUDA Graph, $b$ $=$ batch size. ``Eff.\ at own AI'' is $\text{GFLOPS} / (\text{AI} \times 2039)$, the fraction of the memory-bound ceiling evaluated at the configuration's actual arithmetic intensity. ``Eff.\ vs unfused AI'' is $\text{GFLOPS} / (0.66 \times 2039) = \text{GFLOPS} / 1346$~GFLOPS, the fraction of the pre-fusion memory-bound ceiling --- useful for comparing fused and unfused kernels against a single reference, but unfavourable to the fused kernel because its effective AI is lower after block-scalar sparsity. Peak bandwidth utilisation is $\text{GFLOPS}/\text{AI}$ (last column). $^\dagger$ The mixed-precision Fused CG row is derived from the measured $1.15\times$ throughput lift on the same ER $d{=}8$, $N=10^6$ run (Section~\ref{sec:mixed_precision}, Table~\ref{tab:mixed_precision_summary}) and from the $\sim\!32\%$ per-step memory-traffic reduction that follows directly from the state/age/infectivity/weights dtype changes (FLOPs count is unchanged since all kernel math remains fp32). The narrower working set pushes the operating point up and right on the roofline (Figure~\ref{fig:roofline}); bandwidth utilisation decreases because the kernel now has less memory work to issue per step rather than because it becomes less efficient.}
\label{tab:roofline}
\centering
\scriptsize
\setlength{\tabcolsep}{3pt}
\begin{tabular*}{\linewidth}{@{\extracolsep{\fill}}lcccccc@{}}
\toprule
\textbf{Configuration} & \textbf{AI} & \textbf{GFLOPS} & \textbf{ms/step} & \textbf{Eff.\ (own AI)} & \textbf{Eff.\ (vs unfused)} & \textbf{BW used} \\
\midrule
Baseline (unfused)  & 0.66 & 49.5  & 1.49 & 3.7\%  & 3.7\%  & 75~GB/s   \\
CG $b{=}50$ (unfused) & 0.66 & 137.0 & 0.54 & 10.2\% & 10.2\% & 208~GB/s  \\
Fused CG $b{=}50$   & 0.42 & 555   & 0.12 & \textbf{64.8\%} & 41.2\% & \textbf{1321~GB/s} \\
Fused CG + mixed$^\dagger$ & 0.62 & 638 & 0.10 & 50.5\% & 47.4\% & 1029~GB/s \\
\bottomrule
\end{tabular*}
\end{table}

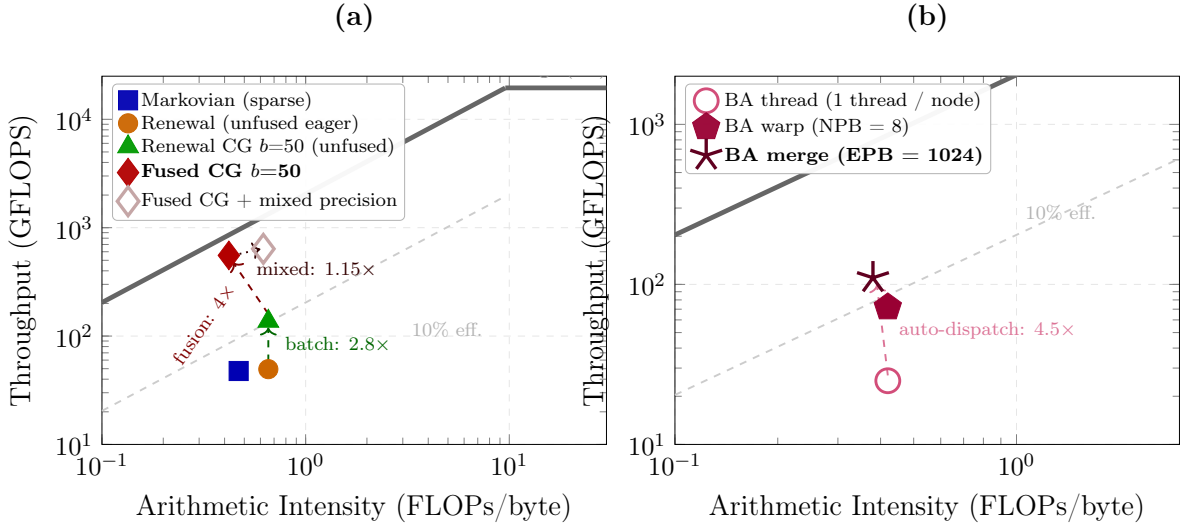
\begin{figure}[H]
\centering
\begin{tikzpicture}

\begin{axis}[
    name=panelA,
    width=9cm, height=7cm,
    xmode=log, ymode=log,
    log basis x=10, log basis y=10,
    xmin=0.1, xmax=30,
    ymin=10, ymax=25000,
    grid=major, grid style={dashed,gray!20},
    xlabel={Arithmetic Intensity (FLOPs/byte)},
    ylabel={Throughput (GFLOPS)},
    xlabel style={font=\large}, ylabel style={font=\large},
    xticklabel style={font=\normalsize},
    yticklabel style={font=\normalsize},
    title={(a)},
    title style={font=\large\bfseries, yshift=0.3cm},
    legend style={
        at={(0.02,0.98)}, anchor=north west,
        legend cell align=left, font=\scriptsize,
        fill=white, fill opacity=0.95, draw=black!30,
        text opacity=1, rounded corners=2pt,
        inner sep=2pt, nodes={inner sep=1pt},
    },
]

\addplot[domain=0.1:9.56, samples=50, color=black!60, line width=2pt, forget plot] {2039 * x};
\addplot[domain=9.56:30, samples=10, color=black!60, line width=2pt, forget plot] {19500};
\node[font=\scriptsize, text=black!50, anchor=south west] at (axis cs:9.56,19500) {ridge (9.6)};
\addplot[domain=0.1:9.56, samples=50, color=gray!40, line width=0.8pt, dashed, forget plot] {2039 * x * 0.1};
\node[font=\scriptsize, text=gray!60, anchor=south west] at (axis cs:3,80) {10\% eff.};

\addplot[only marks, mark=square*, mark size=4pt, color=blue!70!black,
         mark options={fill=blue!70!black}]
         coordinates {(0.469, 47.6)};
\addplot[only marks, mark=*, mark size=4pt, color=orange!80!black,
         mark options={fill=orange!80!black}]
         coordinates {(0.656, 49.5)};
\addplot[only marks, mark=triangle*, mark size=5pt, color=green!60!black,
         mark options={fill=green!60!black}]
         coordinates {(0.656, 137.0)};
\addplot[only marks, mark=diamond*, mark size=6pt, color=red!70!black,
         mark options={fill=red!70!black}]
         coordinates {(0.42, 555)};
\addplot[only marks, mark=diamond, mark size=6pt, color=red!35!black!35!white,
         line width=1.4pt]
         coordinates {(0.62, 638)};


\draw[->, thick, red!50!black, dashed]
    (axis cs:0.656, 160) -- (axis cs:0.44, 480)
    node[midway, anchor=south east, font=\scriptsize, text=red!50!black, rotate=55]
    {fusion: $4\times$};
\draw[->, thick, green!40!black, dashed]
    (axis cs:0.656, 58) -- (axis cs:0.656, 120)
    node[midway, anchor=west, font=\scriptsize, text=green!40!black, xshift=3pt]
    {batch: $2.8\times$};
\draw[->, thick, red!20!black, dotted]
    (axis cs:0.44, 555) -- (axis cs:0.60, 635)
    node[midway, anchor=north west, font=\scriptsize, text=red!20!black]
    {mixed: $1.15\times$};

\legend{
    Markovian (sparse),
    Renewal (unfused eager),
    Renewal CG $b$=50 (unfused),
    \textbf{Fused CG $b$=50},
    Fused CG + mixed precision
}

\end{axis}

\begin{axis}[
    at={(panelA.right of south east)}, anchor=south west, xshift=10mm,
    width=9cm, height=7cm,
    xmode=log, ymode=log,
    log basis x=10, log basis y=10,
    xmin=0.1, xmax=3,
    ymin=10, ymax=2000,
    grid=major, grid style={dashed,gray!20},
    xlabel={Arithmetic Intensity (FLOPs/byte)},
    ylabel={Throughput (GFLOPS)},
    xlabel style={font=\large}, ylabel style={font=\large},
    xticklabel style={font=\normalsize},
    yticklabel style={font=\normalsize},
    title={(b)},
    title style={font=\large\bfseries, yshift=0.3cm},
    legend style={
        at={(0.02,0.98)}, anchor=north west,
        legend cell align=left, font=\scriptsize,
        fill=white, fill opacity=0.95, draw=black!30,
        text opacity=1, rounded corners=2pt,
        inner sep=2pt, nodes={inner sep=1pt},
    },
]

\addplot[domain=0.1:3, samples=50, color=black!60, line width=2pt, forget plot] {2039 * x};
\addplot[domain=0.1:3, samples=50, color=gray!40, line width=0.8pt, dashed, forget plot] {2039 * x * 0.1};
\node[font=\scriptsize, text=gray!60, anchor=south west] at (axis cs:1,220) {10\% eff.};

\addplot[only marks, mark=o, mark size=5pt, color=purple!70, line width=1.2pt]
    coordinates {(0.42, 25)};
\addplot[only marks, mark=pentagon*, mark size=6pt, color=purple!80!black,
         mark options={fill=purple!80!black}]
    coordinates {(0.42, 72)};
\addplot[only marks, mark=star, mark size=7pt, color=purple!50!black, line width=1.4pt]
    coordinates {(0.38, 110)};


\draw[->, thick, purple!55, dashed]
    (axis cs:0.42, 27) -- (axis cs:0.39, 100)
    node[midway, anchor=west, font=\scriptsize, text=purple!55, xshift=3pt]
    {auto-dispatch: $4.5\times$};

\legend{
    BA thread (1 thread / node),
    BA warp (NPB = 8),
    \textbf{BA merge (EPB = 1024)}
}

\end{axis}
\end{tikzpicture}
\caption{Roofline analysis on NVIDIA A100 (peak 19.5 TFLOPS FP32 / 2039~GB/s HBM, ridge $= 19500/2039 \approx 9.56$ FLOPs/byte). In panel (a), kernel fusion shifts the operating point upward by $4\times$; CUDA Graph batching adds $2.8\times$. Enabling mixed-precision storage on top of Fused CG (Section~\ref{sec:mixed_precision}) shifts the point up-and-right by the $\sim\!32\%$ per-step byte reduction: the AI grows from $0.42$ to $\approx 0.62$ (same FLOPs, fewer bytes), and the achieved throughput grows by the measured $1.15\times$ (from 555 to 638 GFLOPS). All configurations lie firmly in the memory-bound regime (left of the ridge), but mixed precision moves the kernel closer to compute-bound territory, which is why the incremental throughput win is less than the $1.32\times$ memory-bytes ratio would predict. Panel (b) shows the CSR-strategy dispatch on BA $m{=}4$ at the same scale.}
\label{fig:roofline}
\end{figure}

\paragraph{Validation scope.}
We validate the fused renewal kernel on SEIR because SEIR exercises all three transition types used by the framework: edge-mediated Markovian $S \to E$, non-Markovian renewal $E \to I$, and non-Markovian renewal $I \to R$. Section~\ref{sec:markov_validation} adds a complementary Markovian validation on the canonical SIS and SIR benchmarks. Extending the renewal kernel to non-Markovian SIS/SIR requires only edits to the compartment transition map; the CSR traversal, hazard kernel, Bernoulli sampler, CUDA Graph capture path, and auto-dispatch logic are all model-agnostic.

\subsection{Markovian engine validation (SIS / SIR)}
\label{sec:markov_validation}

Although the renewal engine is the novel contribution of this paper, the dual-engine framing also requires that the companion Markovian engine (Section~\ref{sec:markov_engine}) is independently validated --- both because Markovian SIS/SIR are the canonical benchmarks for the control-theoretic uses of this framework \citep{nowzari2016analysis, watkins2019robust, preciado2014optimal}, and because the Markovian engine is the reference implementation behind the CPU tau-leaping baseline discussed earlier in this section.

We run the same single-graph validation protocol used for the renewal engine, but against an exact \emph{Doob--Gillespie direct-method} reference (not the generalised non-Markovian Gillespie, which degenerates to Doob--Gillespie on exponential holding times). Both the FlashSpread Markovian engine and the exact simulator are run on an Erd\H{o}s--R\'{e}nyi graph with $N = 10^3$, $d = 8$, $\beta = 0.25$, an initial seed of 10 infected nodes, recovery rate $\delta = \gamma = 0.15$, and 100 independent Monte Carlo runs; the tau-leaping sampler uses \texttt{max\_prob} $= 0.1$ and \texttt{theta} $= 0.01$, which are the defaults provided by the public engine constructor. Figure~\ref{fig:sis_sir_validation} overlays the two simulators on SIS (endemic steady state, left panel) and SIR (single epidemic wave, right panel). On both models the tau-leaping ensemble mean lies inside the exact Gillespie 25--75\% inter-quartile band at every time point of the 100-point sample grid without any further tuning of the sampler.

\begin{figure}[H]
\centering
\includegraphics[width=0.98\columnwidth]{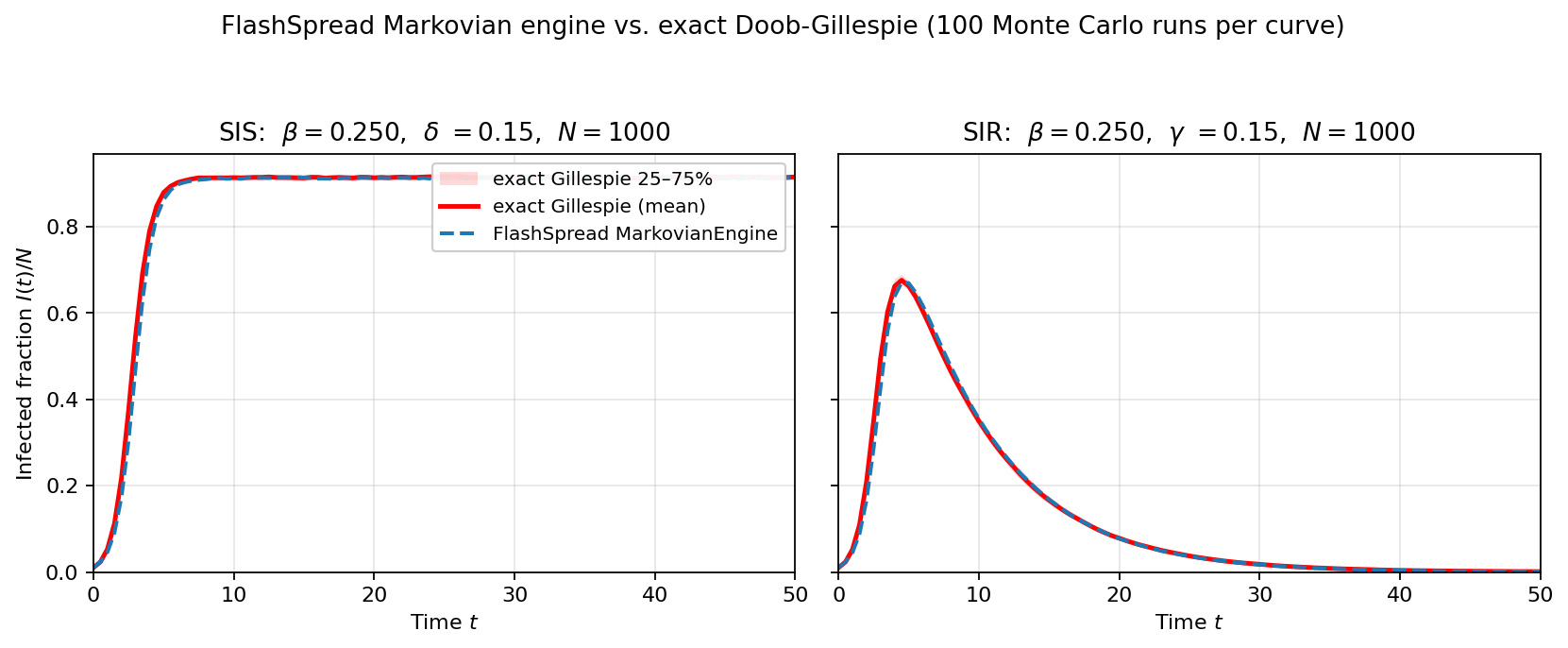}
\caption{\flashspread{} Markovian engine (blue dashed) versus exact Doob--Gillespie (red solid, with 25--75\% quantile band) on ER $d = 8$, $N = 10^3$, $\beta = 0.25$, recovery rate $0.15$. Left: SIS (endemic steady state). Right: SIR (single epidemic wave). 100 independent Monte Carlo runs per curve. The tau-leaping curve tracks the exact reference across the full trajectory on both models, confirming that the Markovian engine's $\varepsilon$-selected step sampler operates correctly in both the sparse-event regime (early-time SIS and the tail of SIR) and the dense-event regime (endemic SIS and the SIR peak).}
\label{fig:sis_sir_validation}
\end{figure}

Appendix~\ref{app:sis_sir} gives the full protocol, reproducibility pointer, and the deeper discussion of the regime separation: SIS tests long-horizon balance between infection and recovery events, SIR tests a single-wave takeoff with an explicit end-state. Combined with the non-Markovian SEIR validation earlier in this section and the multi-topology renewal sweep of Appendix~\ref{app:fidelity_multi}, the framework is validated on all three of SIS, SIR, and SEIR against an exact reference within the relevant regime (Doob--Gillespie for the two Markovian models, generalised non-Markovian Gillespie for SEIR).

\subsection{Reproducibility}
\label{sec:repro}

Every result in this paper is reproducible from the repository at \url{https://github.com/Shakeri-Lab/FlashSpread} against the commit tagged \texttt{jocs-submission}. The reference environment, the per-figure and per-table script map with expected wall-clock budgets on the reference A100 node, and the per-curve CSV sources behind Figure~\ref{fig:throughput_renewal} and Table~\ref{tab:renewal_layered} are all maintained in the repository README (``Reproducing the Paper'' section) so that they stay in sync with the code rather than the manuscript. Random seeds are fixed across scripts, so the JSON and figure artefacts regenerate bitwise-identically on the reference node.

\section{Discussion}\label{sec:discussion}

The dual-engine architecture reflects a fundamental hardware duality (Table~\ref{tab:complexity}). The Markovian engine exploits event-driven sparsity at $\bigO(K \cdot D_{\mathrm{avg}})$; the renewal engine faces inherently dense, memory-bandwidth-limited workloads requiring IO-aware kernel fusion. The appropriate strategy for each regime is fundamentally different, validating the dual-engine design.

\begin{table}[H]
\centering
\caption{Complexity comparison of simulation strategies.}
\label{tab:complexity}
\begin{tabular*}{\linewidth}{@{}lcc@{}}
\toprule
\textbf{Algorithm} & \textbf{Per-Step} & \textbf{Parallel} \\
\midrule
c-GEMF / FastGEMF & $\bigO(D_{\max} \log N)$ & Serial \\
Phantom Process & $\bigO(N + D_{\max})$ & Serial \\
\midrule
Markov (Inertial) & $\bigO(K \cdot D_{\text{avg}} / P)$ & $P \sim 10^4$ \\
Markov (Control) & $\bigO((N + E) / P)$ & $P \sim 10^4$ \\
Renewal (Fused) & $\bigO((N + E) / P)$ & $P \sim 10^4$ \\
\bottomrule
\end{tabular*}
\end{table}

A key conceptual contribution is recognizing that our tau-leaping $\Delta t_{\max}$ serves the same role as the Phantom-Process's phantom rate $\lambda_0$ \citep{vajdi2020stochastic}: both mechanisms force periodic hazard re-evaluation when rates change continuously. The difference is computational paradigm: Phantom Process is exact and serial; tau-leaping is approximate and massively parallel.

The layered optimization decomposition (Table~\ref{tab:renewal_layered}) reveals a striking latency-to-bandwidth crossover. At $N \leq 10^4$, CUDA Graph batching provides $5\times$ while fusion provides only $3.5\times$. At $N \geq 10^5$, the relationship reverses: reducing VRAM traffic via kernel fusion becomes more important than reducing dispatch overhead. When both are combined, execution reaches the irreducible CSR traversal cost of $8E$ bytes per step.

The source-node approximation (Section~\ref{sec:source_node_approx}) effectively acts as a zero-cost abstraction, but only because its overhead is itself fused into the main kernel. A naive implementation that precedes the fused kernel with a dense PyTorch infectivity pre-pass (one $\texttt{torch.special.erfcx}$ call per node, masked to $I$ nodes afterward) increases overall wall-clock time by $45.5\%$ on a regular degree-8 graph with highly coalesced memory access: the dense $\bigO(N)$ sweep through \texttt{erfcx} runs at a point where the SM ALUs are \emph{not} idle, and the ``shadow of memory latency'' argument does not apply. Moving the infectivity update into the fused kernel itself --- each thread writes $\beta \cdot h_{IR}(\tau_i^{\,\prime})$ (or $\beta$ for constant shedding) to the next step's \texttt{infectivity} buffer immediately after computing its new age $\tau_i^{\,\prime}$, guarded by block-scalar sparsity so that blocks with no $I$-successor skip the \texttt{erfcx} entirely --- reduces the overhead to $+1.8\%$ on the regular graph and $+0.1\%$ on a Barab\'{a}si--Albert scale-free graph of the same size and mean degree (Table~\ref{tab:agedep_overhead}). The remaining arithmetic genuinely does execute in the shadow of CSR memory traffic. This is also why the pre-optimization and post-optimization constant-$\beta$ throughputs differ slightly (7.1 vs.\ 7.6 Giga-NUPS on the regular graph): eliminating the per-step pre-pass removes one kernel launch and the associated read-modify-write on the infectivity buffer.

\begin{table}[H]
\caption{Throughput cost of the source-node age-dependent shedding approximation (Fused CG, $b=50$, $N=10^6$, mean degree 8). ``Pre-pass'' runs \texttt{erfcx} over all $N$ nodes in PyTorch before the fused kernel; ``in-kernel'' fuses the infectivity write into the fused kernel's tail so only lanes that will be $I$ next step pay the \texttt{erfcx} cost.}
\label{tab:agedep_overhead}
\centering
\begin{tabular*}{\linewidth}{@{}llccc@{}}
\toprule
\textbf{Graph} & \textbf{Implementation} & \textbf{Constant $\beta$} & \textbf{Age-dep.\ $s(\tau)$} & \textbf{Slowdown} \\
\midrule
Regular ($d=8$)     & Pre-pass   & 7.10 G-NUPS & 4.88 G-NUPS & $+45.5\%$ \\
Regular ($d=8$)     & In-kernel  & 7.61 G-NUPS & 7.48 G-NUPS & $+1.8\%$  \\
\midrule
Barab\'{a}si--Albert ($m=4$) & Pre-pass   & 0.434 G-NUPS & 0.416 G-NUPS & $+4.2\%$ \\
Barab\'{a}si--Albert ($m=4$) & In-kernel  & 0.435 G-NUPS & 0.434 G-NUPS & $+0.1\%$ \\
\bottomrule
\end{tabular*}
\end{table}

Several directions remain open. GPU memory limits network size to $N \lesssim 10^8$; multi-GPU domain decomposition would be needed beyond this scale. The current implementation assumes fixed network topology; temporal networks with independently forming and breaking edges would require per-edge age tracking at $\bigO(E)$ cost, negating the source-node approximation. The auto-dispatch strategy of Section~\ref{sec:csr_dispatch} addresses the classic warp-divergence problem on scale-free graphs ($4.5\times$ over the default kernel on BA at $N = 10^6$) via edge-partitioned merge-based load balancing \citep{merrill2016merge}; a further level would be an intra-warp segmented reduction (using low-level CUDA \texttt{\_\_shfl\_up\_sync} primitives \citep{nvidia2024cudagraphs}) that collapses same-source partial pressures inside each warp so that only the leader and tail lanes of a segment issue an \texttt{atomicAdd} to the shared pressure buffer, eliminating most hub contention. We omitted this optimisation from the public release because it introduces a custom CUDA C++ dependency and a \texttt{torch.utils.cpp\_extension} build system, trading the current pure-Triton / pip-installable distribution for a back-of-envelope $\approx 1.3$--$1.8\times$ speedup in one specific topology regime (BA-hub). The range is an upper-bound estimate, not a measurement: it follows from Amdahl's-style accounting on our observed atomic-serialisation fraction on BA hubs (the merge kernel's \texttt{atomic\_add} into the shared pressure buffer currently sustains 1--2~GB/s on same-destination contention versus the $\sim$1.3~TB/s headroom used elsewhere in the step) combined with the expected fraction of same-source runs per warp produced by the row-sorted CSR chunk. A prototype implementation would be required to confirm it; we have not produced one. The tradeoff is unfavourable for a user-facing framework whose portability (including future AMD ROCm targets, which Triton handles natively) is a first-order design property, so we flag the segmented-reduction path as \emph{known-useful future work} rather than an omitted optimisation. Finally, Appendix~\ref{app:fidelity} characterises the $\varepsilon$-independent structural-bias floor empirically; a formal convergence analysis as a function of $\varepsilon$, network spectral gap, and degree distribution remains open and would tighten the honesty of the current ``does not detectably decrease'' framing into a provable bound.

\section{Conclusion}\label{sec:conclusion}

We presented \flashspread{}, a GPU framework for non-Markovian epidemic simulation on networks. The fused renewal engine consolidates the entire per-step pipeline into a single Triton kernel, reaching 8.09 Giga-NUPS on a $10^6$-node regular graph on an A100 ($217\times$ over the optimised CPU tau-leaping baseline at the same $N$). Degree-aware CSR dispatch --- auto-selected from $D_{\max}/D_{\mathrm{avg}}$ at engine construction --- adds a $4.5\times$ recovery on scale-free graphs without regressing the regular-graph peak (Section~\ref{sec:csr_dispatch}, Appendix~\ref{app:degree_dispatch}). The source-node approximation (Section~\ref{sec:source_node_approx}) makes age-dependent shedding essentially free once the infectivity update is folded into the fused kernel: $\leq 2\%$ overhead on both regular and scale-free graphs (Table~\ref{tab:agedep_overhead}).

Validation against an exact non-Markovian Gillespie reference shows a $\sim$6\% structural-bias floor in peak-$I$ that does not detectably decrease across two decades of tolerance on our benchmarks (Appendix~\ref{app:fidelity}). Combined with the CPU tau-leaping baseline that cleanly isolates the algorithmic relaxation from the hardware speedup, this supports the framework as a rigorous foundation for ensemble forecasting and policy evaluation under biologically realistic non-Markovian dynamics at scales that were previously confined to the CPU era.

\section*{Acknowledgments}

Computational resources were provided by the University of Virginia
Research Computing.

\appendix
\appendix

\section{Triton erfcx approximation}\label{app:erfcx}

Because the GPU kernel compiler (Triton) lacks a native $\mathrm{erfcx}$ instruction, we implemented a piecewise approximation inside the fused kernel that combines a direct identity for $|z| \leq 3.5$ with a four-term asymptotic expansion for $|z| > 3.5$. For $z \geq 0$:
\begin{equation}
    \mathrm{erfcx}(z) \approx \begin{cases}
        e^{z^2}(1 - \mathrm{erf}(z)) & |z| \leq 3.5 \\[4pt]
        \displaystyle\frac{1}{z\sqrt{\pi}}\left(1 - \frac{1}{2z^2} + \frac{3}{4z^4} - \frac{15}{8z^6}\right) & |z| > 3.5
    \end{cases}
\end{equation}
For $z < 0$, we use the identity $\mathrm{erfcx}(z) = 2e^{z^2} - \mathrm{erfcx}(-z)$. For $|z| > 9$, the asymptotic form is applied unconditionally to prevent float32 overflow in the $e^{z^2}$ computation ($e^{81} \approx 5.3 \times 10^{35}$ is safe; $e^{88.7}$ exceeds float32 range). As $\tau \to 0$ after a renewal reset, $z \to -\infty$ and $\mathrm{erfcx}(z) \to \infty$, yielding $h_{\mathrm{LN}} \to 0$, which correctly reflects the biological reality that transition probability is zero immediately after entering a state.

We validate the approximation on a dense fp32-safe grid $z \in [-9, 30]$ (10{,}001 points); the upper negative bound is set by $\exp(z^2)$ overflowing single-precision at $|z| \gtrsim 9.4$, not by a limitation of the formula. The measured maximum relative error is $\sim$$4 \times 10^{-2}$ at $z \approx 3.5$, driven by catastrophic cancellation in the fp32 evaluation of $1 - \mathrm{erf}(z)$ precisely at the branch-switch, and drops to $\sim$$6 \times 10^{-3}$ away from the branch boundary. Because these errors enter the Bernoulli step as $1 - \exp(-\lambda_i \Delta t)$ with $\lambda_i \Delta t \lesssim \varepsilon = 0.03$, the induced bias in transition probability stays well below the structural-bias floor of synchronous tau-leaping itself ($\sim$6--7\%, Section~\ref{sec:validation}: $\sim$6\% peak-$I$ and $\sim$7\% final-$R$), so the approximation is comfortably within the tolerance of the stochastic integrator. A unit test (cited in the repository README) reproduces these bounds and guards against regressions. Random number generation uses Triton's native PRNG, seeded via a 64-bit host-managed step counter that increments by 1 per simulation step, preventing pattern repetition for over $2^{31}$ steps.

\section{Degree-heterogeneous graphs and auto-dispatch}
\label{app:degree_dispatch}

The 1-thread-per-node fused kernel introduced in Section~\ref{sec:fused_kernel} is optimal when the degree distribution is narrow, but degrades sharply on heavy-tailed topologies (scale-free, preferential-attachment, many real social networks). This appendix documents the two alternative CSR traversal kernels we provide, the auto-dispatch heuristic, and the measured gains, all consistent with Section~\ref{sec:csr_dispatch} and Table~\ref{tab:csr_dispatch}.

\subsection{Why the default kernel stalls on hubs}

In the 1-thread-per-node kernel, thread $i$ in a block of 128 threads iterates its own $D_i$ neighbors with a scalar \texttt{while} loop; all 128 threads in the block proceed to the per-node tail only when every thread's \texttt{while} has terminated. The block's wall-clock runtime is therefore $\bigO(\max_{i \in \text{block}} D_i)$. On a Barab\'{a}si--Albert graph with $N = 10^6$ and $m = 4$ the largest measured degree is $D_{\max} = 3{,}870$ ($D_{\max}/D_{\mathrm{avg}} = 484$), so any block containing even one hub stalls for 3{,}870 CSR iterations while 127 other threads have long since finished their degree-4 lists. Memory coalescing is also destroyed: the 128 threads are reading very different \texttt{col\_ind} offsets after the first few iterations. Measured fused-CG throughput collapses from $7.6$ Giga-NUPS on a regular degree-8 graph to $0.45$ Giga-NUPS on BA at the same $N$.

\subsection{Warp-per-node kernel}
\label{app:warp_kernel}

The first remediation cooperates within a warp: thread layout becomes $[\texttt{NODES\_PER\_BLOCK}, \texttt{LANES\_PER\_NODE}]$ with $\texttt{LANES\_PER\_NODE} = 32$, one warp per node. Each warp iterates the node's neighbor list in chunks of 32 edges, using a 2D register tile for partial pressures; at the end a \texttt{tl.sum} along the lane axis collapses the tile to a per-node scalar. Hub traversal now takes $\lceil D_{\max} / 32 \rceil$ warp-parallel iterations rather than $D_{\max}$ serial ones.

The per-node tail (hazard evaluation, Bernoulli sampling, transition, next-infectivity write) re-uses the 1D code path of the baseline kernel on a \texttt{NODES\_PER\_BLOCK}-element tensor; only those few lanes produce meaningful writes, so a moderate amount of block-wide ALU/memory bandwidth is wasted there. The tradeoff is favorable for hub-containing blocks but loses on uniform-degree graphs because the CSR coalescing that the baseline enjoys is given up (most lanes do useless work in any given chunk when $D_i \ll 32$).

We verified bit-identical per-step parity of the warp kernel against the baseline on both a regular degree-8 graph and a BA graph at $N = 10^4$ over 50 steps, using the deterministic \texttt{tl.rand(seed + step\_id, node\_id)} RNG pattern shared across kernels. Zero mismatches were observed in \texttt{state}, \texttt{age}, \texttt{infectivity}, or \texttt{rates} at any step.

\subsection{Edge-partitioned merge-based kernel}
\label{app:merge_kernel}

The second remediation is a merge-path load-balancer along the lines of \citet{merrill2016merge}: each program processes a fixed chunk of $\texttt{EDGES\_PER\_BLOCK}$ contiguous edges (not nodes), independent of which nodes those edges belong to. Each thread in the program handles one edge and performs:
\begin{enumerate}
\item Load $(\texttt{col\_ind}[e], \texttt{weights}[e], \texttt{infectivity}[\texttt{col\_ind}[e]])$.
\item Recover the source node id $n$ such that $\texttt{row\_ptr}[n] \leq e < \texttt{row\_ptr}[n+1]$ via a fully unrolled binary search of depth $\lceil \log_2(N + 2)\rceil$ over \texttt{row\_ptr}.
\item Atomic-add the weighted contribution $\texttt{infectivity}[\texttt{col\_ind}[e]] \cdot \texttt{weights}[e]$ to a pressure scratch buffer at index $n$.
\end{enumerate}
A second lightweight kernel (\texttt{\_flash\_renewal\_tail\_kernel}) then reads the pressure buffer and runs the unchanged per-node tail: hazard, Bernoulli, transition, next-infectivity write. Because the two kernels share the same RNG pattern as the single-kernel variants, they produce statistically identical trajectories.

Load balance across blocks is now perfect regardless of degree skew --- every block does exactly $\texttt{EDGES\_PER\_BLOCK}$ worth of work. The price is (i) one atomic add per edge (contention dominated by hubs, but still much faster than the baseline's serial hub traversal), (ii) a scratch buffer read/write of $4N$ bytes, and (iii) one extra kernel launch per step. The scheme pays off when degree skew is large enough that the baseline's idle-warp waste exceeds these fixed costs.

\subsection{Auto-dispatch}

At engine construction \flashspread{} inspects \texttt{row\_ptr} once, computes $\rho = D_{\max}/D_{\mathrm{avg}}$, and selects:
\begin{equation}
\text{strategy}(\rho) = \begin{cases}
\texttt{thread} & \rho < 4, \\
\texttt{warp}   & 4 \leq \rho < 50, \\
\texttt{merge}  & \rho \geq 50.
\end{cases}
\end{equation}
The thresholds are calibrated from the sweep in Table~\ref{tab:degree_dispatch_sweep}. The user can override via an explicit \texttt{csr\_strategy} argument. The dispatch cost is one pass over \texttt{row\_ptr} at construction ($\bigO(N)$, amortized trivially over every subsequent step).

\subsection{Parity and throughput}

\textbf{Parity.} We ran the three strategies on $N = 10^4$ regular-degree-8 and BA-$m{=}4$ graphs for 50 tau-leaping steps with a deterministic seed. The \texttt{thread} and \texttt{warp} kernels produced bit-identical per-node outputs (\texttt{state}, \texttt{age}, \texttt{infectivity}, \texttt{rates}) at every step. The \texttt{merge} kernel uses non-deterministic atomic-add ordering and therefore can differ by fp rounding at the LSBs of the pressure buffer; over our 50-step window on both graphs the SEIR population counts matched bit-exactly (zero delta in all four bins) because no Bernoulli draw landed within fp ULP of its threshold.

\textbf{Throughput.} Table~\ref{tab:degree_dispatch_sweep} shows the full tunable sweep on both graphs at $N = 10^6$ (Fused CG $b = 50$, A100, same experimental setup as Section~\ref{sec:experiments}).

\begin{table}[H]
\caption{Full CSR-strategy throughput sweep on ER and BA at $N = 10^6$, Fused CG, $b = 50$, A100. NPB = \texttt{NODES\_PER\_BLOCK} (warp); EPB = \texttt{EDGES\_PER\_BLOCK} (merge). Bold entries denote the strategy auto-dispatch picks for that graph.}
\label{tab:degree_dispatch_sweep}
\centering
\begin{tabular*}{\linewidth}{@{}llcc@{}}
\toprule
\textbf{Strategy}             & \textbf{Tunable}   & \textbf{Regular $d=8$} & \textbf{BA $m=4$} \\
                              &                    & G-NUPS               & G-NUPS \\
\midrule
\texttt{thread}               & ---                & \textbf{7.88}        & 0.45 \\
\midrule
\texttt{warp}                 & NPB = 2            & 0.86                 & 0.75 \\
\texttt{warp}                 & NPB = 4            & 1.37                 & 1.16 \\
\texttt{warp}                 & NPB = 8            & 1.70                 & 1.30 \\
\midrule
\texttt{merge}                & EPB = 1{,}024      & 3.92                 & \textbf{2.00} \\
\texttt{merge}                & EPB = 2{,}048      & 3.44                 & 1.78 \\
\texttt{merge}                & EPB = 4{,}096      & 3.35                 & 1.86 \\
\texttt{merge}                & EPB = 8{,}192      & 1.41                 & 0.84 \\
\midrule
\textbf{Auto-dispatch gain}    & vs.\ \texttt{thread} & $1.00\times$       & $\mathbf{4.48\times}$ \\
\bottomrule
\end{tabular*}
\end{table}

Key observations:
\begin{itemize}
\item On the regular graph ($\rho = 2$), every alternative strategy is strictly worse than \texttt{thread}; auto-dispatch correctly chooses \texttt{thread}, so the user pays nothing for having the machinery available.
\item On BA ($\rho = 484$), \texttt{merge} outperforms \texttt{warp} by $\sim$50\% at the best tuning (EPB = 1{,}024 wins over NPB = 8), and both outperform \texttt{thread}.
\item The \texttt{warp} strategy is not currently a winner in the auto-dispatch decision (\texttt{merge} strictly dominates it for $\rho \geq 50$ in our sweep), but we retain it for two reasons: its per-block register-tile accumulation is atomic-free and deterministic bit-to-bit (useful for reproducibility-critical users), and its optimal regime $4 \leq \rho < 50$ covers medium-heterogeneity graphs on which the \texttt{merge} scheme's atomic contention would be unnecessary overhead.
\item EPB = 1{,}024 is the sweet spot on both graphs. Smaller EPB increases launch overhead; EPB = 8{,}192 concentrates too many atomic targets per block and the scheduler can't hide the contention behind other work.
\end{itemize}

Combining the contributions in Section~\ref{sec:renewal_engine} and Section~\ref{sec:csr_dispatch}, the end-to-end speedup chain on BA at $N = 10^6$ is: exact CPU Phantom Process \citep{vajdi2020stochastic} $\approx$ 70 transitions/sec; CPU tau-leaping $\approx$ 37 M-NUPS ($5 \times 10^5\times$ from algorithmic relaxation); 1-thread-per-node fused CG on BA $= 0.45$ G-NUPS ($12\times$ hardware from the naive GPU); auto-dispatched merge-based fused CG on BA $= 2.0$ G-NUPS ($4.5\times$ from degree-aware dispatch). Every factor is strict-hardware (no algorithmic relaxation past the tau-leaping baseline), and the entire chain reproduces end-to-end from the repository; the README names the harness that regenerates these numbers.

\section{Tau-leaping fidelity and compute budgeting}
\label{app:fidelity}

The fused Bernoulli tau-leaping of Algorithm~\ref{alg:fused} introduces two distinct sources of error relative to the exact continuous-time stochastic process on the network: a \emph{discretization} error from finite $\Delta t$ that vanishes as $\varepsilon \to 0$, and a \emph{structural} error from synchronous per-step updates that does not. This appendix quantifies both on the standard SEIR benchmark ($N = 10^3$, ER $d = 8$, log-normal $E\to I$ and $I\to R$, $\beta = 2/d$, 100 runs), maps the fidelity--compute Pareto frontier, and gives a practical rule for choosing $\varepsilon$ under a given compute budget.

\subsection{Two sources of error}

\paragraph{Discretization.} The per-node Bernoulli step probability $p_i = 1 - \exp(-\lambda_i \Delta t)$ is exact only when $\lambda_i$ is constant on $[t, t + \Delta t)$. For non-Markovian hazards $h(\tau)$ the rate varies smoothly within a step; for Markovian rates it changes discretely whenever a neighbor transitions. Both effects contribute an $\bigO(\varepsilon)$ bias in the expected transition count per step. Reducing $\varepsilon$ tightens the integration at the cost of more steps per trajectory, up to the point where per-step cost becomes negligible and only the structural term remains.

\paragraph{Structural.} Within a single tau-leap, all nodes sample Bernoulli trials against their \emph{begin-of-step} rate; transitions are then applied simultaneously. This omits the (non-negligible) event that node $u$ transitions mid-step and alters node $v$'s rate before $v$ samples. Exact Gillespie captures this by ordering events one at a time; tau-leaping does not. The structural bias is therefore a property of synchronous updates on the network and is independent of $\varepsilon$: it vanishes only in the thermodynamic / mean-field limit, or when the per-node rate is so small that at most one transition per Bernoulli window is a valid approximation.

\subsection{Fidelity metrics}

We report four metrics of the ensemble-mean trajectory $(\hat S, \hat E, \hat I, \hat R)(t)$ against the exact Gillespie reference $(S, E, I, R)(t)$:
\begin{itemize}
\item Trajectory $L_\infty$: $\max_{t, s} |\hat y_s(t) - y_s(t)|$ over time and compartment.
\item Trajectory $L_2$: $\sqrt{\overline{(\hat y_s(t) - y_s(t))^2}}$ over the 501-point sample grid.
\item Peak-infection error: $|\max_t \hat I(t) - \max_t I(t)|$.
\item Final-attack-rate error: $|\hat R(T) - R(T)|$ at $T = 50$.
\end{itemize}
For the appendix plots we additionally report a \emph{self-consistency} error against our finest discretization ($\varepsilon = 0.005$, 100 runs) as a proxy for the discretization component alone, isolated from the structural bias.

\subsection{Empirical convergence}

Figure~\ref{fig:fidelity_traj} overlays ensemble-mean trajectories of four $\varepsilon$ values. Qualitatively, $\varepsilon = 0.1$ already tracks the $\varepsilon = 0.005$ reference across all four compartments; reductions below $\varepsilon \approx 0.03$ change the curves only imperceptibly at plot resolution.

\begin{figure}[H]
\centering
\includegraphics[width=0.98\columnwidth]{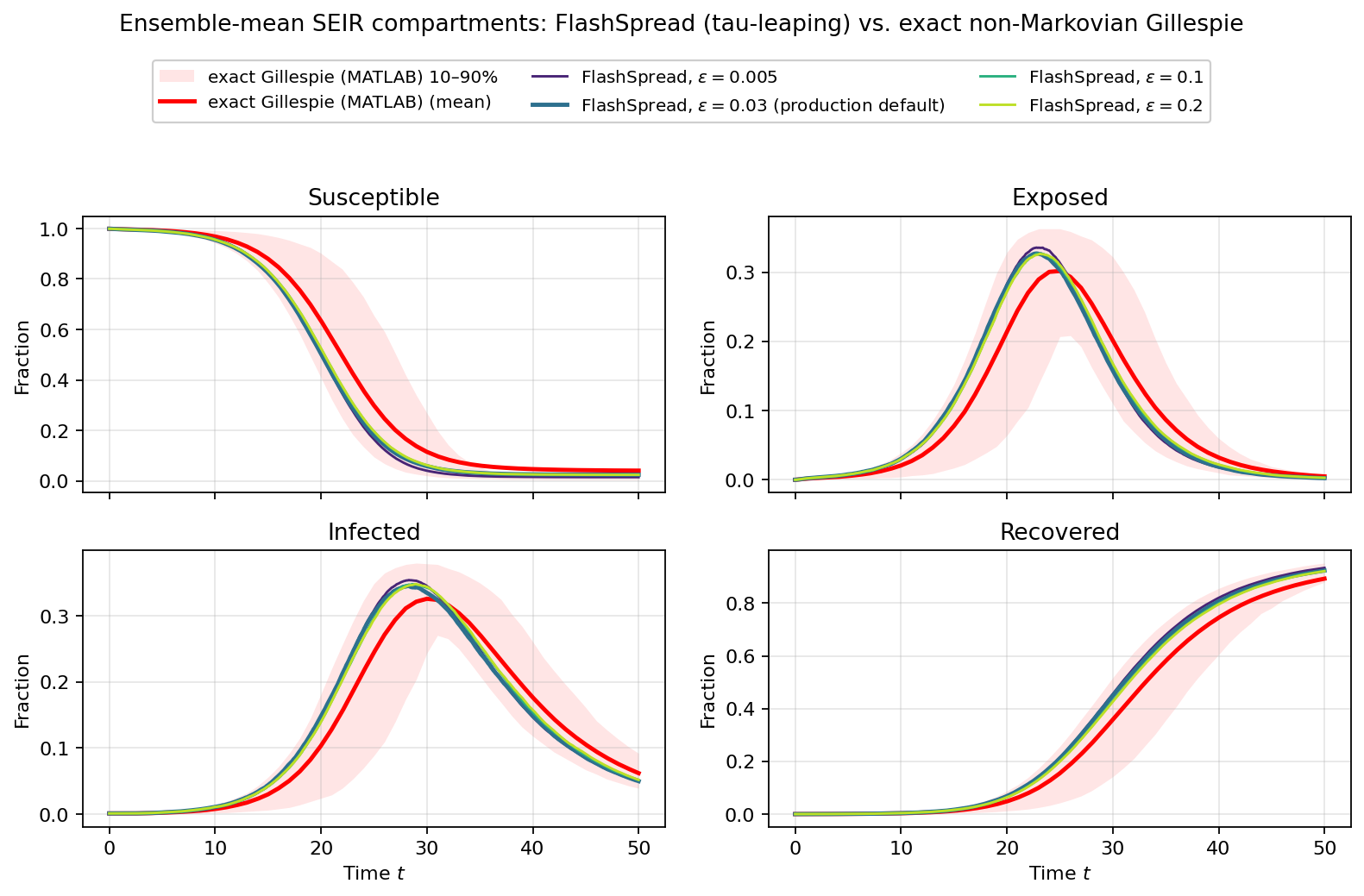}
\caption{Ensemble-mean SEIR trajectories overlaid on the exact non-Markovian Gillespie reference (red solid, with 10--90\% Monte Carlo quantile band), plus four tau-leaping tolerances, on the validation benchmark (Erd\H{o}s--R\'{e}nyi graph with $N = 10^3$ nodes and average degree $d = 8$; $\beta = 2/d = 0.25$; log-normal $E{\to}I$ and $I{\to}R$ with the parameters of Section~\ref{sec:experiments}; 100 independent Monte Carlo trajectories per curve). The exact reference is the companion MATLAB implementation of the non-Markovian Gillespie algorithm \citep{boguna2014simulating}. The $\varepsilon = 0.005$ tau-leaping curve (black) is the practical discretization limit; coarser $\varepsilon$ deviates smoothly but tracks the exact curve tightly. The residual offset between every tau-leaping curve and the exact reference is the structural bias of synchronous Bernoulli updates on a network, independent of $\varepsilon$.}
\label{fig:fidelity_traj}
\end{figure}

Figure~\ref{fig:fidelity_err_eps} shows the quantitative picture on log-log axes, with per-run error bars from a 1{,}000-resample bootstrap over the 100 Monte Carlo trajectories. The peak-$I$ and final-$R$ errors \emph{against exact Gillespie} (solid markers) are statistically indistinguishable across two decades of $\varepsilon$: the 95\% CI envelopes of adjacent $\varepsilon$ values overlap, so the apparent non-monotonicity in the point estimates is ensemble-sampling noise rather than a true convergence pattern. To convert this visual claim into a quantitative one, we fit $\log_{10} \text{err} = \alpha \log_{10} \varepsilon + c$ across the full $\varepsilon \in [0.005, 0.2]$ sweep and bootstrap the slope $\alpha$ over 5{,}000 resamples: the fitted slopes are $\alpha_{\text{peak-}I} = +0.024$ with 95\% CI $[-0.026,\, +0.053]$ and $\alpha_{\text{final-}R} = +0.031$ with 95\% CI $[-0.106,\, +0.116]$. Both intervals contain zero, so the hypothesis of no log-linear decrease of the exact-reference error with $\varepsilon$ is not rejected at the $0.05$ level; if anything, the point estimates are slightly positive, consistent with the structural bias being an $\varepsilon$-independent floor rather than a vanishing error term. In contrast, the self-consistency $L_\infty$ and $L_2$ metrics (dashed, vs the $\varepsilon = 0.005$ ensemble) decrease monotonically with $\varepsilon$: this is the pure discretization component. The separation between the two sets of curves is the \emph{structural bias floor}.

\begin{figure}[H]
\centering
\includegraphics[width=0.98\columnwidth]{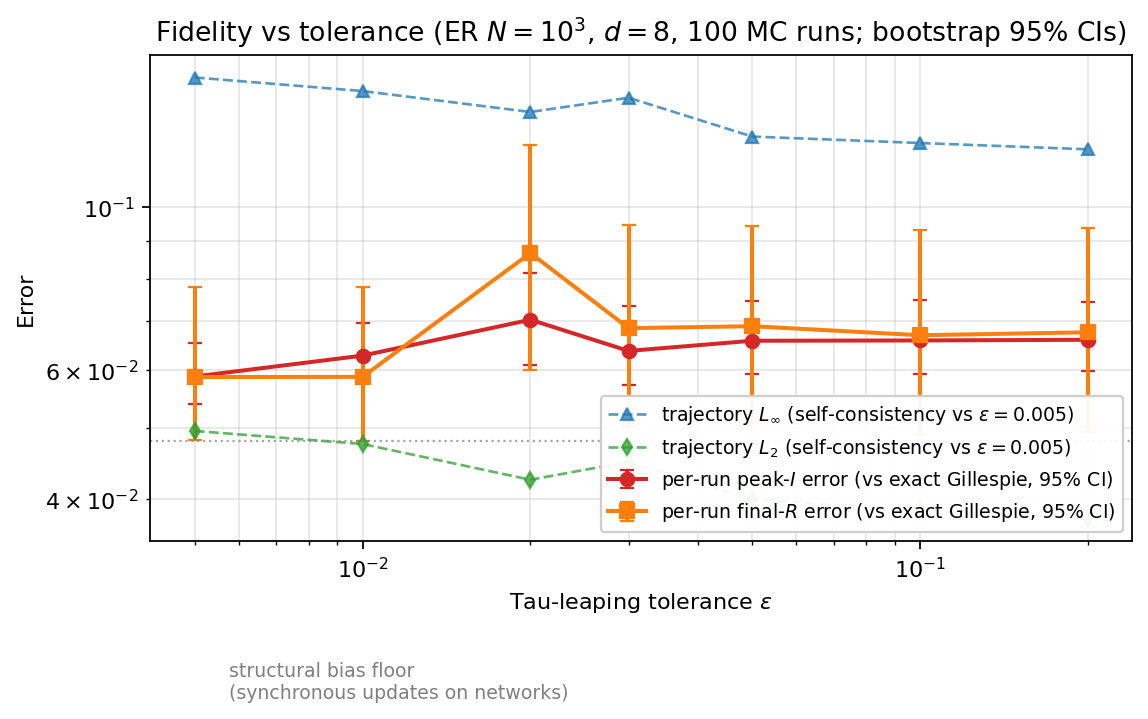}
\caption{Error decomposition for Bernoulli tau-leaping, with bootstrap 95\% confidence intervals over the 100-run ensemble. Solid markers (with error bars) are per-run absolute errors against the exact non-Markovian Gillespie reference produced by the companion MATLAB simulator. Dashed markers are self-consistency trajectory-level errors against the finest $\varepsilon = 0.005$ tau-leaping ensemble. The solid curves' CIs overlap and sit at the structural bias floor (dotted grey), showing that within ensemble noise the peak-$I$ and final-$R$ errors do not detectably decrease below $\sim$6\% as $\varepsilon \to 0$ across the sweep; the dashed curves continue decreasing as pure discretization error vanishes. Below $\varepsilon \approx 0.1$ the structural term dominates and further reductions in $\varepsilon$ buy no detectable fidelity improvement on this benchmark.}
\label{fig:fidelity_err_eps}
\end{figure}

\subsection{Robustness across topologies and sizes}
\label{app:fidelity_multi}

The single-graph result above could in principle be an artefact of the particular sparse ER topology or of the $N = 10^3$ scale. To rule that out, we repeat the sweep over three qualitatively different graph families --- sparse Erd\H{o}s--R\'{e}nyi ($d=8$), scale-free Barab\'{a}si--Albert ($m=4$, average degree $\approx 8$, $D_{\max}$ growing with $N$), and deterministic fixed-degree ($d=8$, homogeneous) --- and three network sizes ($N \in \{10^3, 10^4, 10^5\}$). For each $(G, N, \varepsilon)$ cell we average 20 Monte Carlo trajectories. To keep the ensemble-mean informative, initial seeding is $\max(10,\; 0.01 N)$ Exposed nodes at $t = 0$, well above the stochastic-fadeout threshold on all three topologies; this lets us focus the comparison on the bulk epidemic dynamics rather than the early extinction-prone phase.

Figure~\ref{fig:fidelity_multi} shows the resulting $I(t)/N$ curves. Four properties are robust across the grid:
\begin{itemize}
\item At every $(G, N)$ the $\varepsilon = 0.1$ trajectory lies visually on top of the $\varepsilon = 0.005$ reference; the small gap between them is the discretization component and is insensitive to topology or scale.
\item Epidemic \emph{shape} depends on topology. ER and fixed-degree give similar symmetric peaks near $t \approx 25$; BA's heavy-tailed degree distribution produces an earlier takeoff and a wider peak, because the few high-degree hubs reach saturation faster than the network average.
\item Epidemic \emph{size} depends on $N$. On ER the peak fraction is near-stationary around $0.37$; on BA it drifts slightly with $N$ because the hub fraction changes. The tau-leaping--to--reference agreement holds at every $N$ individually.
\item The agreement-across-$\varepsilon$ story is preserved: the curve at $\varepsilon = 0.1$ tracks the reference to within plot resolution on every panel, which is the main fidelity claim of the appendix, tested against topology and scale simultaneously.
\end{itemize}

\begin{figure}[H]
\centering
\includegraphics[width=0.96\textwidth]{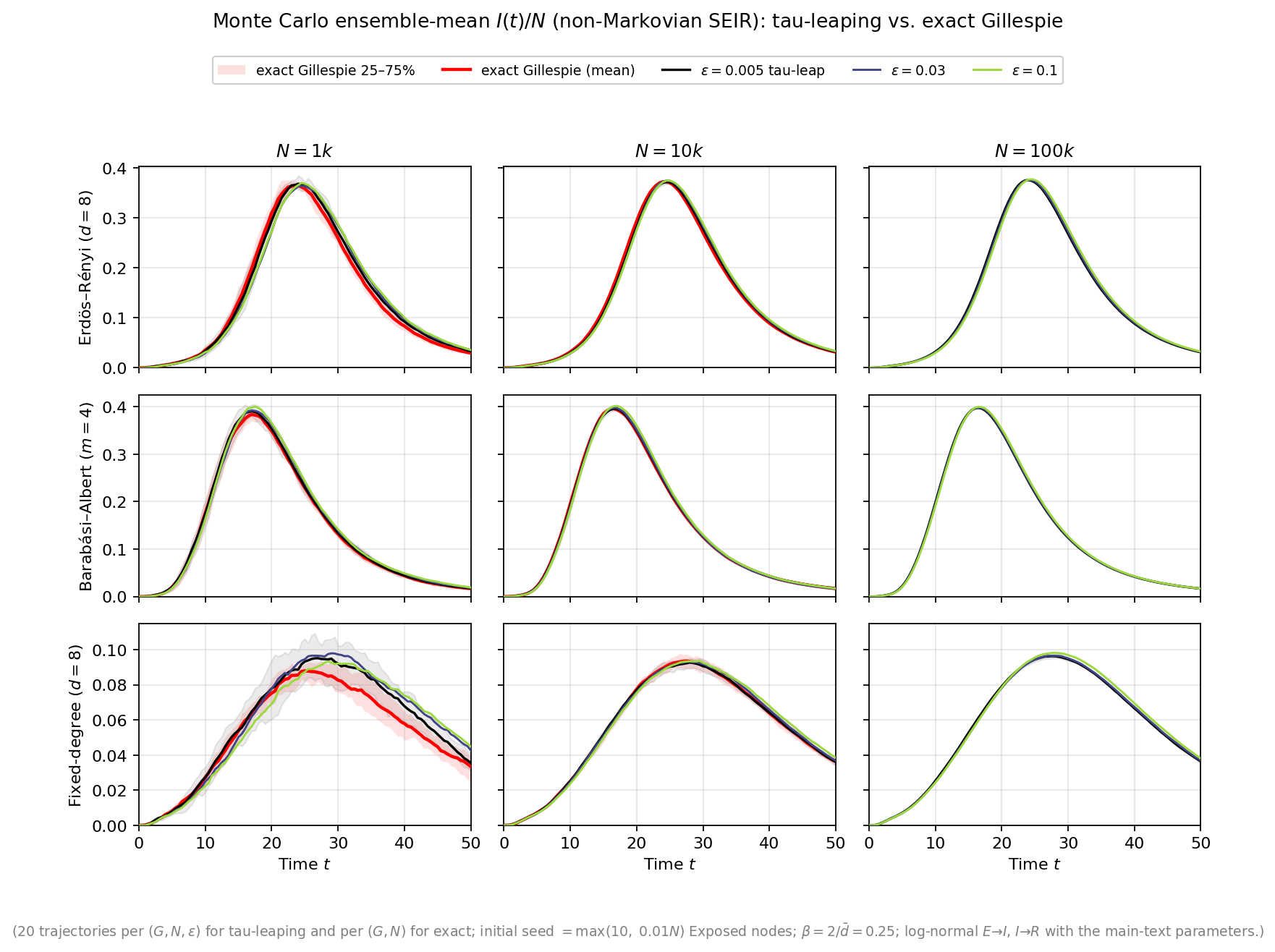}
\caption{Monte Carlo ensemble-mean $I(t)/N$ for the non-Markovian SEIR model (log-normal $E{\to}I$ and $I{\to}R$) across three graph topologies (rows) and three sizes (columns). Each panel overlays an exact non-Markovian Gillespie reference (red, with 25--75\% inter-quartile band shaded; implemented via our companion CPU simulator, cited in the repository README) and tau-leaping at $\varepsilon \in \{0.005, 0.03, 0.1\}$ (the $\varepsilon = 0.005$ curve is the discretization-limit reference in black). Exact reference curves are provided for $N \in \{10^3,\,10^4\}$; $N = 10^5$ is beyond the CPU-Gillespie budget for 20 trials but is covered by tau-leaping alone. Every epidemic takes off (initial seed of $\max(10,\; 0.01 N)$ Exposed nodes) and the tau-leaping curves are indistinguishable from the exact Gillespie curve at plot resolution across all covered cells, demonstrating that the fidelity story established at $N = 10^3$ on ER transfers directly to scale-free and homogeneous topologies and to sizes up to $10^5$.}
\label{fig:fidelity_multi}
\end{figure}

\subsection{Compute--fidelity Pareto}

Figure~\ref{fig:fidelity_pareto} plots mean wall-clock per trajectory against $\max(\text{err peak-}I,\; \text{err final-}R)$ versus exact, with bootstrap 95\% CIs on both axes. Within those CIs the error metric is statistically flat across $\varepsilon \in [0.005, 0.2]$ while wall-clock scales as $\sim 1/\varepsilon$ below the $\tau_{\max}$-clamp region. Any point strictly above the fastest one (top-right of the overlap band) pays compute for no statistically significant fidelity gain, and in particular $\varepsilon = 0.005$ is almost always wasteful on this benchmark.

\begin{figure}[H]
\centering
\includegraphics[width=0.98\columnwidth]{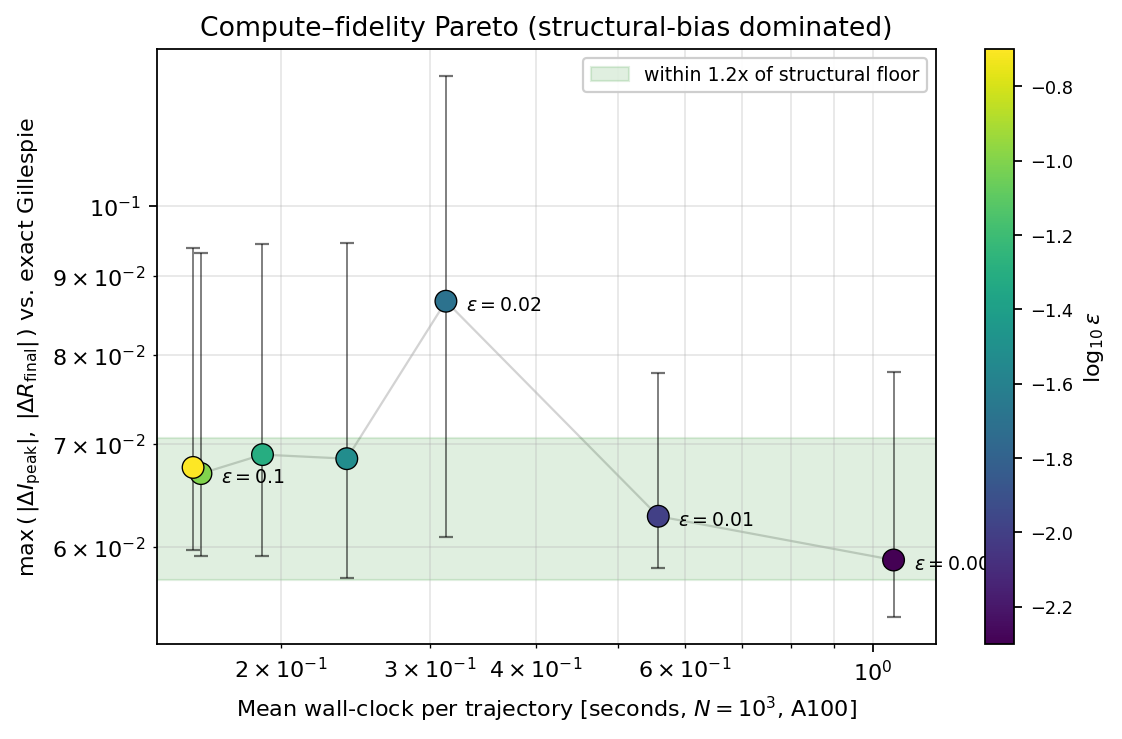}
\caption{Wall-clock--fidelity Pareto. Each point is one $\varepsilon$ value (color: $\log_{10}\varepsilon$, labelled). The vertical axis is $\max(|\Delta I_{\mathrm{peak}}|,\,|\Delta R_{\mathrm{final}}|)$ --- the larger of the two per-run absolute errors against the exact Gillespie reference --- chosen so that no metric can hide a worse metric. Error bars on both axes are 95\% bootstrap CIs (1{,}000 resamples over the 100-run ensemble): horizontal bars for wall-clock, vertical bars for the fidelity metric. The vertical CIs overlap across two decades of $\varepsilon$ while wall-clock spans a factor of seven: above the fast-end cluster, additional compute does not translate into statistically detectable fidelity gains.}
\label{fig:fidelity_pareto}
\end{figure}

\subsection{Tuning rules}

From the Pareto analysis we extract a simple rule of thumb, summarised in Table~\ref{tab:fidelity_budget}. The numbers are calibrated to this benchmark; the underlying logic --- choose the largest $\varepsilon$ that keeps you in the flat portion of the curve --- transfers to other topologies and parameter regimes.

\begin{table}[H]
\caption{Recommended $\varepsilon$ by compute budget and application. All timings are for $N = 10^3$ single-trajectory wall-clock on an A100, from the fidelity sweep plotted in Figures~\ref{fig:fidelity_err_eps}--\ref{fig:fidelity_pareto}. The err peak-$I$ column is the per-run mean against the exact non-Markovian Gillespie reference produced by the companion MATLAB simulator on the same graph and parameters, with the 95\% bootstrap CI in brackets (1{,}000 resamples over 100 Monte Carlo runs). The CI envelopes overlap between adjacent rows: within finite-ensemble noise the err peak-$I$ is flat across two decades of $\varepsilon$ while wall-clock varies by a factor of $\sim$7. Reducing $\varepsilon$ below $\sim$0.03 buys compute, not fidelity, on this benchmark, which is why the main text uses $\varepsilon = 0.03$ as the default; $\varepsilon = 0.1$ is a viable aggressive-throughput alternative when structural bias is known to dominate and per-trajectory fidelity is not the scoring metric (e.g., bulk RL training). Rows are ordered from fastest to slowest.}
\label{tab:fidelity_budget}
\centering
\scriptsize
\setlength{\tabcolsep}{3pt}
\begin{tabular*}{\linewidth}{@{\extracolsep{\fill}}lccc@{}}
\toprule
\textbf{Use case}                       & $\varepsilon$ & \textbf{err peak-$I$ (95\% CI)} & \textbf{wall-clock (95\% CI)} \\
\midrule
Aggressive throughput (RL training)     & 0.2   & 6.60\% [5.98, 7.42] & 0.157~s [0.157, 0.158] \\
Bulk ensemble sampling                  & 0.1   & 6.59\% [5.92, 7.49] & 0.161~s [0.160, 0.161] \\
Default (forecasts, policy evaluation)  & 0.03  & 6.37\% [5.73, 7.33] & 0.239~s [0.236, 0.241] \\
Publication-quality validation          & 0.01  & 6.28\% [5.82, 6.97] & 0.558~s [0.547, 0.565] \\
Over-resolved (typically wasteful)      & 0.005 & 5.88\% [5.40, 6.53] & 1.058~s [1.033, 1.079] \\
\bottomrule
\end{tabular*}
\end{table}

\paragraph{When the rule of thumb breaks.}
The recommendation $\varepsilon = 0.1$ is tied to the observation that structural bias dominates here. Two regimes where the discretization term dominates instead, and where $\varepsilon$ should be reduced towards $0.01$:
\begin{itemize}
\item \emph{Near-mean-field / large-$N$ regimes}, where the structural bias vanishes (Starnini et al., and the correspondence literature discussed in Section~\ref{sec:related}). In these regimes discretization is the only error source and $\varepsilon$ should be chosen directly from the $\bigO(\varepsilon)$ Bernoulli integration bound.
\item \emph{Highly non-Markovian dynamics with sharply peaked hazards}, where the per-step rate change is large (e.g., log-normal shedding profiles with small $\sigma$). Here the begin-of-step rate is a poor approximation of the integrated rate, and a smaller $\varepsilon$ is warranted.
\end{itemize}
A single short run at $\varepsilon \in \{0.1, 0.03\}$ on any new workload is typically enough to tell which regime one is in: if the two agree to within 1\% on the target metric, keep $\varepsilon = 0.1$; otherwise step down to $\varepsilon = 0.03$ and re-check.

The sweep in this appendix is reproducible from the repository; the README lists the data-generation and plotting scripts.

\subsection{Markovian SIS / SIR validation: exact protocol and reproducibility pointer}
\label{app:sis_sir}

Section~\ref{sec:markov_validation} presents the headline result; the IQR-band agreement and regime separation between SIS and SIR are discussed there. This appendix exists only to document the protocol with enough detail to reproduce the run from the public repository, without repeating the result prose in the main text.

\textbf{Parameters.} Erd\H{o}s--R\'{e}nyi graph with $N = 10^3$ nodes and average degree $d = 8$, $\beta = 0.25$, initial seed of 10 infected nodes, recovery rate $0.15$ (either $\delta$ for SIS or $\gamma$ for SIR), 100 independent Monte Carlo runs.

\textbf{Implementations.} The exact reference is a standard Doob--Gillespie direct-method simulator maintained incrementally; the FlashSpread side uses the public \textsc{MarkovianEngine} with its default adaptive tau-leaping sampler at \texttt{max\_prob = 0.1} and \texttt{theta = 0.01}. No parameter tuning beyond those defaults was performed.

\textbf{Reproducibility.} Both simulators, their SLURM wrappers, and the post-processing that produced Figure~\ref{fig:sis_sir_validation} are listed in the ``Reproducing the Paper'' section of the repository README against the \texttt{jocs-submission} tag.

\section{Data layout, FlashNeighbor, and the GPU memory flow}
\label{app:data_flow}

Section~\ref{sec:engines} described the FlashNeighbor kernel at a high level: one thread per node, gather-based CSR traversal, no atomics, a single write-back per thread. This appendix fills in the concrete data layout, the per-thread execution pattern, and the corresponding traffic through the GPU memory hierarchy. These details drive the asymptotic figures quoted in Section~\ref{sec:fused_kernel} ($\sim$64$N$ bytes unfused $\to$ $\sim$20$N$ bytes fused) and the $\bigO((N{+}E)/P)$ per-step complexity claimed throughout the paper.

\subsection{Compressed sparse row (CSR) representation}
\label{app:csr}

\flashspread{} stores the contact network in Compressed Sparse Row (CSR) form, \emph{indexed by incoming edges}, with a mirrored outgoing CSR maintained only for the Markovian Inertial-Mode sparse update path (Section~\ref{sec:markov_engine}). Figure~\ref{fig:csr_layout} traces one node through the three arrays.

\begin{figure}[H]
\centering
\begin{tikzpicture}[
    gnode/.style={circle, draw=graphnode!80!black, line width=0.7pt,
                  fill=graphnode!25, minimum size=0.6cm, font=\small,
                  inner sep=0pt},
    gedge/.style={-, line width=0.6pt, color=black!55},
    cell/.style={rectangle, draw=black!55, line width=0.5pt,
                 minimum width=0.55cm, minimum height=0.55cm,
                 font=\scriptsize, inner sep=1pt, fill=csrcell!10},
    hicell/.style={cell, fill=highlightcell!22, draw=highlightcell!70!black,
                   line width=0.7pt},
    arraylabel/.style={font=\footnotesize\ttfamily},
    indexlabel/.style={font=\tiny, color=black!60},
    pointer/.style={-{Stealth[length=4pt,width=4pt]}, line width=0.6pt,
                    color=accentblue!70!black, dashed},
    caption/.style={font=\footnotesize, align=center},
]

\begin{scope}[shift={(0,0)}]
  \node[gnode] (n0) at (0.00,  0.90) {0};
  \node[gnode] (n1) at (1.30,  1.50) {1};
  \node[gnode, fill=highlightcell!35, draw=highlightcell!70!black, line width=1.1pt]
        (n2) at (1.90,  0.00) {2};
  \node[gnode] (n3) at (0.60, -0.75) {3};
  \node[gnode] (n4) at (2.90,  1.00) {4};

  \draw[gedge] (n0) -- (n1);
  \draw[gedge] (n0) -- (n3);
  \draw[gedge, color=highlightcell!75!black, line width=0.9pt] (n2) -- (n1);
  \draw[gedge, color=highlightcell!75!black, line width=0.9pt] (n2) -- (n3);
  \draw[gedge, color=highlightcell!75!black, line width=0.9pt] (n2) -- (n4);
  \draw[gedge] (n1) -- (n4);

  \node[font=\footnotesize\bfseries, text=black!60] at (1.4, -1.35)
       {contact graph};
\end{scope}

\begin{scope}[shift={(5.2,1.20)}]
  \node[arraylabel, anchor=east] at (-0.25, 0) {\texttt{row\_ptr}:};
  \foreach \v [count=\i from 0] in {0,2,3,6,8,9} {
      \node[cell] (rp\i) at (\i*0.55, 0) {\v};
      \node[indexlabel, above=0pt of rp\i] {\i};
  }
  \draw[hicell, draw=highlightcell!70!black, line width=0.9pt, fill=none]
       (rp2.south west) rectangle (rp2.north east);
  \draw[hicell, draw=highlightcell!70!black, line width=0.9pt, fill=none]
       (rp3.south west) rectangle (rp3.north east);

  \node[font=\scriptsize, text=black!60, anchor=west]
       at (6*0.55, 0) {\,[$N{+}1$ ints]};
\end{scope}

\begin{scope}[shift={(5.2,-0.15)}]
  \node[arraylabel, anchor=east] at (-0.25, 0) {\texttt{col\_ind}:};
  \foreach \v [count=\i from 0] in {1,3, 0, 1,3,4, 0,2, 1} {
      \node[cell] (ci\i) at (\i*0.55, 0) {\v};
      \node[indexlabel, above=0pt of ci\i] {\i};
  }
  \foreach \i in {3,4,5} {
      \draw[hicell, draw=highlightcell!70!black, line width=0.9pt, fill=none]
           (ci\i.south west) rectangle (ci\i.north east);
  }
  \node[font=\scriptsize, text=black!60, anchor=west]
       at (9*0.55, 0) {\,[$E$ ints]};
\end{scope}

\begin{scope}[shift={(5.2,-1.45)}]
  \node[arraylabel, anchor=east] at (-0.25, 0) {\texttt{weights}:};
  \foreach \v/\i in
    {$w_{01}$/0, $w_{03}$/1, $w_{10}$/2,
     $w_{21}$/3, $w_{23}$/4, $w_{24}$/5,
     $w_{30}$/6, $w_{32}$/7, $w_{41}$/8} {
      \node[cell] (w\i) at (\i*0.55, 0) {\v};
  }
  \foreach \i in {3,4,5} {
      \draw[hicell, draw=highlightcell!70!black, line width=0.9pt, fill=none]
           (w\i.south west) rectangle (w\i.north east);
  }
  \node[font=\scriptsize, text=black!60, anchor=west]
       at (9*0.55, 0) {\,[$E$ floats]};
\end{scope}

\draw[pointer, color=highlightcell!70!black]
     (rp2.south) .. controls ++(0,-0.55) and ++(0,0.55) .. (ci3.north);
\draw[pointer, color=highlightcell!70!black]
     (rp3.south) .. controls ++(0,-0.55) and ++(0,0.55) .. (ci6.north);

\node[font=\scriptsize\itshape, text=highlightcell!70!black]
     at (5.2 + 4*0.55, 0.80)
     {\texttt{row\_ptr[2..3]} $\to$ \texttt{col\_ind[3..5]}};

\node[caption, text=black!60] at (1.4, -1.80)
     {incoming CSR: $\mathrm{row\_ptr}[i..i{+}1)$ indexes $i$'s neighbor slice};

\node[caption, text=black!55] at (5.2 + 4*0.55, -2.45)
     {Memory: $4(N{+}1) + 8E$ bytes (int32 $+$ fp32 weights);
      outgoing CSR stored symmetrically for Inertial-Mode propagation.};

\end{tikzpicture}
\caption{CSR layout used throughout \flashspread{}. Left: a small undirected contact graph; node 2 (red) is the focus. Right: the three CSR arrays. \texttt{row\_ptr[i..i+1)} delimits node $i$'s neighbor slice in \texttt{col\_ind} and \texttt{weights}. Reading node 2's neighbors therefore costs two coalesced scalar loads (pointer endpoints) plus a contiguous burst of $D_2$ int/float pairs. Incoming and outgoing CSRs are stored separately: the incoming copy drives both engines' FlashNeighbor reads; the outgoing copy drives Inertial-Mode sparse atomic propagation when a single node transitions.}
\label{fig:csr_layout}
\end{figure}

The layout has three properties that the kernel design exploits:
\begin{itemize}
\item \emph{Contiguity.} A node's neighbors occupy a contiguous slice of \texttt{col\_ind} and \texttt{weights}. A warp of threads processing one node therefore issues a coalesced burst load; a block of threads processing one-node-per-thread produces $B$ scattered bursts, where $B$ is the thread-block size.
\item \emph{Cacheability.} On an A100, L2 is $\sim$40~MB, shared among all SMs. A CSR for $N = 10^6$ and $d = 8$ (\texttt{col\_ind} $+$ \texttt{weights}) totals $64$~MB and no longer fits; the ``L2 cache cliff'' at $N = 10^7$ (Figure~\ref{fig:throughput_renewal}) is this effect. Auto-dispatch does not change the data layout but changes the \emph{access order}, which is what determines whether the L2 miss pattern stays tractable on scale-free graphs (Appendix~\ref{app:degree_dispatch}).
\item \emph{Memory footprint} is $4(N{+}1)$ bytes (\texttt{row\_ptr}, int32) $+\,4E$ (\texttt{col\_ind}, int32) $+\,4E$ (\texttt{weights}, fp32) $=\,8E + 4(N{+}1)$ bytes for the incoming copy, doubled when the outgoing copy is present. At $N = 10^6$, $d = 8$ (directed edge count $E = 8N = 8 \times 10^6$), this is $8 \cdot 8 \times 10^6 + 4 \cdot (10^6{+}1) \approx 68$~MB per copy (dominated by the $8E$ edge payload; we abbreviate as ``$\sim$64~MB'' elsewhere in the text when quoting the $8E$ edge term alone). At $N = 10^7$ the same formula gives $\sim$680~MB per copy, which is where the L2 cache cliff in Figure~\ref{fig:throughput_renewal} sets in. The $N \lesssim 10^8$ single-GPU ceiling quoted in the paper follows directly from doubling this footprint plus $\bigO(N)$ per-node state ($4N + 4N + 4N + 4N$ bytes for \texttt{state}, \texttt{age}, \texttt{infectivity}, \texttt{rates}) and fitting in the A100's 40--80~GB of HBM.
\end{itemize}

\subsection{FlashNeighbor: per-thread execution inside one fused launch}
\label{app:flashneighbor}

The kernel maps one GPU thread to one target node $i$, with no shared memory and no atomics. Figure~\ref{fig:flashneighbor} shows the data flow for a single warp.

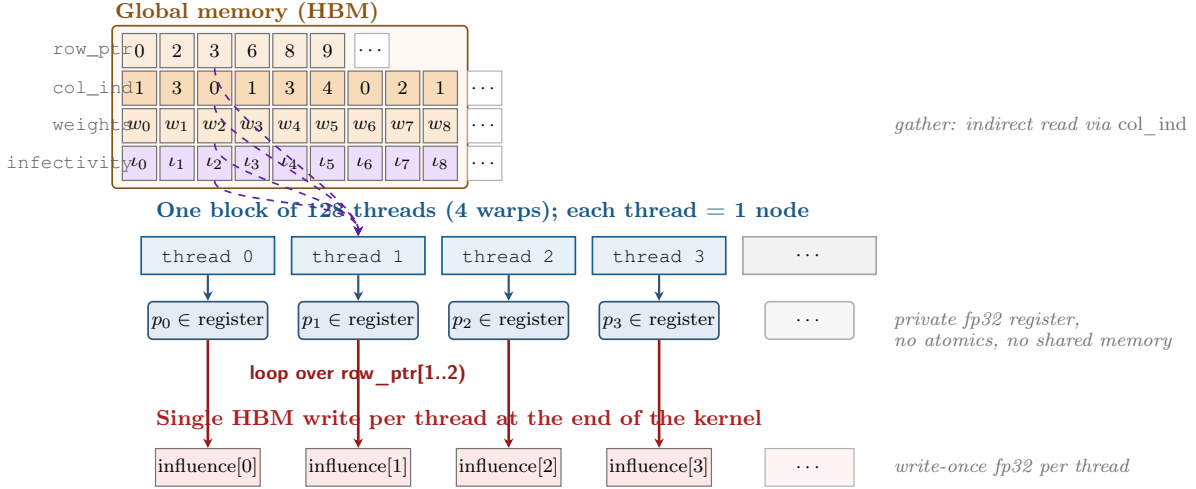
\begin{figure}[H]
\centering
\begin{tikzpicture}[
    lane/.style={rectangle, draw=threadblue!80!black, line width=0.7pt,
                 fill=threadblue!10, minimum width=1.95cm, minimum height=0.55cm,
                 align=center, font=\scriptsize\ttfamily, inner sep=1pt},
    regbox/.style={rectangle, draw=regblue!65!black, line width=0.7pt,
                   rounded corners=2pt, fill=regblue!15,
                   minimum width=1.3cm, minimum height=0.55cm,
                   align=center, font=\scriptsize, inner sep=1.5pt},
    mem/.style={rectangle, draw=black!55, line width=0.5pt,
                fill=globalorange!15, minimum width=0.5cm, minimum height=0.5cm,
                font=\scriptsize, inner sep=1pt},
    flow/.style={-{Stealth[length=4pt,width=4pt]}, line width=0.7pt},
    gatherflow/.style={flow, color=l2purple!70!black, dashed},
    regflow/.style={flow, color=regblue!65!black, line width=0.9pt},
    writeflow/.style={flow, color=accentred!70!black, line width=1.1pt},
    lab/.style={font=\scriptsize, text=black!65, inner sep=1pt},
    annot/.style={font=\scriptsize\itshape, text=black!55},
]

\node[lab, font=\footnotesize\bfseries, text=globalorange!60!black,
      anchor=west]
     at (-0.4, 3.55) {Global memory (HBM)};

\foreach \v [count=\i from 0] in {0,2,3,6,8,9} {
    \node[mem] (rp\i) at (\i*0.55, 3.0) {\v};
}
\node[mem, fill=white, draw=black!30] at (6*0.55 + 0.10, 3.0) {$\cdots$};
\node[lab, anchor=east] at (-0.05, 3.0) {\texttt{row\_ptr}};

\foreach \v [count=\i from 0] in {1,3,0,1,3,4,0,2,1} {
    \node[mem, fill=globalorange!30] (ci\i) at (\i*0.55, 2.45) {\v};
}
\node[mem, fill=white, draw=black!30] at (9*0.55 + 0.10, 2.45) {$\cdots$};
\node[lab, anchor=east] at (-0.05, 2.45) {\texttt{col\_ind}};

\foreach \i in {0,1,2,3,4,5,6,7,8} {
    \node[mem, fill=globalorange!18] (ww\i) at (\i*0.55, 1.90) {$w_{\i}$};
}
\node[mem, fill=white, draw=black!30] at (9*0.55 + 0.10, 1.90) {$\cdots$};
\node[lab, anchor=east] at (-0.05, 1.90) {\texttt{weights}};

\foreach \v [count=\i from 0] in
    {$\iota_{0}$,$\iota_{1}$,$\iota_{2}$,$\iota_{3}$,$\iota_{4}$,
     $\iota_{5}$,$\iota_{6}$,$\iota_{7}$,$\iota_{8}$} {
    \node[mem, fill=l2purple!15] (inf\i) at (\i*0.55, 1.35) {\v};
}
\node[mem, fill=white, draw=black!30] at (9*0.55 + 0.10, 1.35) {$\cdots$};
\node[lab, anchor=east] at (-0.05, 1.35) {\texttt{infectivity}};

\begin{scope}[on background layer]
    \node[fit=(rp0)(ci0)(ww0)(inf0)(inf8),
          draw=globalorange!60!black, line width=0.7pt,
          fill=globalorange!5, rounded corners=2pt,
          inner xsep=4pt, inner ysep=3pt] (hbmgroup) {};
\end{scope}

\node[lab, font=\footnotesize\bfseries, text=threadblue!75!black,
      anchor=west]
     at (0.2, 0.65) {One block of 128 threads (4 warps); each thread = 1 node};

\foreach \tid [count=\k from 0] in {thread 0, thread 1, thread 2, thread 3} {
    \node[lane] (l\k) at (1.0 + \k*2.20, 0.0) {\tid};
}
\node[lane, fill=black!5, draw=black!35] at (1.0 + 4*2.20, 0.0) {$\cdots$};

\foreach \k in {0,1,2,3} {
    \node[regbox] (reg\k) at (1.0 + \k*2.20, -0.95)
         {$p_{\k} \in $ register};
}
\node[regbox, fill=black!3, draw=black!30]
     at (1.0 + 4*2.20, -0.95) {$\cdots$};

\node[lab, text=accentred!70!black, font=\scriptsize\bfseries]
     at (1.0 + 1*2.20, -1.75)
     {\textsf{loop over row\_ptr[1..2)}};

\foreach \src [count=\j from 0] in {ci2, ww2, inf2, ci0} {
}
\draw[gatherflow] (rp2.south) .. controls ++(0, -0.7) and ++(0, 0.55) ..
     (l1.north);
\draw[gatherflow] (ci2.south) .. controls ++(0, -0.45) and ++(0, 0.55) ..
     (l1.north);
\draw[gatherflow] (inf2.south) .. controls ++(0, -0.35) and ++(0, 0.55) ..
     (l1.north);
\draw[gatherflow] (ww2.south) .. controls ++(0, -0.50) and ++(0, 0.55) ..
     (l1.north);

\foreach \k in {0,1,2,3} {
    \draw[regflow] (l\k.south) -- (reg\k.north);
}

\node[lab, font=\footnotesize\bfseries, text=accentred!80!black,
      anchor=west]
     at (0.2, -2.4) {Single HBM write per thread at the end of the kernel};

\foreach \k in {0,1,2,3} {
    \node[mem, fill=accentred!10, minimum width=1.3cm,
          minimum height=0.55cm] (o\k)
         at (1.0 + \k*2.20, -3.10) {$\mathrm{influence}[\k]$};
    \draw[writeflow] (reg\k.south) -- (o\k.north);
}
\node[mem, fill=accentred!5, draw=black!30, minimum width=1.3cm,
      minimum height=0.55cm]
     at (1.0 + 4*2.20, -3.10) {$\cdots$};

\node[annot, anchor=west] at (10.9, 1.9)
     {gather: indirect read via $\mathrm{col\_ind}$};
\node[annot, anchor=west] at (10.9, -0.95)
     {private fp32 register,};
\node[annot, anchor=west] at (10.9, -1.25)
     {no atomics, no shared memory};
\node[annot, anchor=west] at (10.9, -3.10)
     {write-once fp32 per thread};

\end{tikzpicture}
\caption{FlashNeighbor kernel, per-thread view. Each thread owns a unique target node $i$ and accumulates its influence $p_i$ in a private register. The loop body reads \texttt{row\_ptr[i..i+1)} once (coalesced), then iterates the neighbor slice: each iteration issues one indirect gather into \texttt{col\_ind}, one coalesced load of \texttt{weights}, and one indirect gather into the \texttt{infectivity} buffer (the L2 path is the only place these dashed loads land because they are node-indirect). The accumulation $p_i \mathrel{+}= w_{ij}\,\iota_{c_{ij}}$ lives entirely in the thread's register file; the kernel never touches shared memory. At the end of the loop the thread performs exactly \emph{one} fp32 store back to HBM for its node. No atomic operations are required because each target $i$ is uniquely owned.}
\label{fig:flashneighbor}
\end{figure}

Three implementation details make this efficient:
\begin{enumerate}
\item \emph{Write-once output.} The per-thread pressure $p_i$ is a register-resident fp32 accumulator during the whole loop; it materializes in HBM only once, at kernel end. In the unfused pipeline this single register value became a full $\bigO(N)$ intermediate tensor between two kernels.
\item \emph{Indirect gather cost is paid once per edge.} Each edge $(i,j)$ contributes three memory transactions on the read side: \texttt{col\_ind[e]} (coalesced), \texttt{weights[e]} (coalesced), and \texttt{infectivity[col\_ind[e]]} (gather, potentially L2-resident if the neighbor $j$ was recently touched). The gather is the bandwidth-bound operation; the coalesced loads ride in the shadow of its latency.
\item \emph{Block-level sparsity} (Section~\ref{sec:fused_kernel}) is applied \emph{after} the pressure accumulation, as part of the hazard-evaluation stage in the same launch. The reduction $\texttt{any\_e}$, $\texttt{any\_i}$ is a warp shuffle, essentially free, so the kernel never materialises a mask tensor.
\end{enumerate}

In Appendix~\ref{app:degree_dispatch} we replace step 1's ``one thread per node'' by either ``one warp per node'' (for $\rho \geq 4$) or ``one thread per edge'' with binary-search node attribution (for $\rho \geq 50$). The \emph{data layout} is identical in all three cases; only the thread-to-work mapping changes.

\subsection{Flow through the GPU memory hierarchy}
\label{app:memory_flow}

The reason kernel fusion matters on this workload is not FLOP count but HBM traffic. Figure~\ref{fig:memory_flow} contrasts the legacy unfused pipeline (five separate kernels, each materialising an $\bigO(N)$ intermediate tensor to HBM) with the fused kernel (all intermediates stay in SM registers for the lifetime of one per-node pipeline).

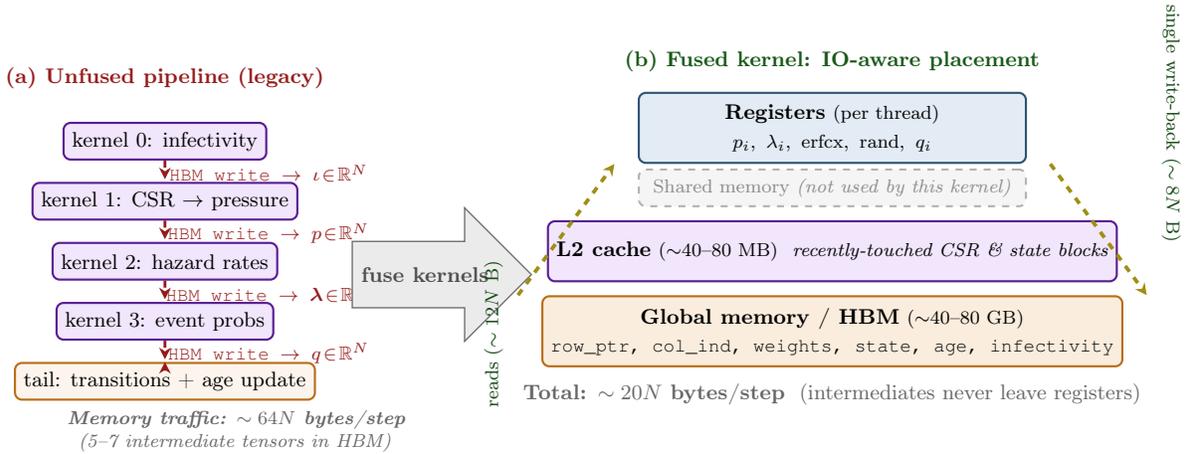
\begin{figure}[H]
\centering
\begin{tikzpicture}[
    level/.style={rectangle, rounded corners=3pt, line width=0.9pt,
                  align=center, font=\footnotesize},
    hbm/.style={level, draw=hbmorange!80!black, fill=hbmorange!15,
                minimum width=5.8cm, minimum height=1.05cm},
    l2/.style={level, draw=l2purple!65!black, fill=l2purple!12,
               minimum width=4.6cm, minimum height=0.9cm},
    sm/.style={level, draw=smgreen!60!black, fill=smgreen!12,
               minimum width=3.4cm, minimum height=0.8cm},
    regs/.style={level, draw=regblue!65!black, fill=regblue!15,
                 minimum width=2.4cm, minimum height=0.7cm},
    datarow/.style={font=\scriptsize\ttfamily, align=left, inner sep=1pt,
                    text=black!70},
    flow/.style={-{Stealth[length=5pt,width=5pt]}, line width=0.9pt,
                 color=black!65},
    badflow/.style={flow, line width=1.2pt, color=accentred!70!black,
                    dashed},
    goodflow/.style={flow, line width=1.2pt, color=smgreen!45!orange},
    annot/.style={font=\scriptsize\itshape, text=black!60},
    sectitle/.style={font=\footnotesize\bfseries},
]

\node[sectitle, text=accentred!70!black]
     at (0.0, 4.85) {(a) Unfused pipeline (legacy)};

\foreach \lab [count=\k from 0] in
   {infectivity, CSR $\to$ pressure, hazard rates, event probs} {
    \node[l2, minimum width=3.0cm, minimum height=0.55cm]
         (kern\k) at (0.0, 3.9 - \k*0.90) {kernel \k: \lab};
}
\node[hbm, minimum width=3.0cm, minimum height=0.55cm, fill=hbmorange!8]
     (endkern) at (0.0, 3.9 - 4*0.90) {tail: transitions + age update};

\foreach \lab [count=\k from 0] in
   {$\iota\!\in\!\mathbb{R}^{N}$, $p\!\in\!\mathbb{R}^{N}$,
    $\bm{\lambda}\!\in\!\mathbb{R}^{N}$, $q\!\in\!\mathbb{R}^{N}$} {
    \draw[badflow]
         ($(kern\k.south) + (0,0)$) -- ($(kern\k.south) + (0,-0.28)$);
    \node[datarow, text=accentred!70!black]
         at ($(kern\k.south) + (1.55, -0.19)$)
         {\texttt{HBM write $\to$ \lab}};
}
\draw[badflow]
     ($(kern3.south)+(0,-0.45)$) -- ($(endkern.north)+(0,0)$);

\node[annot, align=left, anchor=north west]
     at (-1.6, 0)
     {\textbf{Memory traffic: $\sim 64N$ bytes/step}\\
      \;\;(5--7 intermediate tensors in HBM)};

\begin{scope}[shift={(3.9, 1.9)}]
    \node[single arrow, draw=black!55, line width=0.6pt,
          fill=black!8, minimum width=2.0cm, minimum height=0.9cm,
          font=\footnotesize\bfseries, text=black!70]
         {fuse kernels};
\end{scope}

\begin{scope}[shift={(10.0, 0)}]
    \node[sectitle, text=smgreen!55!black]
         at (0.0, 5.1) {(b) Fused kernel: IO-aware placement};

    \def\boxw{5.8cm}

    \node[regs, minimum width=\boxw, minimum height=1.05cm, align=center]
         (lreg) at (0, 4.1)
         {\textbf{Registers} {\scriptsize(per thread)}\\[1pt]
          {\scriptsize\itshape $p_i,\ \lambda_i,\ \mathrm{erfcx},\ \mathrm{rand},\ q_i$}};

    \node[level, draw=black!30, dashed, fill=black!4,
          minimum width=\boxw, minimum height=0.55cm, align=center,
          text=black!45, font=\scriptsize]
         (lsm) at (0, 3.20)
         {Shared memory \textit{(not used by this kernel)}};

    \node[l2, minimum width=\boxw, minimum height=0.9cm, align=center]
         (lL2) at (0, 2.25)
         {\textbf{L2 cache} {\scriptsize($\sim$40--80~MB)}\ \
          {\scriptsize\itshape recently-touched CSR \& state blocks}};

    \node[hbm, minimum width=\boxw, minimum height=1.05cm, align=center]
         (lhbm) at (0, 1.05)
         {\textbf{Global memory / HBM} {\scriptsize($\sim$40--80~GB)}\\[1pt]
          {\scriptsize\ttfamily row\_ptr, col\_ind, weights, state, age, infectivity}};

    \draw[goodflow, line width=1.6pt, dashed, >={Stealth[scale=2]}]
         ([xshift=-0.35cm]lhbm.north west) --
         ([xshift=-0.35cm]lreg.south west);
    \node[annot, text=smgreen!45!black, rotate=90,
          anchor=east, font=\scriptsize]
         at ([xshift=-0.80cm]lL2.west) {reads ($\sim 12N$ B)};

    \draw[goodflow, line width=1.6pt, dashed, >={Stealth[scale=2]}]
         ([xshift=0.35cm]lreg.south east) --
         ([xshift=0.35cm]lhbm.north east);
    \node[annot, text=smgreen!45!black, rotate=-90,
          anchor=east, font=\scriptsize]
         at ([xshift=0.80cm]lL2.east) {single write-back ($\sim 8N$ B)};

    \node[annot, align=center, anchor=north, font=\footnotesize]
         at (0, 0.40)
         {\textbf{Total: $\sim 20N$ bytes/step}\ \
          (intermediates never leave registers)};
\end{scope}

\end{tikzpicture}
\caption{GPU memory-hierarchy flow, before and after fusion. \textbf{(a)} The legacy PyTorch/Triton pipeline launches five kernels per step; each kernel reads from HBM and writes its $\bigO(N)$ intermediate (infectivity, pressure, rates, event probabilities, event mask) back to HBM before the next kernel starts. Total traffic is $\sim$64$N$ bytes per step. \textbf{(b)} The fused kernel keeps all of these intermediates in per-thread registers; only the \emph{inputs} (\texttt{state}, \texttt{age}, \texttt{infectivity} of the current step) travel HBM$\to$L2$\to$register, and only the \emph{final} per-node outputs travel back to HBM. Shared memory is deliberately unused on this kernel: the per-node computation has no cross-thread data dependency inside a block (each thread owns one target node), so staging through shared memory would add latency without reducing traffic. Net traffic drops to $\sim$20$N$ bytes per step, matching the memory-bandwidth-limited throughput reported in Figure~\ref{fig:roofline}.}
\label{fig:memory_flow}
\end{figure}

Three quantitative observations follow:

\textbf{Per-step byte counting.} With all intermediates in registers the dominant HBM traffic per step is $4N$ (\texttt{state} read) $+\,4N$ (\texttt{age} read) $+\,4N$ amortised for the \texttt{infectivity} read (the load is a scatter through \texttt{col\_ind}, so its \emph{HBM} traffic is bounded by $4N$ only when the neighbour working set sits in L2; otherwise L2 misses inflate it up to $4E$ in the worst case) $+\,4N$ (\texttt{next\_state} write) $+\,4N$ (\texttt{next\_age} write), giving $\sim$$20N$ bytes of node-level traffic. CSR data adds $8E$ bytes of edge-level traffic, most of which is served from L2 until the L2 cache cliff is reached. The ratio $20N / (20N + 8E) = 20 / (20 + 8d)$ places node-state traffic at about $24\%$ of the total for $d = 8$; beyond that threshold the L2-serviceable CSR stream dominates.

\textbf{L2 acts as the second-chance cache for neighbour lookups.} The gather \texttt{infectivity[col\_ind[e]]} is a scattered load whose working set is bounded by $N$ fp32 $= 4N$ bytes. At $N = 10^6$ this fits in L2 on any current datacenter GPU; at $N = 10^7$ it begins to spill, explaining the $4.5\times$ throughput drop of the fused engine (and $20\times$ drop of unfused variants) between the two scales.

\textbf{No shared memory pressure.} Because each target node is owned by exactly one thread and because the pressure accumulator is a scalar float, no shared memory is ever allocated by the fused kernel. This leaves the SM's shared memory available to the block scheduler for occupancy improvements; on an A100, increasing block size from 128 to 256 changes occupancy but not working-set size because no shared-memory spill ever occurs. This is part of why the fused kernel reaches 41\% of the reduced memory-bound ceiling (Table~\ref{tab:roofline}) rather than stalling on occupancy: the SM is bandwidth-limited, not pressure-limited.

The block-level skip on \texttt{erfcx} (Section~\ref{sec:fused_kernel}) interacts with this traffic accounting as follows. The reduction that computes \texttt{any\_e} and \texttt{any\_i} is a block-level reduction (cheap in absolute terms compared to the $\sim$$55$-FLOP \texttt{erfcx} evaluation it guards), and when a block's active mask is all zero the block skips the hazard \emph{computation} but not its \emph{data traffic}: the state and age of those nodes were still read from HBM. The $\sim$$4\times$ ``fusion'' gain plotted in Figure~\ref{fig:roofline} therefore decomposes into a $\sim$$3\times$ data-traffic reduction (this appendix) and a residual $\sim$$1.3\times$ from the compute-sparsity of block-level skips on inert blocks --- the two effects are complementary rather than double-counted.

\bibliography{references}

\end{document}